\documentclass[12pt]{article}
\usepackage[margin=0.9in]{geometry}
\usepackage{booktabs} 
\usepackage[ruled]{algorithm2e}
\usepackage{amsmath}
\usepackage{amsthm}
\usepackage{caption}
\usepackage{subcaption}
\usepackage{amssymb}
\usepackage{multirow}
\usepackage{tabularray}

\usepackage{thmtools}
\usepackage{thm-restate}

\usepackage{comment}
\usepackage{xspace}
\usepackage[usenames,dvipsnames]{xcolor}
\usepackage[colorlinks=true,citecolor=ForestGreen,linkcolor=ForestGreen,urlcolor=NavyBlue]{hyperref}
\usepackage[capitalize]{cleveref}

\usepackage{tikz}
\usetikzlibrary{arrows,positioning}

\newtheorem{theorem}{Theorem}[section]
\newtheorem{lemma}[theorem]{Lemma}
\newtheorem{proposition}[theorem]{Proposition}
\newtheorem{definition}[theorem]{Definition}
\newtheorem{corollary}[theorem]{Corollary}
\newtheorem*{remark}{Remark}

\newtheorem{example}{Example}[section]

\DeclareMathOperator*{\argmin}{arg\,min}

\SetAlFnt{\small}
\SetAlCapFnt{\small}
\SetAlCapNameFnt{\small}
\SetAlCapHSkip{0pt}
\IncMargin{-\parindent}
\usepackage{natbib}

\usepackage{footnote}
\makesavenoteenv{algorithm}

\newcommand{\BR}{\text{BR}}
\newcommand{\ut}{\underline{t}}
\newcommand{\bt}{\mathbf{t}}
\newcommand{\x}{\mathbf{x}}
\newcommand{\p}{\mathbf{p}}
\newcommand{\w}{\mathbf{w}}
\newcommand{\z}{\mathbf{z}}

\usepackage{comment}
\newcount\Comments
\Comments = 1
\newcommand{\kibitz}[2]{\ifnum\Comments=1{\color{#1}{#2}}\fi}

\definecolor{english}{rgb}{0.0, 0.5, 0.0}

\setcitestyle{authoryear}

\renewenvironment{abstract}
 {\small
  \begin{center}
  \bfseries \abstractname\vspace{-.5em}\vspace{0pt}
  \end{center}
  \list{}{
    \listparindent 1.5em%
    \setlength{\leftmargin}{8mm}%
    \itemindent    \listparindent
    \setlength{\rightmargin}{\leftmargin}%
  }%
  \item\relax}
 {\endlist}

\author{
  Yannai A. Gonczarowski\\
  Harvard University\\
  \texttt{yannai@gonch.name}
  \and 
  Gary Qiurui Ma\\
  Harvard Univeristy\\
  \texttt{qiurui\char`_ma@g.harvard.edu}
  \and
  David C. Parkes \\
  Harvard University \\
  \texttt{parkes@eecs.harvard.edu}
}

\title{Pricing with Tips in Three-Sided Delivery Platforms\thanks{Gonczarowski gratefully acknowledges research support by the National Science Foundation (NSF-BSF grant No.\ 2343922), Harvard FAS Inequality in America Initiative, and Harvard FAS Dean’s Competitive Fund for Promising Scholarship. The authors thank Assaf Romm for helpful comments and insightful discussions.}}
\date{}

\begin{document}
\maketitle

\begin{abstract}
We model a delivery platform facilitating transactions among three sides: buyers, stores, and couriers. 
In addition to buyers paying store-specific purchase prices and couriers receiving store--buyer-specific delivery compensation from the platform, each buyer has the option to directly tip for delivery from a specific store. An equilibrium consists of prices, compensations, tips, and transactions that clear the market, such that buyers receive deliveries from preferred stores considering the prices and tips they pay, and couriers deliver preferred orders considering the compensations and tips they receive.

We illustrate the role of tips in pricing: Without tips, an equilibrium is only guaranteed to exist when there are at least as many couriers as buyers or stores. In contrast, with tips an equilibrium always exists. From an efficiency perspective, the optimal with-tip equilibrium welfare is always weakly larger than the optimal without-tip equilibrium welfare.
However, we show that even with tips, efficient equilibria may not exist, and calculating the optimal equilibrium welfare is NP-hard. To address these challenges, we identify natural conditions on market structure that ensure the existence of efficient with-tip equilibria and allow these efficient equilibria to be computed in polynomial time.
\end{abstract}

\section{Introduction}

Online delivery platforms such as UberEats, DoorDash, Grubhub, and Instacart are becoming an essential part of modern life. These platforms facilitate \emph{three-sided} transactions: In each transaction, a \emph{buyer} receives food from a \emph{store} via delivery by a \emph{courier}. Buyers pay for these transactions, while stores and couriers are compensated for them.

Each of the aforementioned delivery platforms allows buyers to offer \emph{tips} in addition to any compensation provided by the platform. Unlike ride-sharing platforms, where tips are primarily post-service gratuities, the use of tips on delivery platforms have several distinctive traits:
\begin{enumerate}
    \item Across all four delivery platforms, tips are determined before a delivery takes place, and couriers can observe both the tip amount and the platform's payment before deciding whether to accept an order.\footnote{On UberEats, DoorDash and Grubhub, couriers see the total fare of an order, which includes the tips before deciding whether to accept an order. On Instacart, tips are listed alongside platform compensation before couriers decide whether to accept an order. See \url{https://help.uber.com/en/ubereats/restaurants/article/how-to-add-a-tip?nodeId=5867f9dd-7ca5-4484-af59-6c222d9a8355},\url{https://courier-support.grubhub.com/hc/en-us/articles/20977077466516-Seattle-Tip-Policy} and \url{https://www.instacart.com/company/shoppers/shopper-earnings}. Links visited on 04/01/2025.} 
    \item Tips factor into a courier's decisions to accept or reject an order.\footnote{Some DoorDash couriers have adopted a ``no tip, no trip'' strategy, where they ignore orders without tips for more profitably ones (\url{https://www.newsweek.com/no-tip-no-trip-undelivered-doordash-orders-spark-debate-viral-video-1697322}).
    As another example, after British Columbia, Canada guaranteed minimum wage protection for delivery couriers in 2024 (\url{https://news.gov.bc.ca/releases/2024LBR0011-000900}), UberEats stopped showing couriers in the province the tip amount at the time of accepting the order. (\url{https://www.reddit.com/r/UberEATS/comments/1f5385n/uber_eats_will_no_longer_show_tips_on_offers_in/}). Links visited on 04/18/2025.}
    \item While delivery platforms implement some form of dynamic pricing to promote the supply of couriers,\footnote{UberEats employs surge pricing to pay couriers during busy hours. See \url{https://help.uber.com/ubereats/restaurants/article/higher-delivery-fee?nodeId=4938ca61-e0d6-493a-9384-eb005c2eb6e5,newsroom.uber.com/delivery-at-uber-speed-even-when-its-busy}. DoorDash couriers receive ``Peak Pay" during busy hours. See \url{https://help.doordash.com/dashers/s/article/Peak-Pay?language=en_US}. Grubhub offers fixed incentives like Missions and Special Offers to boost courier earnings through goal-based rewards, but unlike dynamic pricing, these are preset and not responsive to real-time market changes. See \url{https://courier.grubhub.com/courier-pay/}. Instacart adds ``Pay Boost'' to an order if an order remains unaccepted for a while. See \url{https://www.instacart.com/company/shoppers/shopper-earnings}. Links visited on 04/02/2025.} not all orders are delivered in a timely manner and customers need to tip to receive speedy delivery.\footnote{For instance, DoorDash warns customers that not tipping may lead to long wait times (\url{https://www.nytimes.com/2023/11/02/business/doordash-tip-warning.html}), and in extreme cases, untipped orders may go undelivered for hours (\url{https://www.dailydot.com/irl/doordasher-shows-piles-of-non-tipping-orders/}). Links visited on 04/18/2025.}
    \item Among all orders, couriers choose the most attractive ones based on platform compensation, tip size, and delivery cost. This is a common practice known as ``cherry picking''.\footnote{For a concrete example, a Philadelphia delivery courier declines 75\% of orders and hang out in wealthy areas only and claimed to receive as much as 45\$ per hour. (\url{https://nypost.com/2023/05/05/i-deliver-food-i-only-accept-orders-in-wealthy-areas-to-ensure-good-tips/}). Links visited on 04/10/2025.} On UberEats and Instacart, couriers see a range of nearby delivery orders and not just a single order recommended by the platform\footnote{UberEats matches couriers to orders in two ways: 1) Exclusive offers where a courier is assigned to one delivery order exclusively, and 2) TripRadar offers where a courier can choose between multiple orders simultaneously when they are driving low speed or not moving. Instacart couriers can see multiple orders simultaneously in their ``batch list." DoorDash and Grubhub show each courier a single order at a time. See \url{https://www.uber.com/en-AU/blog/introducing-trip-radar/} and \url{https://www.instacart.com/company/shoppers/access-batches}. Links visited on 04/01/2025.}, which further facilitates cherry picking.
    \item Tips make up a substantial portion of courier earnings on food delivery platforms (see \cref{tab:larger_tip_table}, based on data from \citealp{jacobs2024gig}), making them a key factor in couriers' decisions to accept or reject orders.
    \begin{table}[t]
\centering
\begin{tabular}{|ccc|}
\hline
 & Without Tips & With Tips \\
\hline
Ride-Share, CA & \$7.12 & \$9.09 \\
Ride-Share, Outside CA & \$10.64 & \$12.94 \\
\hline
Food-Delivery, CA & \$5.93 & \$13.62 \\
Food-Delivery, Outside CA & \$0.48 & \$9.87 \\
\hline
\end{tabular}
\caption{Average courier net hourly earnings on ride-share and delivery platforms in 2022, taken from \cite{jacobs2024gig}. CA (California) refers to Los Angeles and San Francisco Bay, while Outside CA refers to  Boston, Chicago and Seattle. In California, gig workers are guaranteed to receive 120\% of local minimum wage.
\label{tab:larger_tip_table}}
\end{table}
\end{enumerate}
Overall, tips on delivery platforms are quite different from an optional expressions of appreciation. Alongside platform-set prices, tips are embedded in the incentive structure of buyers and couriers, and play a crucial role in the market-clearing process.

In this paper, we incorporate the above five traits to theoretically study pricing with tips in three-sided delivery platforms. We seek to address the following two questions:
\begin{enumerate}
    \item What are the benefits of allowing buyers to specify tips before delivery, compared with not allowing them to do so?
    \item How should a platform that is interested in maximizing welfare set buyer prices and courier compensations, considering that tips factor in buyers' and couriers' decision making?  
\end{enumerate}

To answer the first question, we formalize an appropriate equilibrium concept with tips, and show its superiority to without-tip equilibrium in terms of existence and welfare. To address the second question, we first show broad impossibility results in general markets, and then contrast these with positive results for markets that exhibit natural structures.

\subsection{Our Results}

In \cref{sec:model}, we introduce a platform economy with \emph{unit-demand} buyers, \emph{unit-capacity} couriers, and \emph{unit-supply} stores.\footnote{Our results extend to buyers with additive valuations and couriers with additive costs.} Each buyer has a \emph{valuation}  for each store, which is the sum of the buyer's value for the item offered by the store and the buyer's value for having the item delivered. Each buyer views couriers interchangeably. Couriers incur store--buyer-specific \emph{delivery costs}. Each store sets a fixed product price, and has a cost of production.

An equilibrium consists of purchase prices, delivery compensations, tips, and an allocation. For each store, there is a store-specific \emph{purchase price} for the item offered by this store, which is the same price for all buyers, i.e., there is no price discrimination for the same product. Each store's purchase price contains its fixed product price. For each buyer--store pair, there is a buyer--store-specific \emph{delivery compensation}, which is the same for all couriers, i.e., there is no wage discrimination for the same delivery. For each store, each buyer may also offer a store-specific \emph{tip}, which is paid to a courier who delivers her order from this store. Each courier may choose a buyer--store pair to deliver for, which leads to the buyer in this buyer--store pair paying the purchase price plus the tip (if any) offered for the store, and the courier receiving the delivery compensation plus this tip. The platform receives purchase prices from buyers, pays delivery compensations to couriers, and also pays stores for the product.
An \emph{allocation} (three-sided matching) specifies the executed set of transactions, in which buyers purchase from stores and receive deliveries from couriers.

A buyer's utility when buying from a store is her valuation minus the purchase price and the tip, if any, offered for delivery from the store. Each courier's utility equals the delivery compensation received for a completed order plus the buyer's tip, if any, minus the courier's delivery cost. The \emph{welfare} of an allocation is the sum of allocated buyers' valuations minus allocated couriers' delivery costs minus allocated stores' costs. The \emph{optimal welfare} is the maximum welfare of all allocations. One potential source of inefficiency arises if a store's product price is higher than the store's cost, as in this case an optimal-welfare allocation may not be achieved. To turn off this source of inefficiency, we set each store's product price to be equal to its cost.

We develop an appropriate equilibrium notion for a platform market. First, a natural requirement is that agents utility-maximize: Each buyer buys from a store for which she has the highest utility considering her valuations, the purchase prices, and the minimum tip that she has to offer to ensure some courier delivers to her from that store. Each courier delivers an order for which she has the highest utility considering her cost, the delivery compensation, and the tip offered (if any). A second requirement, as is standard in the definition of Walrasian equilibrium in two sided-markets, is that for each store that does not sell, its purchase price and all the tips associated with delivery from the store are zero; and for each buyer--store pair that is not delivered, its delivery compensation is zero.

Rather than budget balance, as has been considered in the multilateral matching-with-contracts literature \citep{hatfield2011multilateral,ostrovsky2008stability}, we model the platform as being able to subsidize some of the delivery cost, allowing a delivery compensation to be larger than the purchase price minus the product price. For example, Instacart and DoorDash both use ``Pay Boost" to guarantee delivery when an order is not accepted for a long time, where the boost is paid for by the delivery platforms and not charged to the buyer. The ``Pay Boost" on Instacart can be as high as \$12, while the base-pay for a courier to complete an order is around \$4.\footnote{See reports from \url{https://teachmedelivery.com/courier/instacart-peak-boost/}, \url{https://www.fastcompany.com/90305854/exclusive-doordash-reveals-how-much-it-relies-on-customer-tips-to-pay-its-workers} and \url{https://help.doordash.com/dashers/s/article/How-is-Dasher-pay-calculated?}. Links visited on 04/02/2025.} Indeed, each of UberEats, DoorDash and Instacart only recently turned a profit and have, in effect, been subsidizing the economic activity on their platforms, while Grubhub are still reporting losses.\footnote{Doordash generated positive net income for the first time as a public company in Q3, 2024 (\url{https://ir.doordash.com/news/news-details/2024/DoorDash-Releases-Third-Quarter-2024-Financial-Results/default.aspx}); Instacart reached positive net income in Q4, 2023 (\url{https://investors.instacart.com/static-files/45c59490-b81c-4401-ba22-be639847baa7}); UberEats is a subsidiary of Uber, which itself turned profitable in Q2, 2023 (\url{https://investor.uber.com/news-events/news/press-release-details/2023/Uber-Announces-Results-for-Second-Quarter-2023/default.aspx}). Grubhub reported net losses for both 2023 and 2024 (\url{https://electroiq.com/stats/grubhub-statistics}). Links visited on 02/02/2025.}

In \cref{sec:motivate_use_tip}, we explore the benefits of allowing buyers to specify tips before delivery, as opposed to not allowing tips or only allowing tips after delivery. We refer to an equilibrium in the first setting as a "with-tip equilibrium" and to an equilibrium in either of the latter two settings as a "without-tip equilibrium." (Note that from the perspective of the incentives of risk-neutral couriers, not allowing tips is equivalent to allowing them only after delivery.) 

Our first result is a dichotomy: Without-tip equilibria are only guaranteed to exist when there are at least as many couriers as buyers or stores, while with-tip equilibria always exist. This is because in the without-tip case, courier shortages leave some buyers without delivery despite willingness to pay; with tips, the platform can set high compensations for some buyer--store pairs, making others require prohibitively high tips for delivery. Secondly, every allocation supported in a without-tip equilibrium is also supported in a with-tip equilibrium, implying that the optimal with-tip equilibrium welfare is always weakly higher than optimal without-tip equilibrium welfare.
We illustrate this with a market instance where all without-tip equilibria are highly inefficient, yet there exists a with-tip equilibrium that achieves the optimal welfare. 

Perhaps more surprisingly, the welfare gain of with-tip equilibria arises not from the actual (on-path) payment of tips, but rather from the off-path need for buyers to offer them to secure delivery. In fact, every with-tip equilibrium allocation can be supported by a with-tip equilibrium in which all tips are zero---even on realized deliveries---highlighting that tips serve primarily to prevent off-path deviations rather than facilitate on-path transfers. The platform can subsidize delivery for buyer--store pairs in equilibrium, allowing buyers to pay zero tips for the equilibrium allocation, while requiring prohibitively high tips to attain delivery outside of the equilibrium allocation.
 
While tips guarantee the existence of equilibria and improve welfare, we further show in \cref{sec:general_markets} that even with-tip equilibria can, in general, lack desirable properties. First, there exist markets where every with-tip equilibrium is inefficient. 
Second, it is NP-hard to calculate the optimal equilibrium welfare. In fact, we show that it is NP-hard to calculate the optimal welfare, regardless of equilibrium requirements. These limitations raise the question as to whether there are natural structural assumptions under which these economic and computational impossibilities can be circumvented. Our main result is a positive answer to this question.

In \cref{sec:market_structures}, we identify two natural structural assumptions that overcome these impossibilities: either delivery costs are decomposable into courier--store and store--buyer components, which reflects realistic distance-based costs that account for couriers' trips to stores and then trips to buyers; or delivery costs are  decomposable into store--buyer and buyer--courier components, which reflects distance-based costs that account for couriers' trips from stores to buyers, and couriers' return-home trips.
The courier--store component can also account for a courier's familiarity with a store, where grocery pickups can be time-consuming for unfamiliar locations.
In markets that satisfy one of these two delivery cost structures, 
there always exists an efficient with-tip equilibrium that can be found in polynomial time. In contrast, there exist markets with each of these structures in which without-tip equilibria do not exists, and if exist have arbitrarily low optimal without-tip equilibrium welfare. This underscores the importance of tips, even in these settings with naturally structured delivery costs.

In \cref{sec:market_structures_buyers}, we further consider structures of buyer valuations in place of structures of delivery costs. When buyers are {\em single-minded} (i.e., each buyer has positive valuation for just one store),
we show there exists an efficient with-tip equilibrium that can be found in polynomial time regardless of delivery cost structures. However, just as in the case with delivery cost structures, without-tip equilibria are not guaranteed to exist, and if exist can have low welfare. In \cref{sec:dis}, we discuss some limitations of our model and suggest directions for future work.

\subsection{Related Work}

Our work joins several strands of research, including the analysis of competitive equilibrium on trading networks \citep{ostrovsky2008stability,hatfield2013stability} and multilateral contracting \citep{hatfield2011multilateral}. These works extend results on matching and equilibria in two-sided markets \cite{shapley1971assignment,kelso1982job,gul1999walrasian} to richer environments, proving competitive or Walrasian equilibria are efficient and exist with suitable conditions on valuations. However, all  these works either require bilateral contracts, or permit agents to engage in fractional participation within a multilateral contract. In contrast, the matching on delivery platforms is discrete and between three sides.

For discrete, three-sided matching, competitive equilibrium may not exist \citep{alkan1988nonexistence}, and we further show in \cref{app:ce_existence} that determining the existence of a competitive equilibrium is NP-hard in our setting.\footnote{See also \cref{app:ce_existence} for a discussion as to whether, and to what extent, each of the first- and second-welfare theorems still hold for competitive equilibria in 3-sided markets.} We therefore move away from a three-sided competitive equilibrium model,  instead allowing the platform to subsidize delivery and abstracting away stores' incentives while still keeping the stores' capacity constraints.
With this relaxed notion of equilibrium---which, together with the availability of tips, restores the existence of equilibrium but loses the first welfare theorem---we ask how much welfare can be attained.

Other works in the envy-free pricing literature \citep{guruswami2005profit,chen2010envy} have studied how a seller in a two-sided market can set prices to maximize revenue, including under various supply constraints \citep{cheung2008approximation,im2012envy}. 
As with these works, we require the equilibrium allocation to be envy free for buyers given purchase prices. However, unlike supply constraints in a two-sided market, and hence unlike all of the above works, our three-sided problem also requires envy freeness for couriers, with buyers’ choices needing to align with couriers’ delivery decisions. Additionally, while these earlier works assume a single seller that sets prices and allow for nonzero prices for unsold items, our model differs in requiring unsold stores to have zero purchase prices and orders not delivered to have zero compensations.

We establish the computational hardness of finding an optimal-welfare allocation in a three-sided delivery platform by reducing from the maximum-weight 3-dimensional matching problem. While some works find a matching that achieves half of the optimal welfare in polynomial time in this problem~\citep{arkin1998local,halldorsson1995approximating,chan2009linear}, they rely on local search algorithms that swap subsets of individual matches without considering envy-free requirements, making them unsuitable for our setting. The NP-hardness of finding an optimal-welfare allocation even without equilibrium requirements precludes the possibility of devising an algorithm for finding an efficient, i.e., welfare-optimal,
equilibrium in our setting. Therefore, we focus on identifying suitable structural assumptions on courier costs or on buyer valuations that limit couriers' and buyers' envy, and allow for the existence of equilibria that achieve the optimal welfare.

Previous theoretical works analyze tips paid after service completion and seek to rationalize tipping through repeated interactions \citep{ben1976tip}, social norms \citep{azar2005social,azar2007pay,debo2018tipping,snitkovsky2021modeling}, and altruism \citep{lynn2015service}. In contrast, works analyzing tips before service are scarce. \citet{lei2023two} suggest that delivery platforms may use upfront tips to reduce courier wages under competitive pressure. Our work provides a new perspective, showing that equilibria with tips before service weakly Pareto dominate those without tips or after service. Since buyers in our model account for both the utility impact of tipping and which store they are served by, we offer a novel rationale for why tipping—traditionally a post-service gesture—is specified upfront on delivery platforms. There is also a body of empirical literature on tipping, including studies on how default tip suggestions affect buyer satisfaction \citep{alexander2021effects, haggag2014default} and courier welfare \citep{shy2015tips,castillo2022designing}.

This study also contributes to the broader literature, on how online platforms set prices and facilitate matching, in three-sided markets (\citep{wang2025recommending,liu2023operating,bahrami2023three}), as well as two-sided ones \citep{wang2023platform,eden2023platform,d2024disrupting,birge2021optimal,ma2020spatio,feldman2023managing,chen2022food,chen2024courier}. While we share the common feature of a central platform using pricing and matching tools for market design, our work focuses on a different market dimension--- the role of tips.

\section{Model}\label{sec:model}

A delivery platform facilitates transactions between three sides of a market. There is a set of $m$ unit-demand buyers, $B=\{b_1,...,b_m\}$, $n$ unit-supply stores $S=\{s_1,...,s_n\}$, and $l$ unit-capacity couriers, $D=\{d_1,...,d_l\}$.\footnote{All our results generalize beyond unit-demand to buyers with additive valuations, and beyond unit-capacity to couriers with additive costs. For buyers with additive valuations, duplicate each buyer $n$ times, where the $k$-th duplicate of buyer $i$ (denoted as $v_{b_{i,k}}$) has valuation $v_{b_{i,k}}(s_k)=v_{b_i}(s_k)$ and $v_{b_{i,k}}(s'_k)=0$ for $s'_k\neq s_k$. For couriers with additive costs, duplicate each courier $mn$ times, each duplicate having unit capacity to serve one buyer--store pair, and infinite cost to serve any other buyer--store pairs.} An {\em order} (from a store, not a mathematical order) $o=(b,s)\in B\times S$ is a buyer--store pair, and the set of all orders is denoted as $O=B\times S$. 
Buyer $b$ has {\em valuation} $v_{b}(s)\geq 0$ for store $s\in S$, which includes both the buyer's valuation for the item of store $s$, and the buyer's valuation for the delivery. Courier $d$ incurs finite {\em delivery cost} $c_{d}(b,s)\geq 0$ for delivering from store $s$ to buyer $b$.
Each store $s$ charges the platform a predetermined product price when a buyer buys from it through the platform. We assume this price equals its cost.\footnote{We model product prices as being equal to stores' costs to 
remove the following source of inefficiency: If a buyer's valuation is smaller than the product price but larger than cost, a trade cannot occur despite the positive welfare generated. In this case, the optimal welfare cannot be obtained even without requiring trades occur in an equilibrium.} 
Without loss of generality, we further set each store's cost (and hence the product price) to zero, and show how to generalize all of our results to the case of nonzero store costs, with product prices still equaling store costs, in \cref{app:zero_store_cost}. 

We now  introduce some of the components of an equilibrium, which include purchase prices, delivery compensation, and later we will also introduce tips.
For each store $s\in S$, there is a {\em purchase price}, $p_{s}\geq 0$, which includes both the predetermined product price (assumed equal to a store's cost, which is set to zero) and the platform delivery fee. We require the purchase price for a store being the same for all buyers, i.e., no price discrimination. We abstract away from discriminating purchase prices based on a buyer's distance from the store, as such distance-based pricing is often coarse; for example, Instacart applies a long-distance service fee only if the delivery route exceeds 30 minutes or includes a toll.\footnote{See \url{https://www.instacart.com/help/section/360007902791/360039164252}. Visited on 04/09/2025.} That said, platforms do charge stores varying fees for fulfilling orders to buyers at different distances,\footnote{DoorDash charges restaurants and partners who use the DoorDash DriveAPI distance-based fees for delivery, but these fees are not charged to buyers. See \url{https://help.doordash.com/consumers/s/article/What-is-a-Delivery-Radius-or-a-Delivery-Area-on-DoorDash?language=en_US}, \url{https://developer.doordash.com/en-US/docs/drive/overview/about_drive}. On Grubhub and UberEats, delivery fees for buyers are chosen by individual stores. Stores choose from Basic, Plus, or Premium plans, each of which comes with different delivery fee levels for buyers. (\url{https://get.grubhub.com/grubhub-pricing-and-fees/}, \url{https://merchants.ubereats.com/us/en/pricing/}). Links visited on 04/09/2025.} but these fees are typically not charged to buyers.

There is also a {\em delivery compensation}, $w_{b s}\geq 0$, for each order $(b,s)$, the same for every courier, i.e., there is no wage discrimination for the same delivery. Let $\mathbf{p}\in R^m$ and $\mathbf{w}\in R^{mn}$ denote the vector of purchase price and delivery compensation. The platform charges purchase prices from buyers and pays delivery compensations to couriers. The platform can
subsidize some of the cost of delivery, allowing delivery compensations $w_{bs}$ to a courier to be larger than purchase price $p_s$ collected
from a buyer. This reflects the ``Pay Boost" adopted by some delivery platforms to guarantee delivery when an order is not accepted for a long time. For example, pay boost on Instacart can be as high as \$12 when the base pay per order for a courier is only around \$4.\footnote{See Footnote~8.} The source of subsidy $w_{bs}-p_s$ can come from membership fees or other fixed fees that are not charged per order.

\paragraph{Without-tip.} The utility associated with a trade $(b,s,d)$ is as follows: buyer $b$ has utility $u_b(s)=v_b(s)-p_s$; courier $d$ has utility $u_d(b, s)=w_{bs}-c_d(b,s)$; the platform has utility that equals the purchase price $p_s$ minus the courier compensation $p_{bs}$; a store has utility zero. The welfare created by a trade is the sum of the four utilities, which adds up to $v_b(s)-c_d(b, s)$.

An allocation, or a three-sided matching $\mathbf{x}$, is defined as  
\begin{equation*}
    x_{bsd} =
    \begin{cases}
      1 & \text{if $b$ buys from $s$, and $d$ serves the delivery,}\\
      0 & \text{otherwise.}
    \end{cases}       
\end{equation*}

An allocation  $\mathbf{x}$ is feasible if it satisfies unit-demand $\forall b, \sum_{sd}x_{bsd}\leq 1$, unit-supply $\forall s, \sum_{bd}x_{bsd}\leq 1$ and unit-capacity $\forall d, \sum_{bs}x_{bsd}\leq 1$. For an allocation $\mathbf{x}$, let $s_\mathbf{x}(b)$ be the store buyer $b$ buys from, $d_\mathbf{x}(b)$ be the courier that delivers for buyer $b$, and $o_\mathbf{x}(d)=(b_\mathbf{x}(d),s_\mathbf{x}(d))$ the buyer--store pair that courier $d$ delivers.
If $b$ does not purchase, let $s_\mathbf{x}(b)=\emptyset$ be a null store, and $d_\mathbf{x}(b)=\emptyset$ be a null courier. If $d$ does not deliver, then let $o_\mathbf{x}(d)=\emptyset$ be a null buyer--store pair. When the context is clear, we use $u_b(\mathbf{x})=u_b(s_\mathbf{x}(b)),u_d(\mathbf{x})=u_d(o_\mathbf{x}(d))$ as short hand for the utility of a buyer and courier, respectively, in allocation $\mathbf{x}$. A buyer who does not buy and a courier who does not deliver have zero utility. All values, costs and utilities associated with null are zero. 

The welfare of an allocation $\mathbf{x}$ is the sum of the welfare of all realized trades, and equals the sum of allocated buyers’ valuations minus the sum of allocated couriers' delivery costs: 
$$W(\mathbf{x})=\sum_{bsd}x_{bsd}(v_b(s)-c_d(b, s)).$$ 
The optimal welfare is the maximum welfare over all feasible allocations, $\mathit{OPT}=\max_{\mathbf{x}}W(\mathbf{x})$.

Given delivery compensation $\mathbf{w}$, denote the set of orders that maximize courier $d$'s utility as
\begin{equation*}
    \BR_d(\mathbf{w})=
    \begin{cases}
      \{(b,s)\;|\; \forall (b',s')\in O, u_d(b,s)\geq u_d(b',s')\} & \text{ if } \exists (b,s) \text{ such that } u_d(b,s)>0\\
      \{\emptyset\}\cup \{(b,s)\;|\; u_d(b,s)=0\} & \text{otherwise.}
    \end{cases}       
\end{equation*}
Given purchase price $\mathbf{p}$, denote the set of stores that maximize buyer $b$'s utility as 
\begin{equation*}
    \BR_b(\mathbf{p})=
    \begin{cases}
      \{s\;|\; \forall s'\in S, u_b(s)\geq u_b(s')\} & \text{ if } \exists s \text{ such that } u_b(s)>0\\
      \{\emptyset\}\cup \{s\;|\; u_b(s)=0\} & \text{otherwise.}
    \end{cases}       
\end{equation*}

\begin{definition}[Without-tip equilibrium]\label{def:withouttipequil}
    A without-tip equilibrium is a triple of prices, compensations, and feasible allocation $(\mathbf{p},\mathbf{w},\mathbf{x})$ that satisfies \begin{itemize} 
    \item Every buyer buys from the favorite store $\forall b, \; s_\mathbf{x}(b)\in \BR_b(\mathbf{p})$.
    \item Every courier delivers the favorite order $\forall d, \; o_\mathbf{x}(d) \in \BR_d(\mathbf{w})$.
    \item Stores not bought from (i.e., $\sum_{bd}x_{bsd}=0$) have zero purchase price $p_s=0$.
    \item Order not delivered (i.e., $\sum_d x_{bsd}=0$) have zero delivery compensation $w_{bs}=0$.
\end{itemize}
\end{definition}
From the perspective of courier decision-making, the without-tip equilibrium also captures scenarios in which buyers specify tips after delivery, as such tips would have no or minimal influence on courier behavior.\footnote{One may consider the case where risk-neutral couriers hold a calibrated belief, that $\alpha$ percentage of purchase price is specified as tip after delivery. Couriers might favor stores with higher purchase price in the hope of a higher tip. This case can also be captured by a without-tip equilibrium, where purchase price for buyers are increased by $\alpha$ percent, and courier compensation for any store $s$ is also increase by $\alpha p_s$.}

The equilibrium definition assumes that couriers can observe all available orders, whereas real-world platforms typically use centralized matching systems that limit courier choices. We justify this assumption with two reasons. First, on platforms like UberEats and Instacart, couriers often see a range of nearby orders rather than just a single platform-recommended option.\footnote{See Footnote~6.} 
In most cases, the order that maximizes a courier's utility is one that begins close to the courier's current location. It is reasonable therefore, that we model couriers as choosing the choices among all choices that maximizes the utility. Second, modeling couriers as facing competitive prices across all orders guarantees that they are envy-free and always accept the platform-dispatched order, assuming the platform assigns each courier their most preferred available order.

It will also be useful to define a \emph{Walrasian equilibrium} for the two-sided market of buyers and stores $(B,S)$, which completely ignores couriers. We will use this concept to analyze couriers' and buyers' incentives in a three-sided allocation in \cref{sec:eq_existence_sec} and \cref{sec:market_structures}.
A \emph{buyer allocation} $\mathbf{z}$ is a two-sided matching in $(B,S)$ where 
\begin{equation*}
    z_{bs} =
    \begin{cases}
      1 & \text{if $b$ buys from store $s$,}\\
      0 & \text{otherwise.}
    \end{cases}       
\end{equation*}

A feasible buyer allocation $\mathbf{z}$ satisfies unit-demand buyer $\forall b, \sum_s z_{bs}\leq 1$ and unit-supply store $\forall s, \sum_b z_{bs}\leq 1$. Let $s_\mathbf{z}(b)$ be the store that buyer $b$ buys from, and $s_\mathbf{z}(b)=\emptyset$ if $b$ does not buy. A Walrasian equilibrium for $(B,S)$ is a pair of purchase price and buyer allocation $(\mathbf{p},\mathbf{z})$, such that all buyers buy from the store that maximizes buyers' utility $s_\mathbf{z}(x)\in \BR_b(\mathbf{p})$, with purchase prices set to zero for stores that do not sell. A Walrasian equilibrium $(\mathbf{p},\mathbf{z})$ always exists in this unit-demand-supply setting \citep{gul1999walrasian}.\footnote{Although our definition of a without-tip equilibrium resembles a Walrasian equilibrium in two-sided markets, we do not use the term Walrasian since our model abstracts away the stores' incentives, and allows the platform to subsidize delivery. See \cref{app:ce_existence} for a discussion of store incentives and \cref{sec:dis} for a discussion of the assumption that the platform can subsidize delivery.}

\paragraph{With-tip.} When buyers are allowed to tip, let $t_{bs}\geq 0$ be the tip associated with an order $o=(b,s)$. Buyer $b$ pays $t_{bs}$ to the courier who delivers the order $(b,s)$, but does not pay out tips for orders that are not delivered. Let $\mathbf{t}\in R^{mn}$ be the vector of tips, $\mathbf{t}_b\in R^{n}$ be the vector of tips associated with buyer $b$, and $\mathbf{t}_{-b}\in R^{(m-1)n}$ the tips associated with all buyers except $b$. Buyer $b$ buying from store $s$ with tip $t_{bs}$ has utility $u_b(s)=v_b(s)-p_s-t_{bs}$.
courier $d$ delivering an order $(b,s)$ has utility $u_d(b,s)=p_{bs}-c_d(b,s)+t_{bs}$. The utility of the platform and stores, as well as the definition of an allocation and welfare remain the same as in the without-tip setting. 

Given the set of delivery compensations $\mathbf{w}$ and tips $\mathbf{t}$, the set of orders that maximize courier $d$'s utility is still defined as
\begin{equation*}
    \BR_d(\mathbf{w},\mathbf{t})=
    \begin{cases}
      \{(b,s)\;|\; \forall (b',s')\in O, u_d(b,s)\geq u_d(b',s')\} & \text{ if } \exists (b,s) \text{ such that } u_d(b,s)>0\\
      \{\emptyset\}\cup \{(b,s)\;|\; u_d(b,s)=0\} & \text{otherwise.}
    \end{cases}       
\end{equation*}

Since in online platforms buyers typically decide the tip amount, defining the set of stores that maximize a buyer's utility is more complex. Let a buyer $b$ consider buying from a store $s$. Given compensations $\mathbf{w}$ and all other buyers' tips $\mathbf{t}_{-b}$, 
there is a minimum tip required to have some courier deliver from the store $s$ to $b$, denoted as $$\underline{t}_{bs}(\mathbf{w},\mathbf{t}_{-b}) = \min\{t_{bs} \;|\; \exists d \text{ such that } (b,s)\in \BR_d(\mathbf{w},\mathbf{t}_b,\mathbf{t}_{-b}).\}$$

This minimum tip $\underline{t}_{bs}(\mathbf{w},\mathbf{t}_{-b})$ has a closed form solution, which we present in \cref{app:min_tip}. For simplicity of notation, we omit the dependency on $\mathbf{w},\mathbf{t}_{-b}$ and write $\underline{t}_{bs}$ from now on. The buyer can calculate the minimum tip $\underline{t}_{bs'}$ for any store $s'\in S$. 
Given this, the set of stores that maximize buyer $b$'s utility is defined as
\begin{equation*}
    \BR_b(\mathbf{p},\mathbf{w},\mathbf{t}_{-b})=
    \begin{cases}
        \{s\;|\;\forall s'\in S, v_b(s)-p_s-\underline{t}_{bs}\geq v_b(s')-p_{s'}-\underline{t}_{bs'} \} & \text{ if } \exists s \text{ such that } v_b(s)-p_s-\underline{t}_{bs}> 0\\
        \{\emptyset\}\cup \{s\;|\; v_b(s)-p_s-\underline{t}_{bs}=0\} & \text{otherwise.}
    \end{cases}
\end{equation*}
The above formulation allows buyers to set tips and consider the minimum tip required for delivery when deciding which store maximizes their utility. This can reflect real-world behavior. For example, in the special case of a buyer who urgently needs a particular item, the buyer may offer a tip just high enough to ensure prompt delivery, but no higher than necessary. By the definition, a buyer $b$ views $v_b(s)-p_s-\underline{t}_{bs}$ as the  utility of buying from a store $s$, and similarly $v_b(s')-p_{s'}-\underline{t}_{bs'}$ as the  utility of buying from store $s'$. The model of utility used in this definition implies every buyer is optimistic in believing that she can get the item even if there are other buyers who contend for the same store $s$.

\begin{definition}[With-tip equilibrium]\label{def:with_tip_eq}
A with-tip equilibrium is a quadruple of prices, compensations, tips, and feasible allocation $(\mathbf{p},\mathbf{w},\mathbf{t},\mathbf{x})$ that satisfies
\begin{itemize}
    \item Every buyer buys from the favorite store $\forall b, \; s_\mathbf{x}(b)\in \BR_b(\mathbf{p},\mathbf{w},\mathbf{t}_{-b})$; and pays the minimum tip if buying from a store, $\forall b, s_\mathbf{x}(b)\neq \emptyset, t_{bs_\mathbf{x}(b)}=\underline{t}_{bs_\mathbf{x}(b)}$.
    \item Every courier delivers the favorite order $\forall d, \; o_\mathbf{x}(d) \in \BR_d(\mathbf{w},\mathbf{t})$.
    \item Stores not bought from (i.e., $\sum_{bd}x_{bsd}=0$) have zero purchase price $p_s=0$.
    \item Order not delivered (i.e., $\sum_d x_{bsd}=0$) have zero delivery compensation $w_{bs}=0$ and zero tips $t_{bs}$=0.
\end{itemize}
\end{definition}

This definition of a with-tip equilibrium allows buyers to reason in a counterfactual way. Given tip vector $\mathbf{t}_{-b}$, a buyer $b$ calculates the minimum tip required to attain delivery for each store. Although formulated so that buyers can tip delivery differently based on different stores, in equilibrium each buyer specifies at most one non-zero tip. We also stress here that the tips are not the only mechanism through which the market clears. Instead, the tips offered by buyers, if any, work alongside the purchase prices and delivery compensations.

\begin{remark} 
An alternative, simpler definition of a with-tip equilibrium, which does not capture the ability of  buyers to set the tips is: let $u_b(s)=v_b(s)-p_s-t_{bs}$, and define
\begin{equation*}
    \BR_b(\mathbf{p},\mathbf{w},\mathbf{t})=
    \begin{cases}
        \{s\;|\;\forall s'\in S, u_b(s)\geq u_b(s')\} & \text{ if } \exists s \text{ such that } u_b(s)> 0\\
        \{\emptyset\}\cup \{s\;|\; u_b(s)=0\} & \text{otherwise.}
    \end{cases}
\end{equation*}

The definition would otherwise remain the same as \cref{def:with_tip_eq}. In this alternative equilibrium definition, tips are a given amount that buyers must pay to buy from a store, rather than the minimum amount that a buyer could offer in incentivizing delivery. As such, this alternative is not economically sensible, removing buyer agency in setting tips. 
\end{remark}

We now illustrate Definition~\ref{def:withouttipequil}  and Definition~\ref{def:with_tip_eq} of with-tip and without-tip equilibria, respectively.
\begin{example}\label{example:without_tip_bad}
    Consider a simple market in Figure~\ref{fig:without_tip_bad}. Buyers $b_1$ and $b_2$ value the store $s_1$ at $3$ and $10$, respectively. Couriers $d_1$ and $d_2$ have delivery costs $c_{d_1}(b_1,s_1)=0,c_{d_1}(b_2,s_1)=11$ and $c_{d_2}(b_1,s_1)=1,c_{d_2}(b_2,s_1)=12$. The optimal welfare is $3$ with the allocation in which $d_1$ delivers from $s_1$ to $b_1$. 
    
    In the without-tip regime, the optimal without-tip equilibrium welfare is $-1$. The equilibrium is defined by $p_{s_1}\in[3,10], w_{b_2s_1}=11,w_{b_1s_1}=0$, and $d_1$ delivers to $b_2$. To see this, there is no equilibrium where $p_{s_1}<3$ because both $b_1$ and $b_2$ want to buy from $s_1$. If $p_{s_1}>10$, the market does not clear.

    In the with-tip equilibrium, the optimal with-tip equilibrium welfare is $3$. One such equilibrium is given by $p_{s_1}=1, w_{b_1s_1}=1, w_{b_2s_1}=0, t_{b_1s_1}=t_{b_2,s_1}=0$ and allocation $d_1$ delivers from $s_1$ to $b_1$. To verify, couriers best responses are $\BR_{d_1}(\mathbf{w},\mathbf{t})=\{(b_1,s_1)\}, \BR_{d_2}(\mathbf{w},\mathbf{t})=\{\emptyset,(b_1,s_1)\}$. Buyer $b_1$ has positive utility $u_{b_1}(s_1)=2>0$, and offers zero tip. Buyer $b_2$ has to tip at least 12 for courier $d_2$ and 12 for courier $d_1$ to deliver from $s_1$ to her. Observe that $\underline{t}_{b_2s_1}=12$, so if deviating, buyer $b_2$ has to pay $p_{s_1}+\underline{t}_{b_2s_1}=13>v_{b_2s_1}$.

    There is another with-tip equilibrium that shares the same purchase prices, delivery compensations, and allocations as the without-tip equilibrium, but with zero tips. To check, couriers incentives remain unchanged with zero tips, $\BR_{d_1}(\mathbf{w},0)=\BR_{d_1}(\mathbf{w},0)=\{\emptyset, (b_2,s_1)\}$. For $b_1$ the minimum tip to have courier $d_1$ deliver for her is $\underline{t}_{b_1s_1}=0$. But $v_{b_1}(s_1)-p_{s_1}-\underline{t}_{b_1s_1}\leq 0$. For $b_2$ it holds that $\underline{t}_{b_2s_1}=0$.
\end{example}
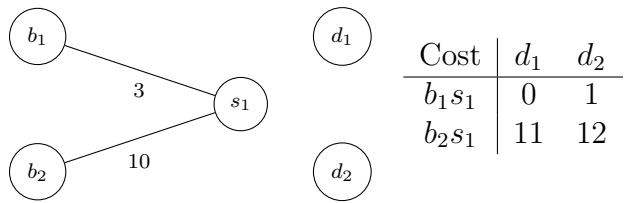
\begin{figure}[t]
    \centering
    \begin{tblr}{}
    \centering
\begin{tikzpicture}[baseline=(current bounding box.center),scale=0.9]
    \node[draw, shape=circle, minimum size=0.1cm] (1) at (-1, 2) {\scriptsize $b_1$};
    \node[draw, shape=circle, minimum size=0.1cm] (2) at (-1, 0) {\scriptsize $b_2$};
    \node[draw, shape=circle, minimum size=0.1cm] (3) at (2, 1) {\scriptsize $s_1$};
    \node[draw, shape=circle, minimum size=0.1cm] (4) at (3.5, 2) {\scriptsize $d_1$};
    \node[draw, shape=circle, minimum size=0.1cm] (5) at (3.5, 0) {\scriptsize $d_2$};
    \draw (1) to (3);
    \draw (2) to (3);
    \node[] at (0.5, 1.2){\scriptsize 3};
    \node[] at (0.5, 0.15){\scriptsize 10};
\end{tikzpicture}
& 
\begin{tabular}{c|cc}
Cost & $d_1$  & $d_2$\\
\hline
$b_1 s_1$ & $0$  & $1$\\
$b_2 s_1$ & $11$  & $12$\\
\end{tabular}
\end{tblr}
    \caption{A market where the optimal without-tip equilibrium welfare is -1, but a with-tip equilibrium achieves the optimal welfare. Buyers $b_1$ and $b_2$ value the only store, $s_1$, at $3$ and $10$, respectively. Courier $d_1$ has costs of delivery $0$ and $11$ to $b_1$ and $b_2$, respectively; courier $d_2$ has costs of delivery $1$ and $12$ to $b_1$ and $b_2$, respectively.
    \label{fig:without_tip_bad}}
\end{figure}

\section{Motivating the Use of Tips}\label{sec:motivate_use_tip}

In this section, we highlight the benefits of incorporating tips into pricing. We show in \cref{sec:eq_existence_sec} that while with-tip equilibria always exist, the existence of without-tip equilibria require a sufficient number of couriers.
In \cref{sec:role_tips}, we show every allocation in a without-tip equilibrium is also in a with-tip equilibrium, meaning the existence of a with-tip equilibrium with weakly larger welfare than all without--tip equilibria.
Lastly, we demonstrate that when tipping is allowed, every equilibrium allocation is also in an equilibrium where all tips equal zero, pinpointing the role of tips as preventing deviations off path rather than for on-path transfers. Most proofs are relegated to \cref{app:eq_existence}.

\subsection{Equilibrium Existence}\label{sec:eq_existence_sec}

We first examine the existence of equilibrium in the without-tip and with-tip regime.
\begin{theorem}\label{thm:eq_existence}
    There always exists a with-tip equilibrium. In contrast, a without-tip equilibrium is only guaranteed to exist when the number of couriers is weakly larger than the minimum number of buyers and stores $l\geq \min\{m,n\}$. There is a market where $l> \min\{m,n\}$ and no without-tip equilibrium exists.
\end{theorem}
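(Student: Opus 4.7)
The plan handles the three parts sequentially. I first note that the counterexample clause as written ($l>\min\{m,n\}$) conflicts with the sufficient condition stated earlier in the same theorem; the tight tightness assertion the proof can actually establish is in the direction $l<\min\{m,n\}$, and my counterexample is constructed there.

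For with-tip existence, I would start from a welfare-optimal three-sided matching $\mathbf{x}^*$ and reduce to a two-sided unit-demand market on $(B,S)$ by giving each pair $(b,s)$ an effective cost $\min_d c_d(b,s)$. Classical Walrasian existence in unit-demand two-sided markets \citep{gul1999walrasian} supplies purchase prices $\mathbf{p}$ supporting the buyer side of $\mathbf{x}^*$. For each on-path order $(b,s)\in\mathbf{x}^*$ I assign the cheapest courier $d^*$ and set $w_{bs}$ between the cheapest and second-cheapest courier costs, so only $d^*$ weakly wants to deliver. For each off-path order I set $w_{bs}=0$; then $\underline t_{bs}$ equals the minimum courier cost for delivering $(b,s)$ (possibly further inflated by the utility that competing couriers can already attain elsewhere), which I would verify is large enough that $v_b(s)-p_s-\underline t_{bs}\leq v_b(s_{\mathbf{x}^*}(b))-p_{s_{\mathbf{x}^*}(b)}$ for every buyer $b$ and alternative store $s$. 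On-path tips can then be normalized to zero by shifting compensations upward by the on-path tip amounts.

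For without-tip existence under $l\geq\min\{m,n\}$, the same scheme proceeds without the tip stage: the hypothesis guarantees a distinct courier for each of the at most $\min\{m,n\}$ served orders, and off-path orders carry compensation zero, so courier-side constraints are satisfied; buyers' off-path incentives remain governed by the Walrasian prices. For the counterexample, I would take $m=n=2$, $l=1$, with $v_{b_1}(s_1)=v_{b_2}(s_1)=V$ and $v_{b_1}(s_2)=v_{b_2}(s_2)=\varepsilon>0$, and all delivery costs zero. The single courier can serve only one buyer, so the other must be matched to $\emptyset$; but that unserved buyer has strictly positive utility $\varepsilon$ at the unsold store $s_2$ (whose purchase price is forced to $0$), so $\emptyset$ is never in her best response, and no courier is available to serve her either. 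Symmetrically ruling out every other candidate assignment shows that no without-tip equilibrium exists.

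The main obstacle is the off-path tip argument in the with-tip proof: one must verify that zeroing $w_{bs}$ off path produces $\underline t_{bs}$ large enough to discipline every buyer deviation simultaneously, given the Walrasian prices chosen upstream and the compensations chosen on path. Making the two-sided pricing stage and the courier-assignment stage consistent, and then checking that the induced off-path minimum tips are above the buyer-side slack allowed by the Walrasian prices, is where the technical work concentrates.
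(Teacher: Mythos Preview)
Your with-tip existence argument has a fatal gap: you attempt to support the welfare-optimal allocation $\mathbf{x}^*$, but Section~4 of the paper (the market in Figure~\ref{fig:market_clearing}) shows exactly that there are instances in which \emph{no} with-tip equilibrium supports the optimal-welfare allocation. So the ``verification'' you flag as the main obstacle is not merely technical---it is impossible in general. Relatedly, your reduction step already breaks: the buyer side of a welfare-optimal three-sided matching need not be a Walrasian allocation in the two-sided market with effective values $v_b(s)-\min_d c_d(b,s)$, because the three-sided optimum respects courier capacity constraints that your two-sided reduction discards (two orders may share the same cheapest courier). The paper's route is quite different: when $l\ge\min\{m,n\}$ it piggybacks on the without-tip equilibrium via \cref{lem:pareto_dominant}; when $l<\min\{m,n\}$ it abandons optimality entirely, picks an \emph{arbitrary} set $\Omega$ of $l$ orders, and uses \cref{lem:exist_courier_plan_serve_equal_size} to construct a courier plan in which every courier's utility exceeds $\max_{b,s}v_b(s)$. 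This heavy subsidization is the key idea you are missing: it forces $\underline t_{bs}>v_b(s)$ for every off-path $(b,s)$, so deviation is never profitable regardless of purchase prices.

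Your without-tip argument inherits a related problem: you reuse the cost-adjusted Walrasian prices, but in the without-tip model buyer utility is $v_b(s)-p_s$ with no cost term, so prices supporting the allocation for values $v_b(s)-\min_d c_d(b,s)$ do not support it for values $v_b(s)$. The paper instead takes a Walrasian equilibrium of $(B,S)$ with the \emph{raw} valuations $v_b(s)$ (so $|\Omega|\le\min\{m,n\}\le l$ automatically), and then invokes Lemmas~\ref{lem:exist_courier_plan_serve}--\ref{lem:exist_courier_plan_serve_equal_size} to build a courier plan serving those orders with zero off-path compensations. Your counterexample for $l<\min\{m,n\}$ is correct and essentially matches the paper's Figure~\ref{fig:without_tip_not_exists}; you are also right that the clause ``$l>\min\{m,n\}$'' in the theorem statement is a typo for $l<\min\{m,n\}$.
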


\cref{thm:eq_existence} is the combined result of \cref{lem:exist_courier_plan_serve}, \cref{thm:without_tip_eq_existence} and \cref{thm:with_tip_eq_existence}.
As a high-level intuition, a without-tip equilibrium requires that the two-sided market of buyers and stores forms a Walrasian equilibrium. When $l<\min\{m,n\}$, there can be insufficient number of couriers to deliver all orders in this walrasian equilibrium; e.g., the market in \cref{fig:without_tip_not_exists} has $l=1<m=n=2$, and no without-tip equilibrium exists. To explain it in another way, courier shortage leads to some buyers unable to secure delivery despite their high willingness to pay. However, in the with-tip regime platform can heavily subsidize delivery, so that buyers will have to tip a very high amount should they buy from a store that does not sell.
\begin{figure}[t]
    \centering
    \begin{tblr}{}
    \centering
\begin{tikzpicture}[baseline=(current bounding box.center),scale=0.9]
    \node[draw, shape=circle, minimum size=0.1cm] (1) at (-1, 2) {\scriptsize $b_1$};
    \node[draw, shape=circle, minimum size=0.1cm] (2) at (-1, 0) {\scriptsize $b_2$};
    \node[draw, shape=circle, minimum size=0.1cm] (3) at (2, 2) {\scriptsize $s_1$};
    \node[draw, shape=circle, minimum size=0.1cm] (4) at (2, 0) {\scriptsize $s_2$};
    \node[draw, shape=circle, minimum size=0.1cm] (5) at (3.5, 1) {\scriptsize $d_1$};
    \draw (1) to (3);
    \draw (1) to (4);
    \draw (2) to (3);
    \draw (2) to (4);
    \node[] at (0.5, 2.2){\scriptsize 4};
    \node[] at (-0.5, 1.4){\scriptsize 2};
    \node[] at (-0.5, 0.6){\scriptsize 1};
    \node[] at (0.5, 0.15){\scriptsize 3};
\end{tikzpicture}
& 
\begin{tabular}{c|c}
Cost & $d_1$ \\
\hline
$b_1 s_1$ & $0$ \\
$b_1 s_2$ & $0$ \\
$b_2 s_1$ & $0$ \\
$b_2 s_2$ & $0$
\end{tabular}
\end{tblr}
    \caption{A market where there is no without-tip equilibrium. As there is only one courier, one of the two stores are not bought in any feasible allocation. At the purchase price $0$, the buyer who does not buy will want to buy from a store that does not sell in the without-tip regime. A with-tip equilibrium is given by $p_{s_1}=1,p_{s_2}=0,\mathbf{t}=0,w_{b_1s_1}=4,w_{b_1s_2}=w_{b_2s_1}=w_{b_2s_2}=0$ and $d_1$ deliver $s_1$ to $b_1$.
    \label{fig:without_tip_not_exists}}
\end{figure}
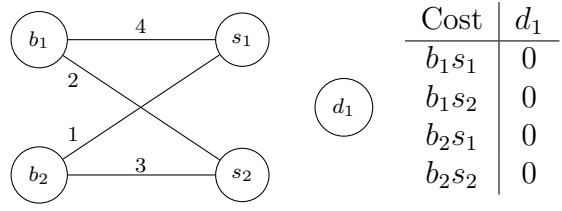

To analyze  equilibrium existence, we start from the courier side and look into a two-sided matching consisting of all orders and couriers. The following two lemmas show that when there are enough couriers, there is always a delivery compensation that incentivizes couriers to deliver any subset of orders. 
\cref{lem:exist_courier_plan_serve} also gives a tight upper bound on courier utility when delivering a subset of orders, which will be crucial in \cref{sec:general_markets} to test whether an allocation is supported in some equilibrium.

We need a little more notation to proceed. Let $G_D=(O,D)$ be a bipartite graph where each edge $(o,d)$ has cost $c_d(o)$. A \emph{courier allocation} $\mathbf{y}$ is a two-sided matching in $G_D$ where 
\begin{equation*}
    y_{od} =
    \begin{cases}
      1 & \text{if $d$ delivers the order $o$,}\\
      0 & \text{otherwise.}
    \end{cases}       
\end{equation*}

A feasible courier allocation $\mathbf{y}$ satisfies unit-capacity courier $\forall d, \sum_o y_{o,d}\leq 1$, unit-demand buyers and unit-supply store (i.e., $\forall o=(b,s),o'=(b',s') \text{ such that } y_{od}=y_{o'd}=1, \text{ it satisfies that } b\neq b, s\neq s'$). Denote $o_\mathbf{y}(d)$ as the order that courier $d$ delivers. If $d$ does not deliver, $o_\mathbf{y}(d)=\emptyset$. A \emph{courier plan} is a pair of delivery compensation and feasible courier allocation $(\mathbf{w},\mathbf{y})$ such that every courier delivers their most preferred order; i.e., $\forall d, o_{\mathbf{y}}(d)\in \BR_d(\mathbf{w})$.

Consider any subset of orders $\Omega\subset O$ where each buyer and store appears at most once.
A courier plan $(\mathbf{w},\mathbf{y})$ \emph{serves} $\Omega$ if all orders in $\Omega$ are delivered and only orders in $\Omega$ are delivered; i.e., $\forall o\in \Omega, \sum_{d}y_{od}=1 \text{ and } \forall o\notin \Omega, \sum_{d}y_{od}=0$.

\begin{restatable}{lemma}{existscourierPlan}
\label{lem:exist_courier_plan_serve}
    Consider the case where $|\Omega|< l$. Any courier plan $(\mathbf{w},\mathbf{y})$ that serves $\Omega$ satisfies (i) $\mathbf{y}$ is the minimum-cost matching that covers $\Omega$ in $G_D$; and (ii) courier utility is upper bounded by $$u_d(o_\mathbf{y}(d))\leq \bar{u}_d:=C_{\Omega}(G_D\backslash d)-C_{\Omega}(G_D)$$ where $C_{\Omega}(G_D)$ is the cost of the minimum cost matching that covers $\Omega$ in $G_D$, and $C_{\Omega}(G_D\backslash d)$ is the cost of the minimum-cost matching that covers $\Omega$ without courier $d$.
        
    There always exists a courier plan $(\bar{\mathbf{w}},\mathbf{y})$ that serves $\Omega$ and achieves $\bar{u}_d$, where $\bar{w}_{bs}=0$ for $(b,s)\notin \Omega$. 
\end{restatable}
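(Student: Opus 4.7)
The plan is to tackle the three assertions in order: the minimum-cost matching property via an exchange argument, the utility upper bound via a telescoping sum of best-response inequalities, and the existence of an achieving plan via linear-programming duality for the assignment problem. The hardest part will be the last step, where uniform order prices $\bar{w}_o$ must simultaneously give every matched courier utility exactly $\bar{u}_d$ and deter every other courier from deviating to the same order. The assumption $|\Omega|<l$ is used throughout, so that for every courier $d$ the graph $G_D\setminus d$ still admits a matching covering $\Omega$ and hence $C_\Omega(G_D\setminus d)$ is finite.

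For (i), I would suppose for contradiction that $\mathbf{y}$ serves $\Omega$ but is not minimum-cost, and let $\mathbf{y}^\star$ be strictly cheaper. Decompose the symmetric difference $\mathbf{y}\triangle\mathbf{y}^\star$ into alternating cycles and paths in $G_D$; every component alternates between $\mathbf{y}$-edges and $\mathbf{y}^\star$-edges, and each path endpoint is a courier appearing in exactly one of the two matchings. On a cycle $(d_1,o_1,\ldots,d_k,o_k)$ with $\mathbf{y}$-edges $(d_i,o_i)$ and $\mathbf{y}^\star$-edges $(d_{i+1},o_i)$, summing each courier's best-response inequality $c_{d_{i+1}}(o_i)-c_{d_{i+1}}(o_{i+1})\geq w_{o_i}-w_{o_{i+1}}$ telescopes to $\sum_i c_{d_{i+1}}(o_i)\geq \sum_i c_{d_{i+1}}(o_{i+1})$, i.e., the $\mathbf{y}^\star$-edges on the cycle cost at least as much as the $\mathbf{y}$-edges. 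On a path with $\mathbf{y}$-only endpoint $d_0$ and $\mathbf{y}^\star$-only endpoint $d_{k+1}$, the same telescoping summation together with the extra best-response conditions at the endpoints---$d_0$ has non-negative utility on its $\mathbf{y}$-edge (since it is matched in $\mathbf{y}$), while $d_{k+1}$ has non-positive utility on its $\mathbf{y}^\star$-edge (since it is unmatched in $\mathbf{y}$ yet best-responds under $(\mathbf{w},\mathbf{y})$)---yields the same cost comparison. Summing across components gives $C_\Omega(\mathbf{y}^\star)\geq C_\Omega(\mathbf{y})$, contradicting the strict improvement.

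For the bound in (ii), part (i) lets me use $\sum_{d'\in D_\mathbf{y}}c_{d'}(o_\mathbf{y}(d'))=C_\Omega(G_D)$. Let $\mathbf{y}^d$ be any minimum-cost matching covering $\Omega$ in $G_D\setminus d$. Each courier $d'\neq d$ best-responds in $(\mathbf{w},\mathbf{y})$, so in particular $u_{d'}(o_\mathbf{y}(d'))\geq u_{d'}(o_{\mathbf{y}^d}(d'))$ with the convention $o_{\mathbf{y}^d}(d')=\emptyset$ when $d'$ is unmatched in $\mathbf{y}^d$. Summing these inequalities over $d'\neq d$ and rewriting each side as $\sum_{o\in\Omega}w_o$ minus the corresponding matching cost gives
\[
\sum_{o\in\Omega}w_o-C_\Omega(G_D)-u_d(o_\mathbf{y}(d))\;\geq\;\sum_{o\in\Omega}w_o-C_\Omega(G_D\setminus d),
\]
which rearranges to $u_d(o_\mathbf{y}(d))\leq\bar{u}_d$. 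The case $d\notin D_\mathbf{y}$ is immediate, since then $u_d(o_\mathbf{y}(d))=0\leq\bar{u}_d$.

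For existence, I would invoke linear-programming duality for the assignment problem. The minimum-cost matching LP on $G_D$ covering $\Omega$ is integral, and its dual has variables $\alpha_o$ on orders and $\beta_d\geq 0$ on couriers subject to $\alpha_o-\beta_d\leq c_d(o)$. Among optimal duals I would select the \emph{courier-optimal} one $(\alpha^\star,\beta^\star)$, whose existence follows from the lattice structure of the assignment-game dual polytope and which satisfies $\beta^\star_d=C_\Omega(G_D\setminus d)-C_\Omega(G_D)=\bar{u}_d$ for every $d$, with $\beta^\star_d=0$ for $d\notin D_\mathbf{y}$ by complementary slackness. Setting $\bar{w}_o=\alpha^\star_o$ for $o\in\Omega$ and $\bar{w}_o=0$ otherwise then produces the desired courier plan: dual feasibility provides $\bar{w}_o-c_d(o)\leq\bar{u}_d$ for every pair (so no courier prefers a deviation), and complementary slackness provides equality on the matched pairs $(d_\mathbf{y}(o),o)$, so that each $d\in D_\mathbf{y}$ achieves utility exactly $\bar{u}_d$ and each $d\notin D_\mathbf{y}$ best-responds to the null option. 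The delicate step is identifying $\beta^\star_d$ with $\bar{u}_d$, i.e., showing that uniform order prices can replicate agent-specific VCG surpluses; a more elementary alternative is to set $\bar{w}_o=c_{d_\mathbf{y}(o)}(o)+\bar{u}_{d_\mathbf{y}(o)}$ directly and verify each deviation inequality by exhibiting an explicit swap of couriers in $G_D\setminus d$ witnessing the needed cost bound.
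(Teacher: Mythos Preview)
Your argument is correct. The paper takes a more unified but less self-contained route: it builds an auxiliary two-sided market $M=(\Omega,D)$ in which orders play the role of unit-demand consumers, couriers are items, and consumer $o$ values item $d$ at $H-c_d(o)$ for a large constant $H$; it then shows that courier plans serving $\Omega$ correspond bijectively to Walrasian equilibria of $M$ (with courier utilities playing the role of Walrasian item prices). All three claims then follow from off-the-shelf assignment-market theorems: (i) is the first welfare theorem in $M$, (ii) is the Gul--Stacchetti upper bound on Walrasian prices, $u'_d\le SW(M)-SW(M\setminus d)$, which unwinds to $C_\Omega(G_D\setminus d)-C_\Omega(G_D)$ precisely because $|\Omega|<l$ keeps every consumer matched in $M\setminus d$, and the existence of $(\bar{\mathbf w},\mathbf y)$ is the existence of the buyer-optimal (here, highest-price) Walrasian equilibrium. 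Your three separate arguments---the alternating-component exchange, the telescoping of best-response inequalities, and the courier-optimal vertex of the assignment dual---are exactly the elementary proofs of those cited results, specialized to this instance; in particular your dual pair $(\alpha_o,\beta_d)$ is the paper's $(w_o,u'_d)$ after the coordinate change, and your ``delicate step'' $\beta^\star_d=\bar u_d$ is precisely what the paper outsources to \cite{gul1999walrasian}. The paper's packaging is shorter and yields Lemma~\ref{lem:exist_courier_plan_serve_equal_size} in the same breath; yours is more explicit and makes clearer where $|\Omega|<l$ enters.
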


\begin{restatable}{lemma}{existscourierPlanEqualSize}
\label{lem:exist_courier_plan_serve_equal_size}
    Consider the case where $|\Omega|=l$. Any courier plan $(\mathbf{w},\mathbf{y})$ that serves $\Omega$ satisfies that $\mathbf{y}$ is the minimum-cost matching that covers $\Omega$ in $G_D$. There always exists a courier plan $(\bar{\mathbf{w}},\mathbf{y})$ that serves $\Omega$ where $\bar{w}_{bs}=0$ for $(b,s)\notin \Omega$, and each courier has utility $$\bar{u}_d=H+ C_{\Omega_{-1}}(G_D\backslash d)-C_{\Omega}(G_D)\geq \max_{b,s}\{v_b(s)\},$$ where $H=\sum_{bs}v_b(s)+\sum_{bs}w_{bs}+\sum_{bsd}c_d(b,s)$, while $C_{\Omega_{-1}}(G_D\backslash d)$ is the cost of the minimum-cost matching that covers all but one order in $\Omega$, and $C_{\Omega}(G_D)$ is the cost of the minimum-cost matching that covers $\Omega$.
\end{restatable}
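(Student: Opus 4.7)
The plan is to mirror the approach of \cref{lem:exist_courier_plan_serve}, adapting it to the boundary case where $|\Omega|=l$ and every courier must be matched. For the first assertion, I would argue that in any courier plan $(\mathbf{w},\mathbf{y})$ serving $\Omega$, the matching $\mathbf{y}$ must be min-cost. Because $|\Omega|=l$, $\mathbf{y}$ is a perfect matching between $D$ and $\Omega$. Each courier's best-response condition gives $w_{o_\mathbf{y}(d)} - c_d(o_\mathbf{y}(d)) \ge w_{o_\mathbf{y}(d')} - c_d(o_\mathbf{y}(d'))$ for every pair $d,d'$. Summing these inequalities around any alternating cycle in $\mathbf{y}$---which is the form any cost-improving exchange would take---the compensation terms cancel and the cost terms yield $\sum_d c_d(o_\mathbf{y}(d)) \le \sum_d c_d(o_\mathbf{y}(\sigma(d)))$ for every permutation $\sigma$; hence $\mathbf{y}$ is min-cost.

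For the existence assertion, I would fix a min-cost perfect matching $\mathbf{y}$ and set $\bar{w}_{bs}=0$ for $(b,s)\notin\Omega$ and $\bar{w}_{bs}=\bar{u}_d + c_d(b,s)$ for $(b,s)\in\Omega$ with $d$ its matched courier, so that courier $d$ realizes utility exactly $\bar{u}_d$ on her assigned order. Three no-deviation checks remain: (i) no courier prefers another order in $\Omega$; (ii) no courier prefers an order outside $\Omega$; and (iii) no courier prefers the null order. Cases (ii) and (iii) reduce to $\bar{u}_d\ge 0$, which is immediate once $H$ is taken sufficiently large. Condition (i) unpacks to $C_{\Omega_{-1}}(G_D\backslash d) + c_d(o_\mathbf{y}(d')) \ge C_{\Omega_{-1}}(G_D\backslash d') + c_{d'}(o_\mathbf{y}(d'))$, which I would establish by an augmenting-path argument: upper-bound the right-hand side using the witness obtained from $\mathbf{y}$ with $d'$'s edge removed, and lower-bound the left-hand side by exhibiting a cover of $\Omega$ in which $d$ serves $o_\mathbf{y}(d')$ while the remaining couriers cover $\Omega\backslash\{o_\mathbf{y}(d')\}$, exploiting the min-cost property of $\mathbf{y}$. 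The lower bound $\bar{u}_d\ge \max_{b,s}v_b(s)$ then follows by direct substitution: in the definition of $H$, the aggregate value and cost terms dominate the possibly-negative shift $C_{\Omega_{-1}}(G_D\backslash d) - C_\Omega(G_D)$.

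The main obstacle is condition (i): $C_{\Omega_{-1}}(G_D\backslash d)$ is a minimum over \emph{which} order to drop from $\Omega$, and this minimizer need not align with either $o_\mathbf{y}(d)$ or $o_\mathbf{y}(d')$. Reconciling this freedom with the unit-capacity constraints on both sides of $\mathbf{y}$ is the technical heart of the proof, and requires carefully chosen feasibility witnesses so that the combinatorial costs telescope correctly; I expect this to be a straightforward but bookkeeping-heavy extension of the analogous VCG-style argument used in \cref{lem:exist_courier_plan_serve}, with the offset $H$ absorbing the fact that, unlike the $|\Omega|<l$ case, removing a courier strictly shrinks the set of covered orders.
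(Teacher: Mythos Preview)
Your approach is correct in outline but takes a genuinely different route from the paper. The paper proves \cref{lem:exist_courier_plan_serve} and \cref{lem:exist_courier_plan_serve_equal_size} together by constructing an auxiliary two-sided unit-demand market $M=(\Omega,D)$ in which each order $o$ is a ``consumer'' valuing ``item'' $d$ at $H-c_d(o)$, and then establishing a bijection between courier plans serving $\Omega$ and Walrasian equilibria of $M$. Existence of a serving plan, the min-cost property of $\mathbf{y}$, and the precise form of $\bar{u}_d$ all follow at once from classical facts about two-sided Walrasian equilibria: existence from \citet{gul1999walrasian}, min-cost from the first welfare theorem, and the formula $\bar{u}_d=SW(M)-SW(M\setminus d)$ (together with the fact that these maximal prices actually support an equilibrium) from the known lattice structure of Walrasian prices. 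Your direct combinatorial route---cycle cancellation for min-cost, then hand-verifying best responses under the explicit $\bar{\mathbf{w}}$---is self-contained and avoids external appeals, but the price is exactly the step you flag as the obstacle: verifying condition~(i) is tantamount to reproving that VCG prices are Walrasian in an assignment market. In particular, your sketched witness (``$d$ serves $o_\mathbf{y}(d')$ while the remaining couriers cover $\Omega\setminus\{o_\mathbf{y}(d')\}$'') does not directly yield the needed lower bound when the order dropped in the optimal $C_{\Omega_{-1}}(G_D\setminus d)$-matching differs from $o_\mathbf{y}(d')$; a correct argument there requires an alternating-path exchange between that matching and $\mathbf{y}$. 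The paper's reduction sidesteps this entirely by invoking the maximal-price Walrasian equilibrium off the shelf.
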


The proof of these two lemmas establishes a one-to-one relationship between courier plans that serve $\Omega$ and Walrasian equilibria in a two-sided market $M$ where one side is $\Omega$ and another side is $D$.
As a Walrasian equilibrium always exists in the market $M$, a courier plan that serves $\Omega$ always exists. By further appealing to the largest Walrasian price in $M$, we are able to write down the exact form of the maximum courier utility in a courier plan that serves $\Omega$. This maximum courier utility plays a crucial role in testing if an allocation is in some with-tip equilibrium in \cref{sec:general_markets}.

\begin{theorem}\label{thm:without_tip_eq_existence}
    In a market where the number of couriers is weakly larger than the minimum number of buyers and stores $l\geq \min\{m,n\}$, a without-tip equilibrium always exists.
\end{theorem}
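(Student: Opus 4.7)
The plan is to build the without-tip equilibrium in two stages, first fixing buyer incentives via a two-sided Walrasian equilibrium and then fixing courier incentives via the courier-plan machinery already developed in Lemmas~\ref{lem:exist_courier_plan_serve} and~\ref{lem:exist_courier_plan_serve_equal_size}. Concretely, I would start by invoking the cited existence result of \citet{gul1999walrasian} to obtain a Walrasian equilibrium $(\mathbf{p},\mathbf{z})$ for the two-sided unit-demand/unit-supply market $(B,S)$. Let $\Omega=\{(b,s_{\mathbf{z}}(b)) : s_{\mathbf{z}}(b)\neq \emptyset\}$ be the set of orders realized by $\mathbf{z}$. By feasibility of $\mathbf{z}$, each buyer and each store appears at most once in $\Omega$, so $|\Omega|\leq \min\{m,n\}\leq l$ by hypothesis.

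Because $|\Omega|\leq l$, I can apply Lemma~\ref{lem:exist_courier_plan_serve} (when $|\Omega|<l$) or Lemma~\ref{lem:exist_courier_plan_serve_equal_size} (when $|\Omega|=l$) to obtain a courier plan $(\bar{\mathbf{w}},\mathbf{y})$ that serves $\Omega$, with the crucial property that $\bar{w}_{bs}=0$ for every $(b,s)\notin \Omega$. I would then stitch $\mathbf{z}$ and $\mathbf{y}$ into a three-sided allocation $\mathbf{x}$ by setting $x_{bsd}=1$ iff $z_{bs}=1$ and $y_{(b,s)d}=1$, and propose the triple $(\mathbf{p},\bar{\mathbf{w}},\mathbf{x})$ as the candidate equilibrium.

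The verification of the four conditions in Definition~\ref{def:withouttipequil} is then immediate. First, $\mathbf{x}$ is feasible because the $\mathbf{z}$-part respects unit-demand and unit-supply, and the $\mathbf{y}$-part respects unit-capacity, and they agree on $\Omega$. Second, buyer incentives hold because under prices $\mathbf{p}$ alone each buyer's best response is unchanged by the introduction of couriers (there are no tips), so $s_{\mathbf{x}}(b)=s_{\mathbf{z}}(b)\in \BR_b(\mathbf{p})$ by the Walrasian property. Third, courier incentives hold directly from the courier-plan definition: $o_{\mathbf{x}}(d)=o_{\mathbf{y}}(d)\in \BR_d(\bar{\mathbf{w}})$. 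Fourth, the zero-price conditions hold: unsold stores receive $p_s=0$ from the Walrasian equilibrium, and undelivered orders receive $\bar{w}_{bs}=0$ from the lemma's guarantee.

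The main subtlety to handle carefully is the boundary case $|\Omega|=l$, which is why the two lemmas are split: in that regime the compensations $\bar{\mathbf{w}}$ from Lemma~\ref{lem:exist_courier_plan_serve_equal_size} must be large enough that no courier prefers an outside order over her assigned one, yet still zero on orders outside $\Omega$. Once both cases are handled by the appropriate lemma, no additional argument is needed. The proof requires no independent analysis beyond chaining these two existence results, and its chief conceptual point is that having $l\geq \min\{m,n\}$ is exactly what allows the orders realized by any Walrasian allocation in $(B,S)$ to be covered by couriers without forcing compensation on unused orders, which would otherwise violate the zero-compensation requirement.
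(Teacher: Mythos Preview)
Your proposal is correct and follows essentially the same approach as the paper: obtain a Walrasian equilibrium $(\mathbf{p},\mathbf{z})$ in the two-sided market $(B,S)$, use the hypothesis $l\geq\min\{m,n\}$ to ensure $|\Omega|\leq l$, apply Lemma~\ref{lem:exist_courier_plan_serve} or~\ref{lem:exist_courier_plan_serve_equal_size} to get a courier plan $(\bar{\mathbf{w}},\mathbf{y})$ serving $\Omega$ with zero compensation off $\Omega$, and splice the two into a three-sided allocation. Your verification of the four equilibrium conditions is slightly more explicit than the paper's, but the argument is the same.
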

\begin{proof}
    Consider a Walrasian equilibrium $(\mathbf{p},\mathbf{z})$ in the two-sided market $(B,S)$. $\mathbf{z}$ defines a subset of orders $\Omega =\{(b,s) | z_{bs}=1\}$ satisfying $|\Omega|\leq l$. By \cref{lem:exist_courier_plan_serve} and \cref{lem:exist_courier_plan_serve_equal_size}, there is always a courier plan $(\mathbf{w},\mathbf{y})$ that serves $\Omega$ and satisfies $\forall (b,s)\notin \Omega, \; w_{bs}=0$. Define a feasible allocation $\mathbf{x}$, where for an order $(b,s)=o$, let $x_{bsd}= z_{bs}y_{od}$. $\mathbf{x}$ does not change what buyers buy in $(\mathbf{p},\mathbf{z})$ and does not change what couriers deliver in $(\mathbf{w},\mathbf{y})$. Then $(\mathbf{p},\mathbf{w},\mathbf{x})$ is a without-tip equilibrium. This is because the Walrasian equilirbium guarantees buyers incentives, the courier plan guarantees couriers incentives, and unsold store prices and undelivered orders delivery compensations are zero. 
\end{proof}

\begin{theorem}\label{thm:with_tip_eq_existence}
    There always exists a with-tip equilibrium.
\end{theorem}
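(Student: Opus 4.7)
The plan is to split based on whether $l \geq \min\{m, n\}$. In the first case, $l \geq \min\{m, n\}$, \cref{thm:without_tip_eq_existence} already produces a without-tip equilibrium $(\mathbf{p}, \mathbf{w}, \mathbf{x})$, and I would show that $(\mathbf{p}, \mathbf{w}, \mathbf{0}, \mathbf{x})$ (setting all tips to zero) is a with-tip equilibrium. The courier best-response and market-clearing conditions transfer verbatim, and for any buyer $b$ and any alternative store $s'$, the with-tip deviation value $v_b(s') - p_{s'} - \underline{t}_{bs'}$ is weakly dominated by the without-tip deviation value $v_b(s') - p_{s'}$ since tips are non-negative. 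The one subtlety is the requirement $t_{bs_\mathbf{x}(b)} = \underline{t}_{bs_\mathbf{x}(b)} = 0$ on realized deliveries, which holds because the courier currently delivering $(b, s_\mathbf{x}(b))$ already has it in their best response under $\mathbf{w}$ alone, so no positive tip is needed.

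In the harder case $l < \min\{m, n\}$, I would construct an equilibrium directly. Since $l \leq m$ and $l \leq n$, I can pick any $\Omega \subseteq O$ of size exactly $l$ that is a matching in $(B, S)$. By \cref{lem:exist_courier_plan_serve_equal_size}, there exists a courier plan $(\mathbf{w}, \mathbf{y})$ serving $\Omega$ with $w_{bs} = 0$ for $(b, s) \notin \Omega$ and each courier's utility satisfying $\bar{u}_d \geq V := \max_{b,s} v_b(s)$. I then set $\mathbf{p} = 0$, $\mathbf{t} = 0$, and define the three-sided allocation $\mathbf{x}$ by $x_{bsd} = y_{(b,s), d}$. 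Courier best-response and the market-clearing conditions follow immediately from the courier plan together with zero tips. For buyer best-response, the key observation is that for any $(b, s') \notin \Omega$, using the closed form from \cref{app:min_tip}, the minimum tip reduces to $\underline{t}_{bs'} = \min_d \bigl(c_d(b, s') + \bar{u}_d\bigr) \geq V$, since $w_{bs'} = 0$ and each courier already enjoys utility at least $V$ from their current assignment. Hence the deviation value $v_b(s') - p_{s'} - \underline{t}_{bs'} \leq v_b(s') - V \leq 0$, so every buyer weakly prefers either their allocated order in $\Omega$ or, if unallocated, $\emptyset$. Because $\Omega$ is a matching in $(B, S)$, every candidate deviation for buyer $b$ is to some $(b, s') \notin \Omega$, so this suffices.

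The main obstacle lies in the second case: finding a courier plan that both serves the chosen $\Omega$ and pays couriers enough to make off-$\Omega$ tips prohibitive. \cref{lem:exist_courier_plan_serve_equal_size}'s bound $\bar{u}_d \geq V$ is precisely the deterrent needed, and it is available only when $|\Omega| = l$, which forces the case split at $l < \min\{m,n\}$ (so that an $l$-sized matching in $(B, S)$ exists). If one instead relied on \cref{lem:exist_courier_plan_serve}'s weaker utility bound for $|\Omega| < l$, this construction would not rule out buyer deviations. Without tips at all, no analogous deterrent exists, which is exactly why \cref{thm:without_tip_eq_existence} breaks down in this regime.
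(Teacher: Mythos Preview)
Your proposal is correct and takes essentially the same approach as the paper: split on $l \geq \min\{m,n\}$, reduce the first case to a without-tip equilibrium with zero tips (the paper packages this as \cref{lem:pareto_dominant}), and in the second case use \cref{lem:exist_courier_plan_serve_equal_size} on an $l$-sized matching $\Omega$ so that every courier's utility exceeds $\max_{b,s} v_b(s)$, making all off-$\Omega$ tips prohibitive. The only cosmetic difference is that the paper sets $p_s = v_b(s)$ for $(b,s)\in\Omega$ while you set $\mathbf{p}=0$; both choices work, and yours is slightly cleaner since the ``unsold stores have price zero'' condition is then automatic.
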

\begin{proof}
    \cref{lem:pareto_dominant} shows that any allocation in a without-tip equilibrium is in a with-tip equilibrium of zero tips. When $l\geq \min\{m,n\}$, \cref{thm:without_tip_eq_existence} says without-tip equilibria exist. So a with-tip equilibria also exists. When $l<\min\{m,n\}$, consider a subset of orders $\Omega$ where each buyer and store appears at most once and $|\Omega|=l$. \cref{lem:exist_courier_plan_serve_equal_size} says there exists a courier plan $(\bar{\mathbf{w}},\mathbf{y})$ that serves $\Omega$ where $\bar{w}_{bs}=0$ for $(b,s)\notin \Omega$, and satisfies all courier utility being large $u_d=\bar{u}_d>\max_{bs}\{v_b(s)\}$. Define an allocation $\mathbf{x}$
    \begin{equation*}
        x_{bsd}=
        \begin{cases}
            1 & \text{ if } (b,s)=o\in\Omega \text{ and } y_{od}=1\\
            0 & \text{ otherwise}
        \end{cases}
    \end{equation*} 
    Define a purchase price
    \begin{equation*}
        p_{s}=
        \begin{cases}
            v_b(s) & \text{ if } \exists (b,s)\in\Omega\\
            0 & \text{ otherwise}
        \end{cases}
    \end{equation*} 
    Set tips to be zero $\mathbf{t}=0$. Then $(\mathbf{p},\bar{\mathbf{w}},\mathbf{t},\mathbf{x})$ is a with-tip equilibrium. To see this, first it satisfies that unsold store has purchase price zero, and undelivered orders have compensation and tips zero. Couriers incentives are satisfied from the definition of courier plan. For any buyer $b$ that buys from store $s_\mathbf{x}(b)$, the minimum tip for store $s'\neq s_\mathbf{x}(b)$ is larger than its valuation, i.e., $\underline{t}_{bs'}>v_b(s')$. This is because to have any courier $d$ delivers from $s'$ to $b$, the tip must satisfy $t+p_{bs'}=t\geq \bar{u}_d>\max_{bs}\{v_b(s)\}$. However, buyers $b$ only needs to pay zero tips for $s_\mathbf{x}(b)$: $\underline{t}_{bs_\mathbf{x}(b)}=t_{bs_\mathbf{x}(b)}=0$. Buyers incentives are also satisfied.
\end{proof}

Although equilibria always exists for with-tip equilibrium, they can be inefficient. We will discuss the welfare aspect of equilibrium in  \cref{sec:general_markets} and \cref{sec:market_structures}.

\subsection{The Role of Tips}\label{sec:role_tips}
We now further illustrate the role of tips in pricing. Recall  \cref{example:without_tip_bad}, where the only without-tip equilibrium allocation is also supported in a with-tip equilibrium. We show this is not a coincidence.
\begin{restatable}{proposition}{paretoDominant}
\label{lem:pareto_dominant}
    For a market, if $(\mathbf{p},\mathbf{w},\mathbf{x})$ is a without-tip equilibrium, then $(\mathbf{p},\mathbf{w},\mathbf{t},\mathbf{x})$ is a with-tip equilibrium where $\mathbf{t}=0$. 
\end{restatable}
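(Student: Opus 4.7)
The plan is to set $\mathbf{t} = \mathbf{0}$ and check each of the four conditions in \cref{def:with_tip_eq} using the corresponding conditions of the without-tip equilibrium $(\mathbf{p}, \mathbf{w}, \mathbf{x})$. The courier condition and the two ``zero-for-inactive'' conditions are almost immediate; the only real content lies in showing that the minimum-tip reformulation of a buyer's best response does not upset buyer-side incentives.

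First, I would observe that when all tips are zero, the courier utility $w_{bs} - c_d(b,s) + t_{bs}$ reduces to $w_{bs} - c_d(b,s)$, so $\BR_d(\mathbf{w}, \mathbf{0}) = \BR_d(\mathbf{w})$ for every courier $d$. Hence $o_\mathbf{x}(d) \in \BR_d(\mathbf{w}, \mathbf{0})$ follows directly from the without-tip courier condition. Similarly, unsold stores retain $p_s = 0$, undelivered orders retain $w_{bs} = 0$, and their tips are $0$ by construction.

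The key step is the buyer condition. Fix a buyer $b$ and consider any store $s' \in S$. By definition $\underline{t}_{b s'}(\mathbf{w}, \mathbf{0}) \geq 0$, so
\[
v_b(s') - p_{s'} - \underline{t}_{b s'} \;\leq\; v_b(s') - p_{s'}.
\]
Now split into cases. If $s_\mathbf{x}(b) \neq \emptyset$, let $d^\star = d_\mathbf{x}(b)$. Because $(\mathbf{p},\mathbf{w},\mathbf{x})$ is a without-tip equilibrium, $(b, s_\mathbf{x}(b)) \in \BR_{d^\star}(\mathbf{w}) = \BR_{d^\star}(\mathbf{w}, \mathbf{0})$, which means a zero tip on $(b, s_\mathbf{x}(b))$ already suffices to elicit delivery by $d^\star$. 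Therefore $\underline{t}_{b s_\mathbf{x}(b)} = 0$, giving
\[
v_b(s_\mathbf{x}(b)) - p_{s_\mathbf{x}(b)} - \underline{t}_{b s_\mathbf{x}(b)} \;=\; v_b(s_\mathbf{x}(b)) - p_{s_\mathbf{x}(b)} \;\geq\; v_b(s') - p_{s'} \;\geq\; v_b(s') - p_{s'} - \underline{t}_{b s'},
\]
where the middle inequality uses $s_\mathbf{x}(b) \in \BR_b(\mathbf{p})$. This shows $s_\mathbf{x}(b) \in \BR_b(\mathbf{p}, \mathbf{w}, \mathbf{0}_{-b})$ and also confirms the minimum-tip payment condition $t_{b s_\mathbf{x}(b)} = 0 = \underline{t}_{b s_\mathbf{x}(b)}$. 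If instead $s_\mathbf{x}(b) = \emptyset$, then $v_b(s') - p_{s'} \leq 0$ for all $s'$, so $v_b(s') - p_{s'} - \underline{t}_{b s'} \leq 0$ as well, making $\emptyset$ a best response in the with-tip sense.

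The proof is almost entirely bookkeeping; the one subtlety worth flagging is the equality $\underline{t}_{b s_\mathbf{x}(b)} = 0$, which is what lets the without-tip inequality $v_b(s_\mathbf{x}(b)) - p_{s_\mathbf{x}(b)} \geq v_b(s') - p_{s'}$ transfer directly to the tip-adjusted comparison. I do not anticipate a genuine obstacle here, since the nonnegativity of $\underline{t}_{b s'}$ only helps the matched store look more attractive relative to its alternatives.
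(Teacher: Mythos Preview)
Your proposal is correct and follows essentially the same approach as the paper's proof: verify the courier and zero-for-inactive conditions directly from $\mathbf{t}=\mathbf{0}$, then use $\underline{t}_{b s_\mathbf{x}(b)}=0$ together with $\underline{t}_{bs'}\geq 0$ to transfer the without-tip buyer best-response inequality to the with-tip comparison. Your argument is slightly more explicit (naming the delivering courier $d^\star$ to justify $\underline{t}_{b s_\mathbf{x}(b)}=0$, and separately handling $s_\mathbf{x}(b)=\emptyset$), but the underlying logic is identical.
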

The proof is relatively straightforward. At zero tips, the set of orders that maximize courier utility remains the same. A buyer $b$ who buys from store $s_\mathbf{x}(b)$ does not need to pay tips to get delivery from $s_\mathbf{x}(b)$, i.e., $\underline{t}_{bs_\mathbf{x}(b)}=0$. But $b$ does need to pay tips weakly larger than zero to get delivery for some other stores $s'\neq s_\mathbf{x}(b)$. So $s_\mathbf{x}(b)$ is still in the set of stores that maximize buyer incentives.

This proposition implies that with-tip equilibria weakly Pareto dominate without-tip equilibria. From a welfare perspective, the benefit of including tips in pricing is made even more apparent when we do not require an equilibrium to clear the market; i.e., unsold stores can have positive price and undelivered orders can have positive compensations and tips. In \cref{app:tips_eq_not_clear_market}, we prove that when equilibria do not need to clear the market, the optimal with-tip equilibrium welfare is always lower bounded above zero while the optimal without-tip equilibrium welfare can be zero.

While it may seem intuitive that the added flexibility of tips leads to higher welfare, we will see in \cref{lem:zero_tip_equivalence} that the welfare gain does not come from the actual (on-path) payment of tips, but rather from the off-path need of buyers to offer tips to secure delivery, pinpointing the role of tips as preventing deviations off path rather than for on-path transfers.
\begin{restatable}{proposition}{ZeroTipEquivalence}
\label{lem:zero_tip_equivalence}
    If there exists a with-tip equilibrium $(\mathbf{p},\mathbf{w},\mathbf{t},\mathbf{x})$, then there exists a with-tip equilibrium $(\mathbf{p}',\mathbf{w}',\mathbf{t}',\mathbf{x})$ with zero tips $\mathbf{t}'=0$.
\end{restatable}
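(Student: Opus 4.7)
The plan is to construct the zero-tip equilibrium by absorbing each equilibrium tip into both the delivery compensation of the order it was paid on and the purchase price of the store it was paid for. Specifically: keep the allocation $\mathbf{x}$; for each delivered order $(b,s)\in\mathbf{x}$ set $w'_{bs}=w_{bs}+t_{bs}$ and $p'_s=p_s+t_{bs}$; for each undelivered order set $w'_{bs}=0$; for each unsold store set $p'_s=0$; and set $\mathbf{t}'=0$. The market-clearing requirements on unsold stores and undelivered orders hold by construction, so it only remains to verify the best-response conditions.

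Courier-side verification is immediate. For any delivered order $(b,s)\in\mathbf{x}$, courier $d$'s utility in the new system is $w'_{bs}+t'_{bs}-c_d(b,s)=(w_{bs}+t_{bs})+0-c_d(b,s)=w_{bs}+t_{bs}-c_d(b,s)$, matching the original; for any undelivered order, both $w$ and $t$ equal $0$ in both systems. Hence each courier's utility from every order is unchanged, so $\BR_d(\mathbf{w}',0)=\BR_d(\mathbf{w},\mathbf{t})$ and in particular $o_\mathbf{x}(d)\in\BR_d(\mathbf{w}',0)$.

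The heart of the argument is the buyer-side verification. Writing $s=s_\mathbf{x}(b)$, buyer $b$'s on-path utility is unchanged, since $v_b(s)-p'_s-0=v_b(s)-p_s-t_{bs}$. For a counterfactual deviation to some $s'\neq s$, the goal is to show $p'_{s'}+\underline{t}'_{bs'}\geq p_{s'}+\underline{t}_{bs'}$. The inequality $p'_{s'}\geq p_{s'}$ is immediate (equality when $s'$ is unsold; $p'_{s'}=p_{s'}+t_{b's'}\geq p_{s'}$ when $s'$ is sold to some $b'\neq b$). The crux is $\underline{t}'_{bs'}\geq\underline{t}_{bs'}$, which I will establish by tracking, for each courier $d$, her maximum alternative utility under each counterfactual (in each of which buyer $b$ optimally zeroes out her remaining tip components to minimize competition for the courier she hopes to attract to $(b,s')$). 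For couriers $d\neq d_\mathbf{x}(b)$, utilities from every order are identical in the two systems, so the minimum tip each such courier demands to deliver $(b,s')$ is the same. For $d_\mathbf{x}(b)$, the original counterfactual strictly lowers her utility from $(b,s)$ by $t_{bs}$, making her cheaper to lure away, whereas the boosted compensation $w'_{bs}=w_{bs}+t_{bs}$ restores that utility to her equilibrium value $U_{d_\mathbf{x}(b)}$; hence the tip $b$ must offer to entice her weakly increases. Taking the minimum over couriers yields $\underline{t}'_{bs'}\geq\underline{t}_{bs'}$.

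The main obstacle is the counterfactual bookkeeping in the previous paragraph: one has to argue that the buyer's optimal counterfactual tip vector indeed sets every component other than $t_{bs'}$ to $0$ (since raising any such component only strengthens couriers' outside options), and then carry out a careful order-by-order comparison of courier utilities between the two counterfactuals, treating $d_\mathbf{x}(b)$ separately from the other couriers. Once this is done, combining $p'_{s'}\geq p_{s'}$ with $\underline{t}'_{bs'}\geq\underline{t}_{bs'}$ shows that every deviation to $s'$ is weakly worse under the new prices, so $s\in\BR_b(\mathbf{p}',\mathbf{w}',0)$. Finally, the minimum-tip condition on the allocated store holds because $d_\mathbf{x}(b)$ still attains utility $U_{d_\mathbf{x}(b)}$ from $(b,s)$ at zero tip under $w'_{bs}$, so $\underline{t}'_{bs}=0=t'_{bs}$, completing the verification that $(\mathbf{p}',\mathbf{w}',0,\mathbf{x})$ is a with-tip equilibrium.
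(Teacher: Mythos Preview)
Your construction and overall strategy are exactly the paper's: absorb each realized tip $t_{bs_\mathbf{x}(b)}$ into both $p'_{s_\mathbf{x}(b)}$ and $w'_{bs_\mathbf{x}(b)}$, observe that courier utilities are unchanged order by order, and then show each buyer's deviation becomes weakly less attractive via $p'_{s'}\geq p_{s'}$ and $\underline{t}'_{bs'}\geq\underline{t}_{bs'}$.

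One imprecision worth fixing: your case split on $d=d_\mathbf{x}(b)$ versus $d\neq d_\mathbf{x}(b)$ rests on the claim that for the latter ``utilities from every order are identical in the two systems,'' but in the counterfactual where $b$ zeroes her own tips, \emph{every} courier's utility from the single order $(b,s_\mathbf{x}(b))$ rises by $t_{bs_\mathbf{x}(b)}$ in the primed system (since $w'_{bs_\mathbf{x}(b)}$ has absorbed that tip while $b$'s counterfactual tip there is zero in both), not just that of $d_\mathbf{x}(b)$. The effect only pushes the required tip upward, so $\underline{t}'_{bs'}\geq\underline{t}_{bs'}$ still follows and no case split is needed. (The paper actually asserts the equality $\underline{t}'_{bs'}=\underline{t}_{bs'}$, which overlooks this same term; your inequality is what genuinely holds and is all that is needed.) You should also treat $s_\mathbf{x}(b)=\emptyset$ explicitly, as the paper does; it is immediate since none of $b$'s orders carry positive tip or compensation in either system.
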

To prove this, we construct $\mathbf{p}',\mathbf{w}'$ in a way that directly incorporates tips $\mathbf{t}$ into the purchase price and delivery compensation. That is, for $x_{bsd}=1$, set $p'_{s}=p_s+t_{bs}$ and $w'_{bs}=w_{bs}+t_{bs}$. The crux of this proof is to show that the minimum tip for a buyer $b$ to get delivery from a store $s\neq s_\mathbf{x}(b)$ is the same between $(\mathbf{p},\mathbf{w},\mathbf{t},\mathbf{x})$ and $(\mathbf{p}',\mathbf{w}',\mathbf{t}',\mathbf{x})$. With the same minimum tip, we can prove that $s_\mathbf{x}(b)$ is still in the set of stores that maximize buyer utility.
The proposition will play a crucial role in \cref{sec:general_markets} to determining whether an allocation is supported in some with-tip equilibrium.

\cref{lem:pareto_dominant,lem:zero_tip_equivalence} together offer an interpretation of the role of tips on delivery platforms. In the with-tip equilibrium, the platform can set prices and compensations that require buyers to pay zero tips in equilibrium, but very high tip should buyers deviate. In this sense, tips serves as a tool for price discrimination against buyers who deviate from equilibrium. It is interesting to note that ride-sharing platforms have been reported to engage in price discrimination even regardless of tips \citep{pandey2021disparate}, charging different prices for identical trips for different riders\footnote{See online reports from \url{https://www.reddit.com/r/uber/comments/1jpe7ks/charging_different_prices_for_the_exact_same_ride/,https://www.reddit.com/r/uber/comments/1629oz9/price_discrimination/, https://www.theguardian.com/commentisfree/2018/apr/13/uber-lyft-prices-personalized-data, https://iddp.gwu.edu/researchers-find-racial-discrimination-dynamic-pricing-algorithms-used-uber-lyft-and-others}}. One might speculate that for this reason, the ability to use tips for price discrimination would be less valuable for the platform in such settings, and indeed tips are only specified after delivery for such platforms.

\section{Markets without any Structural Constraints}\label{sec:general_markets}

In the sequel, we focus on the welfare properties of the with-tip equilibrium. For simplicity, we use ``equilibrium'' to refer to a with-tip equilibrium and ``without-tip equilibrium'' to denote the alternative. 

We first introduce a market where all equilibria are inefficient. We  prove that computing the optimal equilibrium welfare, as well as the optimal welfare over all feasible allocations, is NP-hard. Given this computational challenge, we turn to characterizing all equilibrium allocations in \cref{sec:char_eq_alloc} and show that for any allocation, we can check whether there exists an equilibrium that contains this allocation in polynomial time. Most proofs are presented in \cref{app:general_markets}.

\begin{example}
    The market in Figure~\ref{fig:market_clearing} is an example where all equilibria have low welfare compared with the optimal welfare. Let $\kappa>2$ be a constant.  Both buyers value both stores at $1$, and the two couriers have costs $c_{d_1}(b_1s_1)=0,c_{d_1}(b_1s_2)=c_{d_1}(b_2s_1)=c_{d_2}(b_1s_1)=c_{d_2}(b_2s_2)=\kappa,c_{d_1}(b_2s_2)=c_{d_2}(b_2s_1)=0.5,c_{d_2}(b_1s_2)=0.49$. The optimal welfare in this market is $1$, realized by $b_1$ buys from $s_1$ delivered by $d_1$. We show the optimal equilibrium welfare is $2-\kappa$ and can be very inefficient when $\kappa$ is large.

    Consider allocations where both buyers buy. These allocations have a maximum welfare of $2-\kappa<0$. One such equilibrium is specified by $x_{b_2s_2d_2}=x_{b_1s_1d_1}=1,p_{s_1}=p_{s_2}=1,w_{b_1s_2}=w_{b_2s_1}=0,w_{b_1s_1}=w_{b_2s_2}=\kappa, \mathbf{t}=0$.

    We now show there is no equilibrium where only one buyer buys. If such an equilibrium exists, the store not bought from has zero purchase price. And the courier with the lower cost between $d_1$ and $d_2$ delivers. Otherwise, the idle courier with a lower cost will want to deliver as well.
    \begin{itemize}
        \item If only $b_1$ buys from $s_1$. $d_1$ delivers, $d_2$ has zero utility. Buyer $b_1$ can buy from $s_2$ with the minimum tip that makes $d_2$ indifferent to deliver $\underline{t}_{b_1s_2}=0.49$, resulting in a buyer utility of $0.51$. To make $b_1$ buy from $s_1$, $p_1\leq 0.49$. But then $b_2$ can buy $s_1$ with tip $0.5$, resulting in a positive utility $1-0.5-0.49>0$.
        \item If only $b_1$ buys from $s_2$. $d_2$ delivers, $d_1$ has zero utility. $b_1$ can buy $s_1$ with zero tip and utility $1$, so $p_2=0$. But then $b_2$ can tip $d_1$ with $\underline{t}_{b_2s_2}=0.5$ to buy from $s_2$ with positive utility $1-0.5>0$.
    \end{itemize}
    With the same logic, one can show the other two allocations ``Only $b_2$ buys from $s_1$'' and ``Only $b_2$ buys from $s_2$'' are not in any equilibrium.
\end{example}
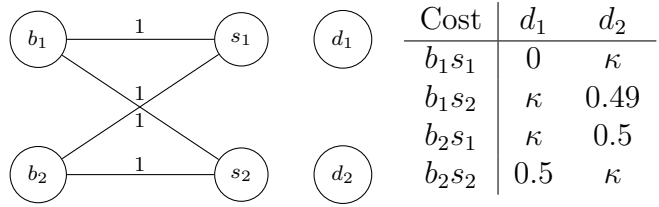
\begin{figure}[t]
    \centering
    \begin{tblr}{}
    \centering
\begin{tikzpicture}[baseline=(current bounding box.center),scale=0.9]
    \node[draw, shape=circle, minimum size=0.1cm] (1) at (-1, 2) {\scriptsize $b_1$};
    \node[draw, shape=circle, minimum size=0.1cm] (2) at (-1, 0) {\scriptsize $b_2$};
    \node[draw, shape=circle, minimum size=0.1cm] (3) at (2, 2) {\scriptsize $s_1$};
    \node[draw, shape=circle, minimum size=0.1cm] (4) at (2, 0) {\scriptsize $s_2$};
    \node[draw, shape=circle, minimum size=0.1cm] (5) at (3.5, 2) {\scriptsize $d_1$};
    \node[draw, shape=circle, minimum size=0.1cm] (6) at (3.5, 0) {\scriptsize $d_2$};
    \draw (1) to (3);
    \draw (1) to (4);
    \draw (2) to (3);
    \draw (2) to (4);
    \node[] at (0.5, 2.2){\scriptsize 1};
    \node[] at (0.5, 1.2){\scriptsize 1};
    \node[] at (0.5, 0.8){\scriptsize 1};
    \node[] at (0.5, 0.15){\scriptsize 1};
\end{tikzpicture}
& 
\begin{tabular}{c|cc}
Cost & $d_1$  & $d_2$\\
\hline
$b_1 s_1$ & $0$  & $\kappa$\\
$b_1 s_2$ & $\kappa$  & $0.49$\\
$b_2 s_1$ & $\kappa$  & $0.5$\\
$b_2 s_2$ & $0.5$  & $\kappa$
\end{tabular}
\end{tblr}
    \caption{A market where all equilibria are inefficient. Both buyers value both stores at $1$. The two courier have different cost of delivery.
    \label{fig:market_clearing}}
\end{figure}

We show through a simple reduction from 3-dimensional matching (3DM), that finding the optimal equilibrium welfare is NP-hard.
\begin{theorem}\label{lem:OPT_eq_hard}
    It is NP-hard to determine if the optimal equilibrium welfare is equal to $l$ in a market where there is equal number of buyers, stores and couriers $m=n=l$, and all buyers' valuations are 1, and all couriers' costs are in $\{0,1\}$.
\end{theorem}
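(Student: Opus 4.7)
The plan is to reduce from the NP-complete 3-dimensional matching problem (3DM). An instance consists of three disjoint sets $X,Y,Z$ of size $q$ each and a collection of triples $T\subseteq X\times Y\times Z$, and asks whether there exists a perfect matching $M\subseteq T$ of size $q$. Given such an instance, I construct a market by taking $B=X$, $S=Y$, $D=Z$ (so $m=n=l=q$), setting $v_b(s)=1$ for every buyer--store pair, and setting the courier costs by $c_d(b,s)=0$ if $(b,s,d)\in T$ and $c_d(b,s)=1$ otherwise. The claim to prove is that the optimal equilibrium welfare equals $l$ if and only if the 3DM instance has a perfect matching.

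The easy direction is the ``only if.'' Since every trade contributes at most $v_b(s)-c_d(b,s)\le 1$ and any feasible allocation has at most $l$ trades, the optimal welfare is at most $l$, with equality only if all $l$ buyers, $l$ stores, and $l$ couriers participate and every realized triple has cost $0$, i.e., lies in $T$. Because the optimal equilibrium welfare is bounded by the optimal welfare, this direction is immediate.

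The main step is the ``if'' direction: from a perfect matching $M=\{(x_i,y_i,z_i)\}_{i=1}^{q}$ I must exhibit an equilibrium with welfare $l$. I will use the allocation $\mathbf{x}$ with $x_{x_i y_i z_i}=1$, purchase prices $p_s=1$ for every $s\in S$, delivery compensations $\mathbf{w}=0$, and tips $\mathbf{t}=0$. Every store is sold and every order in the matching is delivered, so the two ``zero price / zero compensation / zero tip off-path'' conditions of \cref{def:with_tip_eq} are satisfied automatically. For the courier-side incentives, each matched courier $z_i$ gets utility $0$ on the order $(x_i,y_i)$ and utility $-c_{z_i}(\cdot)\le 0$ on any other order, placing $(x_i,y_i)$ in $\BR_{z_i}(\mathbf{w},\mathbf{t})$. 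For the buyer-side incentives, each matched buyer $x_i$ gets utility $0$ from her matched store at minimum tip $\underline{t}_{x_i y_i}=0$ (the cost-$0$ courier $z_i$ is already willing to deliver for zero compensation and zero tip); for any other store $y_k$, since every courier has equilibrium utility $0$, the minimum tip reduces to $\underline{t}_{x_i y_k}=\min_d c_d(x_i,y_k)\in\{0,1\}$, yielding deviation utility $1-1-\underline{t}_{x_i y_k}\le 0$, so $y_i\in\BR_{x_i}(\mathbf{p},\mathbf{w},\mathbf{t}_{-x_i})$.

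The main obstacle is the buyer-side incentive check, since the minimum-tip computation is the one nontrivial ingredient of the with-tip equilibrium definition. The key simplification is that because every courier has on-path utility exactly $0$ and all off-path compensations are $0$, the formula for $\underline{t}_{bs}$ collapses to $\min_d c_d(b,s)$, which then fits neatly with the fact that all valuations are $1$ and all prices are $1$. Assembling both directions establishes that the decision ``optimal equilibrium welfare $=l$'' is equivalent to 3DM, completing the proof.
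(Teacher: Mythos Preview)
Your proof is correct and follows the same reduction from 3DM with the identical market construction as the paper. The one noteworthy difference is in how you exhibit the equilibrium when a perfect matching exists: the paper invokes \cref{lem:exist_courier_plan_serve_equal_size} to obtain a courier plan $(\bar{\mathbf{w}},\mathbf{y})$ with large compensations (so that every courier's on-path utility exceeds $\max_{b,s}v_b(s)$, forcing strictly positive deviation tips), whereas you take the more elementary route of setting $\mathbf{w}=0$ and observing directly that with all costs in $\{0,1\}$ and all on-path costs equal to $0$, every courier already best-responds and the minimum deviation tip collapses to $\min_d c_d(b,s)\ge 0$. Your argument avoids the auxiliary lemma entirely and is self-contained; the paper's version is more black-box but plugs into machinery it has already developed for the general characterization in \cref{sec:char_eq_alloc}.
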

\begin{proof}
    We reduce from the 3DM, one of the first 21 NP-complete problems proved \citep{karp1975computational}. 3DM asks the following question: Given a 3-regular hypergraph $G=(B,S,D)$ with hyperedges $T\subset B \times S \times D$ where $|B|=|S|=|D|=l$, is there a perfect matching? 
    
    We construct a following market $M$ with $m$ buyers, $n$ stores and $l$ couriers where $m=n=l$. Buyers' valuations are always 1. Couriers' costs are set as 
    \begin{equation*}
    c_d(b,s) =
    \begin{cases}
      0 & \text{if $ (b,s,d)\in T$,}\\
      1 & \text{otherwise.}
    \end{cases}       
    \end{equation*}
  
    If $G$ has a perfect matching of size $l$, this perfect matching defines a subset of orders $\Omega=\{(b,s) |\; \exists d \text{ such that } (b,s,d) \text{ is in the perfect matching}\}$. It holds that $|\Omega|=l$. By \cref{lem:exist_courier_plan_serve_equal_size}, there exists a courier plan $(\bar{\mathbf{w}},\mathbf{y})$ that serves $\Omega$ where all orders $(b,s)\notin \Omega$ has delivery compensation $\bar{w}_{bs}=0$. We  define a purchase price $\mathbf{p}=1$, a tip $\mathbf{t}=0$, an allocation $\x$ where $x_{bsd}=1$ iff and only if $(b,s)=o\in\Omega \text{ and } y_{od}=1$.
    
    Then $(\mathbf{p},\bar{\mathbf{w}},\mathbf{t},\mathbf{x})$ is an equilibrium. This is because all buyers buy with minimum tip 0 and utility 0. If a buyer $b$ buy from a store other than $s_\mathbf{x}(b)$, she needs to pay nonzero tips and a purchase price of 1, resulting in a deviation utility weakly less than 0. The construction of courier plan guarantees courier incentives. By our construction, this equilibrium has welfare $n$.

   For the other direction, If $G$ does not have a perfect matching of size $n$, then by the construction of the market $M$, there does not exists any allocation $\mathbf{x}$ with welfare $n$. So the optimal equilibrium welfare is smaller than $n$. We have shown that $G$ has a perfect matching if and only if there exists an equilibrium with welfare $l$. This completes the proof.
\end{proof}
The computational complexity is not only limited to finding the optimal equilibrium welfare. With the same construction, it can be shown that finding the optimal welfare is also hard. 
\begin{restatable}{theorem}{OPTHard}
    It is NP-hard to determine if the optimal welfare is equal to $l$ in a market where there is equal number of buyers, stores and couriers $m=n=l$, and all buyers' valuations are 1, and all couriers' costs are in $\{0,1\}$.
\end{restatable}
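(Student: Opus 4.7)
The plan is to reuse verbatim the reduction from 3-dimensional matching (3DM) that was employed in the proof of \cref{lem:OPT_eq_hard}, and to observe that the argument there already establishes both directions of the equivalence between ``$G$ has a perfect matching'' and ``the optimal welfare of $M$ equals $l$,'' even before any equilibrium constraints are imposed. So structurally, I plan to set up the same market $M$ with $m=n=l$, buyer valuations identically $1$, and courier costs $c_d(b,s)=0$ if $(b,s,d)\in T$ and $c_d(b,s)=1$ otherwise, and then I simply have to extract the welfare statement from that construction.

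For the forward direction, if $G$ has a perfect matching of size $l$, I pick the allocation $\mathbf{x}$ in which $x_{bsd}=1$ iff $(b,s,d)$ lies in the perfect matching. Each of the $l$ realized trades contributes $v_b(s)-c_d(b,s)=1-0=1$ to the welfare, so $W(\mathbf{x})=l$. Combined with the trivial upper bound $W(\mathbf{x})\leq \sum_{bsd}x_{bsd}\cdot v_b(s)\leq l$ (using $v_b(s)=1$ and the unit-demand/supply/capacity constraints, which cap the total number of matched triples at $l$), this shows $\mathit{OPT}=l$.

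For the reverse direction, I plan to show the contrapositive: if $G$ has no perfect matching of size $l$, then no feasible allocation achieves welfare $l$. The key observation is that because valuations are $1$ and costs are in $\{0,1\}$, every realized trade contributes at most $1$ to welfare, with equality only when $c_d(b,s)=0$, i.e., only when $(b,s,d)\in T$. Since at most $l$ trades can be realized (by unit capacity), achieving welfare $l$ would force all $l$ buyers, $l$ stores, and $l$ couriers to be matched through triples in $T$---which is precisely a perfect 3DM in $G$, contradicting the assumption. Hence $\mathit{OPT}<l$.

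I do not anticipate a real obstacle: the hardness of welfare maximization is in some sense the ``core'' of the hardness of equilibrium welfare maximization, and the reduction is already set up. The only subtle point to get right is the direction where I argue that welfare $l$ forces a full tripartite matching using only cost-$0$ edges; this is where I need to combine the unit-capacity constraints with the $\{0,1\}$-valued cost structure carefully. Once that is written out, the reduction is polynomial and 3DM is NP-complete \citep{karp1975computational}, so NP-hardness follows immediately.
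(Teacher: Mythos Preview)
Your proposal is correct and matches the paper's own proof essentially verbatim: the paper uses the identical 3DM reduction (same market $M$, same cost assignment), argues that a perfect matching yields an allocation of welfare $l$, and conversely that any allocation of welfare $l$ must use only zero-cost triples and involve all agents, hence corresponds to a perfect matching in $G$. Your write-up is slightly more explicit about the upper bound $W(\mathbf{x})\leq l$ and the contrapositive structure, but the logic is the same.
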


Seeing the computational complexity, one natural direction is to extend the approximation results in the maximum-weight version of 3DM to our setting, and develop a polynomial-time algorithm that finds equilibria with good welfare. However, most approximations results for maximum-weight 3DM rely on local-search algorithms that swap subsets of individual matches. These local-search algorithms do not consider buyers and couriers' incentives, and the swaps
break the equilibrium requirements, making them unsuitable for our economic setting.

\subsection{Characterizing Equilibrium allocations}\label{sec:char_eq_alloc}

Given this computational challenge, we turn to characterizing the set of equilibrium allocations. Given a feasible allocation $\mathbf{x}$, we ask whether there is an efficient way to determine if there is an equilibrium $(\mathbf{p},\mathbf{w},\mathbf{t},\mathbf{x})$ for some $\mathbf{p},\mathbf{w},\mathbf{t}$. We answer this question affirmatively.

A feasible allocation $\mathbf{x}$ induces a subset of orders $\Omega_\mathbf{x}=\{(b,s)|\sum_d x_{bsd}=1\}$ to be delivered by couriers. Since this allocation is feasible, $|\Omega_\mathbf{x}|\leq l$. \cref{lem:zero_tip_equivalence} says it is without loss to set $\bt=0$. So given $\x$, we only need to check for the existence of $\p,\w$. Furthermore by the equilibrium requirements, $w_{bs}=0$ for $(b,s)\notin \Omega_\x$. 

\cref{lem:exist_courier_plan_serve} and \cref{lem:exist_courier_plan_serve_equal_size} show the existence of a courier plan $(\bar{\mathbf{w}},\mathbf{y})$ that serves $\Omega_\x$ and where couriers achieve utility $\bar{u}_d$. As $\bt=0$, when $(b,s)\in\Omega_\x$, the courier plan serving $\Omega_\x$ already guarantees that some couriers are willing to deliver $(b,s)$ so $\ut_{bs}=0$. When $(b,s)\notin\Omega_\x$, to have courier $d$ deliver for store $s$, the tip must be large enough to match couriers current utility in the courier plan $$\ut_{bs}+w_{bs}-c_d(b,s)=\ut_{bs}+0-c_d(b,s) \geq \bar{u}_d.$$

The minimum tip buyer $b$ needs to offer for some courier to deliver from a store $s$ is
\begin{equation}\label{eq:highest_minimum_tip}
    \ut_{bs} = \begin{cases}
        0 & \text{if } (b,s)\in\Omega_x, \\
        \min_d c_d(b,s)+\bar{u}_d & \text{otherwise.}
    \end{cases}     
\end{equation}

When $|\Omega_\x|<l$, $\bar{u}_d$ is the highest possible utility of courier $d$ in any courier plan that serves $\Omega_\x$. Then for an order $(b,s)\notin\Omega_\x$, $\ut_{bs}$ is the highest possible minimum tip required for some couriers to serve this order. When $|\Omega_\x|=l$, $\bar{u}_d >\max_{bs}v_b(s)$ so $\ut_{bs}$ is higher than any buyer valuation. The role of this high tips is to remove a buyer $b$'s incentives to break equilibrium and purchase from another store $s'\neq s_\x(b)$. The following lemma formalizes this intuition.

\begin{restatable}{lemma}{highestcourierPriceEq}
\label{lem:highest_courier_price_eq}
    If a feasible allocation $\mathbf{x}$ is in some equilibrium $(\p,\w,\bt,\x)$ where $\bt=0$, then $(\p,\bar{\w},\bt,\x)$ is also an equilibrium. 
\end{restatable}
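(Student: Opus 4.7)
The plan is to take the given equilibrium $(\p,\w,\bt,\x)$ with $\bt=0$ and show that replacing $\w$ with the ``maximum-utility'' compensation $\bar{\w}$ produced by Lemmas~\ref{lem:exist_courier_plan_serve} and \ref{lem:exist_courier_plan_serve_equal_size} applied to $\Omega_\x$ still yields a with-tip equilibrium with the same allocation. The underlying intuition is a VCG-style one: raising courier compensations to the highest level compatible with a courier plan serving $\Omega_\x$ leaves couriers indifferent among their best responses (so $\x$ is still implementable on the courier side) while strictly raising the tip needed to poach a courier for an off-allocation order, which only shrinks the set of attractive deviations for buyers.

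First, I would observe that since couriers best-respond in $(\p,\w,\bt,\x)$, the matching $\mathbf{y}_\x$ on the courier side induced by $\x$, together with $\w$, is a courier plan serving $\Omega_\x$. By Lemmas~\ref{lem:exist_courier_plan_serve} and \ref{lem:exist_courier_plan_serve_equal_size}, $\mathbf{y}_\x$ is therefore a minimum-cost matching covering $\Omega_\x$ in $G_D$, and every courier's utility satisfies $u_d^{\w} = w_{bs}-c_d(b,s)\le \bar{u}_d$ when $(b,s)=o_\x(d)$. Invoking the existence clauses of those two lemmas, I obtain $\bar{\w}$ such that $(\bar{\w},\mathbf{y}_\x)$ is a courier plan serving $\Omega_\x$, every courier $d$ attains utility exactly $\bar{u}_d$, and $\bar{w}_{bs}=0$ for all $(b,s)\notin\Omega_\x$.

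Next I would verify the four requirements of Definition~\ref{def:with_tip_eq} for $(\p,\bar{\w},\mathbf{0},\x)$. Courier incentives are immediate from $(\bar{\w},\mathbf{y}_\x)$ being a courier plan, so $o_\x(d)\in\BR_d(\bar{\w},\mathbf{0})$ for every $d$. The zero-price condition for unsold stores is inherited from the original equilibrium since $\p$ is unchanged, and the zero-compensation/zero-tip conditions for undelivered orders hold by construction of $\bar{\w}$ and $\bt=0$. The crux is buyer incentives. Using equation~\eqref{eq:highest_minimum_tip}, for $(b,s)\in\Omega_\x$ both $\w$ and $\bar{\w}$ give $\ut_{bs}=0$; for $(b,s)\notin\Omega_\x$ we have $\ut_{bs}^{\bar{\w}} = \min_d\bigl[c_d(b,s)+\bar{u}_d\bigr]\ge \min_d\bigl[c_d(b,s)+u_d^{\w}\bigr] = \ut_{bs}^{\w}$, because $u_d^{\w}\le \bar{u}_d$ for every $d$. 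Hence for each buyer $b$, the perceived utility of $s_\x(b)$ is unchanged while the perceived utility of every alternative store weakly decreases, so $s_\x(b)\in\BR_b(\p,\w,\mathbf{0}_{-b})$ implies $s_\x(b)\in\BR_b(\p,\bar{\w},\mathbf{0}_{-b})$, and the minimum-tip payment condition $t_{b s_\x(b)}=\ut_{b s_\x(b)}=0$ is satisfied.

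The main obstacle I expect is justifying that Lemmas~\ref{lem:exist_courier_plan_serve} and \ref{lem:exist_courier_plan_serve_equal_size} produce a compensation $\bar{\w}$ that is compatible with the specific minimum-cost matching $\mathbf{y}_\x$ coming from $\x$, rather than with some a~priori different minimum-cost matching covering $\Omega_\x$. I would resolve this by appealing to the fact that $\bar{u}_d$ is defined purely through $C_{\Omega_\x}(G_D)$ and $C_{\Omega_\x}(G_D\setminus d)$, quantities that do not depend on which minimum-cost matching is chosen, and so the construction of those lemmas carries over unchanged when the underlying matching is taken to be $\mathbf{y}_\x$. Once this point is settled, the rest of the argument is a short monotonicity comparison of minimum tips, as sketched above.
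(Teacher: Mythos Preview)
Your proposal is correct and follows essentially the same route as the paper: both argue that the courier side of $\x$ together with $\bar{\w}$ is still a courier plan serving $\Omega_\x$ (the paper does this via the second welfare theorem in the auxiliary two-sided market $(\Omega_\x,D)$, which is exactly your ``$\bar{u}_d$ depends only on $C_{\Omega_\x}(G_D)$ and $C_{\Omega_\x}(G_D\setminus d)$'' point), and both conclude buyer incentives from the fact that minimum deviation tips weakly increase. One small caveat: your uniform inequality $u_d^{\w}\le \bar u_d$ is only \emph{stated} in Lemma~\ref{lem:exist_courier_plan_serve} (the case $|\Omega_\x|<l$), not in Lemma~\ref{lem:exist_courier_plan_serve_equal_size}; the paper instead handles $|\Omega_\x|=l$ separately by noting $\bar u_d>\max_{b,s}v_b(s)$ makes every deviation unprofitable outright, so if you want to keep your unified monotonicity argument you should either cite that separate bound or observe that the highest-Walrasian-price characterization behind both lemmas in fact yields $u_d^{\w}\le \bar u_d$ in the $|\Omega_\x|=l$ case as well.
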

The proof first shows that $\bar{\w}$ satisfies courier incentives. It then reasons that since a buyer $b$ does not offer a lower tip to buy from any other store than $s_\x(b)$ in $(\p,\w,\bt,\x)$, she does not offer a higher tip to buy from other stores in $(\p,\bar{\w},\bt,\x)$.

We have shown through \cref{lem:zero_tip_equivalence} and \cref{lem:highest_courier_price_eq} that given an allocation $\mathbf{x}$, it is without loss to check for an equilibrium where $\bt=0$ and delivery compensation equals to $\bar{\w}$. We are now ready to check if $\x$ is in some equilibrium.

Define a buyer allocation $\mathbf{z}$ that allocates stores to buyers to be $z_{bs}=\sum_d x_{bsd}$. Let $G_\x=(B,S)$ be the bipartite graph where one side are buyers, and the other side are stores. Each edge $(b,s)$ in $G_\x$ has weight 
\begin{equation*}
    v_{bs}^\x = \begin{cases}
        v_b(s) - \ut_{bs} & \text{ if } (b,s)\in\Omega_\x,\\
        v_b(s) & \text{otherwise.}
    \end{cases} 
\end{equation*}

Here, we  use $v^\x_{bs}$ to denote the edge weight in $G_\x$, because this edge weight corresponds to a buyer $b$'s realized value when buying from store $s$, after adjusting for the tips $\ut_{bs}$ she has to pay to get the delivery. 

\begin{restatable}{lemma}{testAlloc}
\label{lem:test_alloc}
    A feasible allocation $\mathbf{x}$ is in some equilibrium if and only if $\mathbf{z}$ is a maximum weight matching in $G_\x$. 
\end{restatable}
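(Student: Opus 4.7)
The plan is to reduce the with-tip equilibrium condition on $\x$ to a standard Walrasian equilibrium condition on $\mathbf{z}$ in the two-sided assignment market with modified valuations $v^\x_{bs}$, and then invoke the classical equivalence between Walrasian allocations and maximum-weight matchings.

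First, I will narrow down which compensations and tips we need to search over. By Proposition~\ref{lem:zero_tip_equivalence}, $\x$ is supported in some equilibrium if and only if it is supported in one with $\bt=0$. By Lemma~\ref{lem:highest_courier_price_eq}, we may further restrict attention to equilibria with compensation vector $\bar{\w}$ (and $\bar w_{bs}=0$ off of $\Omega_\x$). So checking whether $\x$ is in some equilibrium reduces to checking whether there exists $\p$, with $p_s=0$ on stores not bought from, such that every buyer's assignment satisfies the best-response condition against $(\p,\bar{\w},0)$. The courier condition and the conditions on undelivered orders hold by construction.

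Second, I will rewrite the buyer condition. With $\bt=0$ and compensation $\bar{\w}$, equation (\ref{eq:highest_minimum_tip}) pins down the minimum tip $\ut_{bs}$ required to attract delivery for every $(b,s)$: zero on $\Omega_\x$, and $\min_d c_d(b,s)+\bar u_d$ otherwise. Hence buyer $b$'s perceived utility from store $s$ is $v_b(s)-p_s-\ut_{bs}=v^\x_{bs}-p_s$, and buyer $b$'s best-response condition at $(\p,\bar{\w},0)$ becomes $s_\x(b)\in\argmax_{s\in S\cup\{\emptyset\}}\{v^\x_{bs}-p_s\}$, with the null store yielding utility zero. Together with the requirement $p_s=0$ on unsold stores, these are precisely the conditions for $(\p,\mathbf{z})$ to be a Walrasian equilibrium in the two-sided market on $(B,S)$ with buyer valuations $v^\x_{bs}$.

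Third, I will invoke the classical equivalence for unit-demand/unit-supply assignment markets: an allocation $\mathbf{z}$ is supported in a Walrasian equilibrium if and only if it is a maximum-weight matching. The forward direction is the first welfare theorem. The reverse direction follows from LP duality for the assignment problem on $G_\x$: the dual yields prices $p_s\geq 0$ and buyer utilities $u_b\geq 0$ satisfying $u_b+p_s\geq v^\x_{bs}$ with equality on matched pairs, and complementary slackness forces $p_s=0$ on unsold stores and $u_b=0$ on unmatched buyers---which together give a Walrasian equilibrium supporting $\mathbf{z}$. Chaining the two equivalences yields the claim.

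The main obstacle will be the careful bookkeeping in the buyer-side translation, in particular (a) ensuring that buyers with $s_\x(b)=\emptyset$ have the null option as a best response both on the with-tip side and on the Walrasian side, and (b) recovering the zero-price requirement for unsold stores---imposed separately in the equilibrium definition---directly from LP duality rather than imposing it by hand. Both are resolved by complementary slackness once the assignment LP is written with nonnegative prices and utilities.
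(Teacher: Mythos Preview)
Your proposal is correct and follows essentially the same approach as the paper: reduce via Proposition~\ref{lem:zero_tip_equivalence} and Lemma~\ref{lem:highest_courier_price_eq} to the canonical form $(\p,\bar{\w},0,\x)$, translate the buyer best-response condition into a Walrasian condition in the two-sided market $G_\x$ with valuations $v^\x_{bs}=v_b(s)-\ut_{bs}$, and then invoke the first welfare theorem for the ``only if'' direction and the existence of Walrasian prices for a maximum-weight matching for the ``if'' direction. The only cosmetic difference is that you spell out the LP-duality/complementary-slackness argument to obtain $p_s=0$ on unsold stores, whereas the paper simply cites the existence of such Walrasian prices.
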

The proof for this lemma shows that $\mathbf{x}$ is an equilibrium if and only if $\mathbf{z}$ is a Walrasian equilibrium allocation in the two-sided market $G_x$, where buyer valuations for stores are adjusted by tips.

\cref{lem:test_alloc} offers a way to test if $\x$ is an equilibrium allocation from a maximum weight matching perspective. We rely on this lemma in \cref{sec:market_structures} and \cref{sec:market_structures_buyers} to test if the welfare-optimal allocation is in an equilibrium. An immediate corollary is that any feasible allocation $\x$ where all couriers make a delivery is supported in some equilibrium. This is because the platform can subsidize couriers, so that the minimum tip $\ut_{bs}$ is much higher than buyer $b$'s valuations for $s\neq s_\x(b)$.
\begin{restatable}{corollary}{allcouriersDeliver}
\label{cor:all_courier_deliver}
    If a feasible allocation $\x$ satisfies that $|\Omega_\x|=l$, then $\x$ is in some equilibrium. 
\end{restatable}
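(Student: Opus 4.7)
The plan is to reduce directly to Lemma \ref{lem:test_alloc}: it suffices to exhibit that the buyer allocation $\z$ induced by $\x$ (with $z_{bs}=\sum_d x_{bsd}$) is a maximum-weight matching in the bipartite graph $G_\x$. Observe first that $\Omega_\x$ itself already forms a matching in $G_\x$---this is just the feasibility of $\x$, which ensures that each buyer and each store appears in at most one order of $\Omega_\x$---and this matching is precisely $\z$.

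The crux is to exploit the fact that when $|\Omega_\x|=l$, Lemma \ref{lem:exist_courier_plan_serve_equal_size} supplies a courier plan $(\bar{\w},\mathbf{y})$ serving $\Omega_\x$ in which each courier attains the large utility $\bar{u}_d \geq \max_{b,s} v_b(s)$. Plugging this into the closed form for the minimum tip in Equation~(\ref{eq:highest_minimum_tip}), for every $(b,s)\notin\Omega_\x$ we obtain
\[
\ut_{bs} \;=\; \min_d\bigl(c_d(b,s) + \bar{u}_d\bigr) \;\geq\; \min_d \bar{u}_d \;\geq\; \max_{b',s'} v_{b'}(s'),
\]
using that $c_d(b,s)\geq 0$. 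Hence the adjusted edge weight $v^\x_{bs}=v_b(s)-\ut_{bs}$ is non-positive on every edge outside $\Omega_\x$, while edges inside $\Omega_\x$ (where $\ut_{bs}=0$) carry the non-negative weight $v_b(s)$.

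With this sign structure in hand, the maximum-weight property of $\z$ is immediate: any competing matching $\z'$ in $G_\x$ can only weakly improve by deleting edges of non-positive weight; after pruning all edges outside $\Omega_\x$ it becomes a sub-matching of $\z$, whose total weight is at most that of $\z$ since the remaining weights are non-negative. Invoking Lemma \ref{lem:test_alloc} then concludes that $\x$ is in some equilibrium.

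The main obstacle is essentially bookkeeping---the proof is an assembly of the preceding structural lemmas (\ref{lem:exist_courier_plan_serve_equal_size}, \ref{lem:zero_tip_equivalence}, \ref{lem:highest_courier_price_eq}, \ref{lem:test_alloc}) rather than a new idea. The one point requiring a little care is verifying that the bound $\bar{u}_d \geq \max_{b,s} v_b(s)$ really does drive every off-$\Omega_\x$ edge weight to be non-positive; this is precisely where the hypothesis $|\Omega_\x|=l$ (as opposed to $|\Omega_\x|<l$) is used, since in the strictly-fewer-orders case Lemma~\ref{lem:exist_courier_plan_serve} only supplies the smaller VCG-style bound $\bar{u}_d$, which need not dominate buyer valuations.
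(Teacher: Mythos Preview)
Your proposal is correct and follows essentially the same argument as the paper: invoke Lemma~\ref{lem:exist_courier_plan_serve_equal_size} to make $\bar u_d$ exceed all buyer valuations, conclude that off-$\Omega_\x$ edges in $G_\x$ carry non-positive weight while on-$\Omega_\x$ edges carry non-negative weight, and then apply Lemma~\ref{lem:test_alloc}. Your write-up is in fact slightly more careful than the paper's (e.g., spelling out why the sign structure forces $\z$ to be maximum weight), but the route is identical.
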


We summarize our results in this section in the following theorem. As the computation for $\bar{u}_d$ and $\ut_{bs}$ for every buyer, store, and courier takes polynomial time, we have:
\begin{restatable}{theorem}{PolyCheck}
\label{thm:poly_check}
    For any allocation $\mathbf{x}$, determining if there exists an equilibrium that includes $\mathbf{x}$ takes polynomial time.
\end{restatable}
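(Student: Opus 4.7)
The plan is to combine Lemmas \ref{lem:zero_tip_equivalence}, \ref{lem:highest_courier_price_eq}, and \ref{lem:test_alloc} into a concrete polynomial-time algorithm, so that the proof is essentially an accounting of how long each reduction step takes. By \cref{lem:zero_tip_equivalence} it suffices to search for equilibria with $\mathbf{t}=0$; by \cref{lem:highest_courier_price_eq} it further suffices to fix $\mathbf{w}=\bar{\mathbf{w}}$; and by \cref{lem:test_alloc} the question then reduces to a maximum-weight bipartite matching check in the graph $G_\mathbf{x}$. So the algorithm need only (i) compute $\bar{u}_d$ for every courier $d$, (ii) derive $\underline{t}_{bs}$ from Equation~(\ref{eq:highest_minimum_tip}), (iii) assemble the edge weights $v_{bs}^{\mathbf{x}}$ of $G_\mathbf{x}$, and (iv) check whether $\mathbf{z}$ achieves the maximum weight in $G_\mathbf{x}$.

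First, I would read off $\Omega_\mathbf{x}$ from $\mathbf{x}$ in $O(mnl)$ time. The value $\bar{u}_d$ is defined via the minimum-cost matchings $C_{\Omega_\mathbf{x}}(G_D)$ and $C_{\Omega_\mathbf{x}}(G_D\setminus d)$ on the bipartite graph $G_D=(O,D)$; each such matching can be computed by the Hungarian algorithm in time polynomial in $m$, $n$, and $l$. Doing this once for $G_D$ and once for each of the $l$ subgraphs $G_D\setminus d$ gives every $\bar{u}_d$ in overall polynomial time. When $|\Omega_\mathbf{x}|=l$, I additionally need $C_{\Omega_{-1}}(G_D\setminus d)$ for each $d$, which is again a polynomial number of minimum-cost matching computations. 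Step (ii) is then just a minimum over $d$ for each $(b,s)\notin\Omega_\mathbf{x}$, taking $O(mnl)$ time.

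For step (iv), I would compute a maximum-weight matching in $G_\mathbf{x}$ (again by the Hungarian algorithm) and compare its total weight to $\sum_{(b,s):z_{bs}=1} v_{bs}^{\mathbf{x}}$. By \cref{lem:test_alloc}, the answer to the equilibrium-existence question is ``yes'' exactly when these two quantities agree; otherwise the answer is ``no.'' If the answer is ``yes,'' the certificate $(\mathbf{p},\bar{\mathbf{w}},\mathbf{0},\mathbf{x})$ is recovered by reading off a Walrasian price vector $\mathbf{p}$ for the two-sided market $G_\mathbf{x}$, which can also be extracted in polynomial time from the matching computation.

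I do not expect a substantive obstacle: the structural work has already been done in the three lemmas above, and what remains is purely algorithmic. The only mild care point is ensuring that the Hungarian-algorithm invocations handle the two regimes $|\Omega_\mathbf{x}|<l$ and $|\Omega_\mathbf{x}|=l$ uniformly, since $\bar{u}_d$ is defined by slightly different expressions in \cref{lem:exist_courier_plan_serve} and \cref{lem:exist_courier_plan_serve_equal_size}; but in both regimes the formula reduces to a constant plus a polynomial number of minimum-cost matching values, so the overall running time remains polynomial in $m$, $n$, and $l$.
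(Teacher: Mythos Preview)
Your proposal is correct and follows essentially the same approach as the paper: reduce via \cref{lem:zero_tip_equivalence} and \cref{lem:highest_courier_price_eq} to the test in \cref{lem:test_alloc}, compute each $\bar{u}_d$ from the minimum-cost matching formulas in \cref{lem:exist_courier_plan_serve,lem:exist_courier_plan_serve_equal_size}, build $G_\mathbf{x}$, and check whether $\mathbf{z}$ is maximum-weight. The paper's proof is considerably terser and omits the explicit running-time accounting and certificate extraction you provide, but the underlying argument is identical.
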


\section{Markets with Structures}
While the previous section demonstrates that all equilibria can be inefficient and it is computationally intractable to find the optimal equilibrium welfare, in this section we identify structures on couriers' costs or buyers' valuations that overcome both impossibilities. We continue to use ``equilibrium'' to refer to with-tip equilibrium and ``without-tip'' equilibrium to denote the alternative.

\subsection{Structures on courier Costs}\label{sec:market_structures}
The structure on courier costs identifies components of cost. A courier's cost $c_d(b,s)$ can consist of three parts: (i) a buyer--store part $c(b,s)$, (ii) a courier--store part $c_d(s)$, and (iii) a courier--buyer part $c_d(b)$. These reflect distance-based costs that account for couriers' deliveries to buyers, trips to stores, and return-home trips. 
The courier--store cost can also reflect a courier's familiarity with a store, while the courier--buyer cost can account for familiarity with the buyer's neighborhood.

We first show that when for all couriers $c_d(b,s)=c(b,s)+c_d(s)$ or for all couriers $c_d(b,s)=c(b,s)+c_d(b)$, i.e., when one of the two possible structural assumptions holds, the optimal welfare across all allocations can be found in polynomial time.

\begin{restatable}{theorem}{divisiblecourierCostOpt}
    \label{thm:divisible_courier_cost_opt}
    The optimal welfare can be computed in polynomial time when the couriers' costs are decomposable into a buyer--store and a courier--store part $c_d(b,s)=c(b,s)+c_d(s)$, or a buyer--store and a courier--buyer part $c_d(b,s)=c(b,s)+c_d(b)$.
\end{restatable}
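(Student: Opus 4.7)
The plan is to reduce welfare maximization to a minimum-cost flow problem on a polynomial-sized network, which is solvable in polynomial time. I will describe the reduction for the case $c_d(b,s) = c(b,s) + c_d(s)$ in detail; the other case is symmetric. The starting observation is that welfare decomposes as $W(\mathbf{x}) = \sum_{bsd} x_{bsd}\bigl(v_b(s) - c(b,s) - c_d(s)\bigr)$, where the first two summands depend only on $(b,s)$ and the third only on $(d,s)$. Viewing an allocation as two coupled matchings---buyers to stores and couriers to stores---tied together by the requirement that a store is ``used'' exactly when both a buyer and a courier attach to it, this decomposition fits naturally into a flow formulation.

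Concretely, I would build a directed network with source $\sigma$ and sink $\tau$. Each store $s$ is split into nodes $s_{\mathrm{in}}, s_{\mathrm{out}}$ joined by a unit-capacity zero-cost edge. I then add a unit-capacity zero-cost edge from $\sigma$ to each buyer $b$; a unit-capacity edge from each buyer $b$ to each $s_{\mathrm{in}}$ with cost $-(v_b(s)-c(b,s))$; a unit-capacity edge from each $s_{\mathrm{out}}$ to each courier $d$ with cost $c_d(s)$; and a unit-capacity zero-cost edge from each courier $d$ to $\tau$. An integral unit of flow traces a path $\sigma \to b \to s_{\mathrm{in}} \to s_{\mathrm{out}} \to d \to \tau$, corresponding to a delivered triple $(b,s,d)$; the unit capacities enforce unit-demand, unit-supply, and unit-capacity; and the total cost of the flow equals $-W(\mathbf{x})$. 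Hence feasible allocations are in cost-preserving bijection with the integer $\sigma$-to-$\tau$ flows in this network.

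The main subtlety is that the flow value is not prescribed in advance: we want the flow value that maximizes welfare, not the maximum feasible flow. I would address this by running successive shortest augmenting paths, augmenting one unit along a negative-cost source-to-sink path at each step and halting once no negative-cost augmenting path remains (equivalently, by adding a zero-cost bypass edge $\sigma \to \tau$ of large capacity and computing a min-cost circulation). For the symmetric structure $c_d(b,s)=c(b,s)+c_d(b)$, the same argument works after attaching the courier-cost edges to buyers rather than stores: use edges $\sigma \to s$, $s \to b_{\mathrm{in}}$ with cost $-(v_b(s)-c(b,s))$, $b_{\mathrm{in}} \to b_{\mathrm{out}}$, $b_{\mathrm{out}} \to d$ with cost $c_d(b)$, and $d \to \tau$. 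Polynomial-time solvability of min-cost flow on this DAG with polynomially many nodes and edges then yields the theorem in both cases.
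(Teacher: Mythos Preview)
Your proposal is correct and follows essentially the same approach as the paper: both reduce to a polynomial-size minimum-cost flow network in which a unit of flow corresponds to a triple $(b,s,d)$, with the cost decomposition $-(v_b(s)-c(b,s))$ on the buyer side and $c_d(s)$ on the courier side, and with unit capacities enforcing the three feasibility constraints. The only cosmetic differences are that the paper splits your edge cost $-(v_b(s)-c(b,s))$ across two consecutive edges via intermediate order nodes, and handles the unspecified flow value by enumerating $f=1,\ldots,\min\{m,n,l\}$ rather than via successive shortest paths or a bypass arc; both variants are standard and equivalent.
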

\begin{proof}
    We reduce the problem of finding the optimal welfare to a standard \emph{minimum cost flow} problem, solvable in polynomial time using existing algorithms. We demonstrate this for $c_d(b,s)=c(b,s)+c_d(s)$, while the similar case $c_d(b,s)=c(b,s)+c_d(b)$ is omitted. 
    
    Given a market, construct a flow network. Figure~\ref{fig:OPTDivisibleCost} illustrates such a network for a market with two buyers, two stores and three couriers. There is a source node $S_o$, a sink node $S_i$, a node for each buyer $b\in B$, order $o\in B\times S$, store $s\in S$, courier $d\in D$, and an additional dummy node for each store. There exists a directed edge from the source to each buyer node, from each buyer to each order that involves the buyer, from each order to the store that is involved in the order, from each store to the dummy node of the same store, from each dummy store to each courier, and from each courier to the sink. All edges have capacity one. An edge between a buyer $b$ and an order $(b,s)$ has cost $-v_b(s)$, an edge between an order $(b,s)$ and a store $s$ has cost $c(b,s)$, an edge between a dummy store $s'$ and a courier have $c_d(s)$ as cost. The net supply or demand are zero for all nodes except the source and the sink node.

    An allocation is feasible if and only if there is a corresponding integer flow. Given a flow, define an allocation $\x$ in the following way. Each unit of flow goes through a path $So\rightarrow b\rightarrow (b,s)\rightarrow s\rightarrow s'\rightarrow d\rightarrow Si$ for some $b,s,d$. Set $x_{bsd}=1$. $\x$ is feasible because all edge capacity is one and the flow is integral. Given an allocation, for each triplet where $x_{bsd}=1$, push one unit of flow through the path $So\rightarrow b\rightarrow (b,s)\rightarrow s\rightarrow s'\rightarrow d\rightarrow Si$. This defines an integer flow.
    
    To solve for the optimal welfare, set net supply of the source node equal to the net demand of the sink node to be $f=1,2,...,\min\{m,n,l\}$. Solve the minimum cost flow problem for each $f$ and take the flow with the minimum cost over all $f$. By the integrality theorem, any minimum cost network flow problem instance with integral edge capacities, costs, and supply has an optimal integer solution. As the cost of a flow equals to the negation of the welfare of an allocation, finding the minimum cost flow among all supply $f=1,2,...,\min\{m,n,l\}$ is finding the allocation with the optimal welfare.
\end{proof}

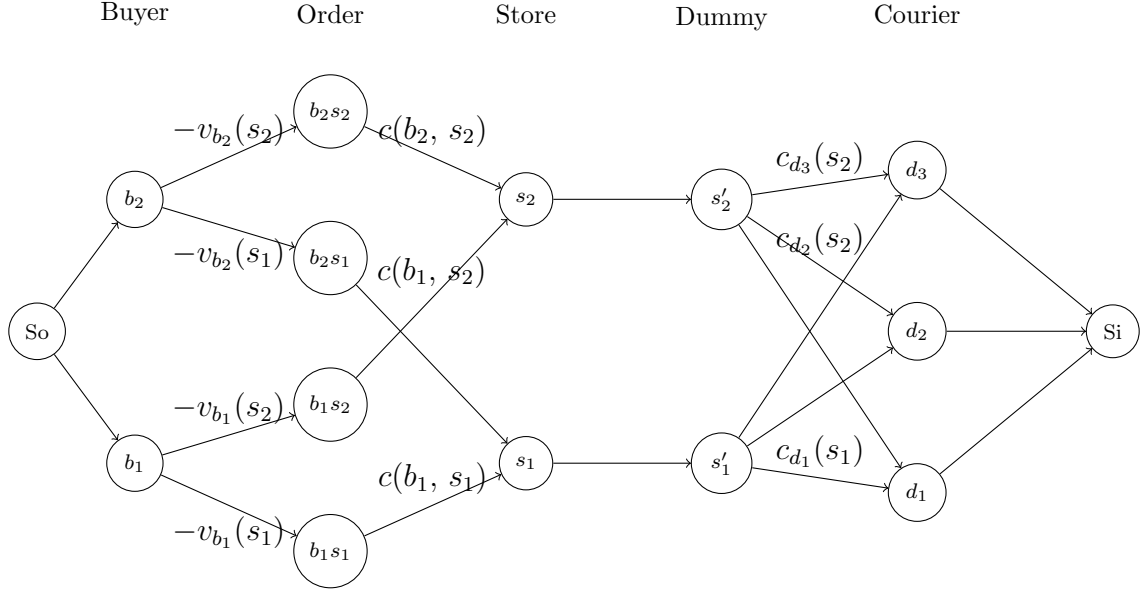
\begin{figure}[t] 
    \centering
    \begin{tikzpicture}[scale=1.3]
        \node[draw, shape=circle] (Source) at (0, 2.25) {\scriptsize So};
        \node[draw, shape=circle] (Sink) at (11, 2.25) {\scriptsize Si};
        \foreach \i/\label in {0.9/$b_1$, 3.6/$b_2$}
            \node[draw, shape=circle,minimum size=0.1cm] (\label) at (1, \i){\scriptsize\label};
        \foreach \i/\label in {0/$b_1 s_1$, 1.5/$b_1 s_2$, 3/$b_2 s_1$, 4.5/$b_2 s_2$}
            \node[draw, shape=circle,minimum size=0.1cm] (\label) at (3, \i){\scriptsize\label};
        \foreach \i/\label in {0.9/$s_1$, 3.6/$s_2$}
            \node[draw, shape=circle,minimum size=0.1cm] (\label) at (5, \i){\scriptsize\label};
        \foreach \i/\label in {0.9/$s'_1$, 3.6/$s'_2$}
            \node[draw, shape=circle,minimum size=0.1cm] (\label) at (7, \i){\scriptsize\label};
        \foreach \i/\label in {0.6/$d_1$, 2.25/$d_2$, 3.9/$d_3$}
            \node[draw, shape=circle,minimum size=0.1cm] (\label) at (9, \i){\scriptsize\label};

        \foreach \x/\y in {Source/$b_1$, Source/$b_2$}
            \draw[->] (\x) -- node[midway, right]{} (\y);
        \foreach \x/\y/\w in {$b_1$/$b_1 s_2$/$-v_{b_1}(s_2)$, $b_2$/$b_2 s_2$/$-v_{b_2}(s_2)$}
            \draw[->] (\x) -- node[above]{\w} (\y);
        \foreach \x/\y/\w in {$b_1$/$b_1 s_1$/$-v_{b_1}(s_1)$, $b_2$/$b_2 s_1$/$-v_{b_2}(s_1)$}
            \draw[->] (\x) -- node[below]{\w} (\y);

        \foreach \x/\y/\w in {$b_2 s_2$/$s_2$/$c(b_2\text{, }s_2)$, $b_2 s_1$/$s_1$/,$b_1 s_2$/$s_2$/$c(b_1\text{, }s_2)$, $b_1 s_1$/$s_1$/$c(b_1\text{, }s_1)$}
            \draw[->] (\x) -- node[above]{\w} (\y);
        \foreach \x/\y in {$s_2$/$s'_2$, $s_1$/$s'_1$}
            \draw[->] (\x) -- node[midway]{} (\y);
        \foreach \x/\y/\w in {$s'_2$/$d_3$/$c_{d_3}(s_2)$, $s'_2$/$d_2$/$c_{d_2}(s_2)$,$s'_2$/$d_1$/, $s'_1$/$d_3$/,$s'_1$/$d_2$/, $s'_1$/$d_1$/$c_{d_1}(s_1)$}
            \draw[->] (\x) -- node[above]{\w} (\y);
        \foreach \x in {$d_1$,$d_2$,$d_3$}
            \draw[->] (\x) -- node[]{}(Sink);
            
        \node[] at (1, 5.5)  {\footnotesize Buyer};
        \node[] at (3, 5.5)  {\footnotesize Order};
        \node[] at (5, 5.5)  {\footnotesize Store};
        \node[] at (7, 5.45)  {\footnotesize Dummy};
        \node[] at (9, 5.5)  {\footnotesize Courier};

    \end{tikzpicture}
    \caption{An example min-cost flow network for a market where $c_d(b,s)=c(b,s)+c_d(s)$. The market has two buyers, two stores, and three couriers. There is a dummy vertex $s'$ for each store $s$ to enforce the store's unit-capacity constraint. Each edge has capacity one. An edge between a buyer $b$ and an order $(b,s)$ has cost $-v_b(s)$. An edge between an order $(b,s)$ and a store $s$ have cost $c(b,s)$. An edge between a dummy vertex of a store $s$ and a courier $d$ has cost $c_d(s)$.
    \label{fig:OPTDivisibleCost}}
    \end{figure}

In the next theorem we show when $c_d(b,s)=c(b,s)+c_d(s)$, or $c_d(b,s)=c(b,s)+c_d(b)$, there always exists an equilibrium that achieves the optimal welfare. This is not true for without-tip equilibria. Recall the market in \cref{fig:without_tip_not_exists} does not have a without-tip equilibrium, despite all courier costs being zero. And even when a without-tip equilibrium does exist--- as in \cref{example:without_tip_bad}--- it can still yield very low welfare, despite the structured courier costs. Note that courier costs in \cref{example:without_tip_bad}  can be decomposes either as $c(b_1,s_1)=0, c(b_2,s_1)=11, c_{d_1}(s_1)=0, c_{d_2}(s_1)=1$, or as $c(b_1,s_1)=0, c(b_2,s_1)=11, c_{d_1}(b_1)=c_{d_1}(b_2)=0, c_{d_2}(b_1)=c_{d_2}(b_2)=1$.

\begin{restatable}{theorem}{DivisiblecourierCost}
\label{thm:divisible_courier_cost}
    When the courier costs consist of a buyer--store and a courier--store part $c_d(b,s)=c(b,s)+c_d(s)$, or a buyer--store and a courier--buyer part $c_d(b,s)=c(b,s)+c_d(b)$, there always exists a with-tip equilibrium that achieves the optimal welfare. However, there exists a market with such courier costs, where every without-tip equilibrium has arbitrarily low welfare.
\end{restatable}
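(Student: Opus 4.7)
For the first statement, the plan is to apply \cref{lem:test_alloc}: for any welfare-optimal allocation $\x^*$ with induced buyer--store matching $\mathbf{z}^*$ and order set $\Omega^*$, it suffices to show $\mathbf{z}^*$ is a maximum-weight matching in $G_{\x^*}$. I treat the decomposition $c_d(b,s)=c(b,s)+c_d(s)$ (the case $c_d(b,s)=c(b,s)+c_d(b)$ is symmetric, swapping store and buyer roles on the courier side) and focus on $|\Omega^*|<l$ (the equal case follows immediately from \cref{cor:all_courier_deliver}).

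Under the decomposition, \cref{lem:exist_courier_plan_serve} gives $\bar{u}_d$ as a VCG value in the store--courier min-cost assignment $M(S^*,D)$ on $S^*:=S_{\Omega^*}$, so the minimum tip separates as $\ut_{bs}=c(b,s)+\pi_s$ for $(b,s)\notin\Omega^*$, where $\pi_s:=\min_d[c_d(s)+\bar{u}_d]$. By LP duality on the assignment problem, $(\pi_s,\bar{u}_d)$ is dual-feasible on any subset of stores (since $\pi_s-\bar{u}_d\leq c_d(s)$ by construction and $\bar{u}_d\geq 0$), and strong duality at $S^*$ yields $M(S^*,D)=\sum_{s\in S^*}\pi_s-\sum_d\bar{u}_d$. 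Two consequences follow: the marginal cost of adding any $s\notin S^*$ to the assignment equals exactly $\pi_s$ (realized by a primal swap using the minimizer in the definition of $\pi_s$), while the marginal savings of removing any $s\in S^*$ is at most $\pi_s$ (weak duality on $S^*\setminus\{s\}$).

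To show $\mathbf{z}^*$ is maximum-weight, I would argue by a swap argument over the components of $\mathbf{z}^*\triangle\mathbf{z}'$ for any alternative matching $\mathbf{z}'$. Each component is a cycle, a path with equal numbers of $\mathbf{z}^*$- and $\mathbf{z}'$-edges, a $\mathbf{z}'$-augmenting path (which adds a fresh store $s^+\notin S^*$), or a $\mathbf{z}^*$-augmenting path (which removes $s^-\in S^*$). For cycles and equal-length paths the store set is unchanged locally, so optimality of $\x^*$ yields $V_{\mathbf{z}'}-V_{\mathbf{z}^*}\leq C_{\mathbf{z}'}-C_{\mathbf{z}^*}$, where $V,C$ denote the sums of $v_b(s)$ and $c(b,s)$ over the $\mathbf{z}$-edges in the component; the marginal-cost identity upgrades the right-hand side by $\pi_{s^+}$ for $\mathbf{z}'$-augmenting paths, and the marginal-savings bound reproduces the unaugmented inequality for $\mathbf{z}^*$-augmenting paths. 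Substituting into the $G_{\x^*}$ edge weights ($v_b(s)$ on-path and $v_b(s)-c(b,s)-\pi_s$ off-path), in every case the local weight difference in favor of $\mathbf{z}^*$ telescopes to $C_{\mathbf{z}^*}+\sum\pi_s\geq 0$. Summing across components gives $\text{weight}(\mathbf{z}^*)\geq\text{weight}(\mathbf{z}')$, and \cref{lem:test_alloc} produces the desired efficient with-tip equilibrium containing $\x^*$.

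For the second statement, a scaled version of \cref{example:without_tip_bad} suffices: one store $s_1$, buyers with $v_{b_1}(s_1)=3$ and $v_{b_2}(s_1)=V$, and couriers with $c(b_1,s_1)=0$, $c(b_2,s_1)=V+M$, $c_{d_1}(s_1)=0$, $c_{d_2}(s_1)=1$ for $V\gg 3$ and $M>0$. The optimal welfare is $3$ (deliver $s_1$ to $b_1$ via $d_1$), but in every without-tip equilibrium $p_{s_1}\in[3,V]$ is forced (otherwise either $b_1$ or $b_2$ fails to best-respond), so $b_2$ must buy, yielding welfare $V-(V+M)=-M\to-\infty$. The main obstacle for the first part is the LP-duality bookkeeping tying $\pi_s$ to the marginal costs and savings of the store--courier assignment; once the marginal identities are in hand, each of the four swap cases reduces to a telescoping of nonnegative terms.
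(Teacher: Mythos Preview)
Your overall strategy matches the paper's: both reduce to \cref{lem:test_alloc} by showing the buyer--store matching $\mathbf{z}^*$ induced by a welfare-optimal $\x^*$ is maximum-weight in $G_{\x^*}$, and both proceed via alternating paths and cycles in $\mathbf{z}^*\triangle\mathbf{z}'$. Where you diverge is in the bookkeeping on the courier side. The paper constructs, for each path type, an explicit alternative allocation $\x'$ (using the minimum-cost matching $\mathbf{y}^{\backslash d_0}$ that avoids the ``marginal'' courier $d_0$) and then grinds through the welfare comparison directly from $\bar{u}_{d_0}=C_{\Omega_\x}(G_D\backslash d_0)-C_{\Omega_\x}(G_D)$. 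You instead encapsulate this in the dual variable $\pi_s=\min_d[c_d(s)+\bar{u}_d]$, observe that $(\pi,\bar{u})$ is dual-optimal for the store--courier assignment on $S^*$, and read off the marginal-add and marginal-remove inequalities from weak/strong duality. This is a genuinely cleaner packaging---it makes transparent that $\ut_{bs}=c(b,s)+\pi_s$ separates into a local edge cost and a store ``shadow price,'' so each path-level inequality really does telescope to a nonnegative residual $C_{\z^*,\pi}+\sum_{s\in S^*\cap\pi}\pi_s$. The paper's explicit construction buys you nothing extra here; your duality formulation is shorter and arguably more illuminating.

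That said, your case analysis has a small hole. You assert that for ``a path with equal numbers of $\mathbf{z}^*$- and $\mathbf{z}'$-edges'' the store set is unchanged locally. This is only true when both endpoints are buyers; when both endpoints are stores, the component simultaneously removes some $s^-\in S^*$ and adds some $s^+\notin S^*$. You need to combine your marginal-add identity ($+\pi_{s^+}$) and marginal-remove bound ($\leq\pi_{s^-}$) in one step---concretely, assign $S^*\cup\{s^+\}$ first at cost $M(S^*,D)+\pi_{s^+}$ and then drop $s^-$---after which the telescoping goes through as $C_{\z^*,\pi}+\sum_{s\in S^*\cap\pi,\,s\neq s^-}\pi_s\geq 0$. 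The fix is immediate, but the claim as stated is false. Also, when you invoke strong duality at $S^*$ you are implicitly using that $(\pi,\bar{u})$ is not merely dual-feasible but dual-\emph{optimal}; this follows because $\bar{u}$ is the maximum Walrasian price from \cref{lem:exist_courier_plan_serve} (hence one half of an optimal dual pair), but you should say so. Your without-tip example is fine and mirrors the paper's use of \cref{example:without_tip_bad}.
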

\begin{proof}
    Let $\x$ be a feasible allocation that achieves the optimal welfare $W(\x)=\mathit{OPT}$. If $|\Omega_\x|=l$, \cref{cor:all_courier_deliver} shows $\x$ is in some equilibrium. We focus on the case where $|\Omega_\x|<l$. We present the proofs here for $c_d(b,s)=c(b,s)+c_d(b)$ and all stores are bought from $|\Omega_x|=n$. We defer the case for $|\Omega_x|<n$ and $c_d(b,s)=c(b,s)+c_d(s)$ to the \cref{app:market_structures}.
    
    Let $\z$ be the buyer allocation induced by $
    \x$ in $G_\x$, and $\z'$ be any other feasible buyer allocation in $G_\x$. We will show that $\z$ is of weakly larger weight than $\z'$. Then \cref{lem:test_alloc} shows $\x$ is in an equilibrium. In $G_\x$, the two matching $\z$ and $\z'$ define some alternating paths and cycles $$\pi=((b_0),s_1,b_1,s_2,b_2,\ldots, b_{t-1},s_t, b_t),$$ where for every $q\in \{1,\ldots, t\}$, $$z_{b_q,s_q}=1\mbox{ and }z_{b_{q-1},s_q}=0 \mbox{ and }z'_{b_q,s_q}=0\mbox{ and } z'_{b_{q-1},s_q}=1.$$ 
    
    An alternating path can either start from 1) $b_0$, a buyer allocated by $\z'$ but not by $\z$; 2) $s_1$, a store allocated by $\z$ but not by $\z'$. In an alternating path buyer $b_t$ is allocated by $\z$ but not by $\z'$. An alternating cycle is defined by $b_t=b_0$, and all buyers and stores on the cycle is allocated by both $\z$ and $\z'$. We use $\pi$ to denote both alternating paths and cycles. As $|\Omega_\x|=n$, all stores in $\pi$ are allocated in $\z$. Each alternating path and cycle captures buyers comparing the store they buy from in $\x$, against stores they can buy from by paying tips.
    Figure~\ref{fig:alternative_path_cycle} shows some examples of the alternating paths and cycles.

    \begin{figure}[t]
    \centering
    \begin{subfigure}{.3\textwidth}
    \centering
    \begin{tikzpicture}[scale=0.7]
        \foreach \i/\label in {1/$b_1$, 2.5/$b_2$, 4/$b_3$}
            \node[draw, shape=rectangle,scale=0.8] (\label) at (\i, 2) {\tiny\label};
            
        \foreach \i/\label in {1/$s_1$, 2.5/$s_2$, 4/$s_3$}
            \node[draw, shape=circle, scale=0.8] (\label) at (\i, 0) {\tiny\label};

        \foreach \x/\y in {$b_1$/$s_2$, $b_2$/$s_3$}
            \draw[line width=0.5pt, dashed] (\x) -- (\y);

        \foreach \x/\y in {$b_1$/$s_1$, $b_2$/$s_2$, $b_3$/$s_3$}
            \draw[line width=0.5pt, black, solid] (\x) -- (\y);    
    \end{tikzpicture}
    \end{subfigure}%
    \begin{subfigure}{.3\textwidth}
    \centering
    \begin{tikzpicture}[scale=0.7]
        \foreach \i/\label in {-0.5/$b_0$,1/$b_1$, 2.5/$b_2$, 4/$b_3$}
            \node[draw, shape=rectangle,scale=0.8] (\label) at (\i, 2) {\tiny\label};
            
        \foreach \i/\label in {1/$s_1$, 2.5/$s_2$, 4/$s_3$}
            \node[draw, shape=circle, scale=0.8] (\label) at (\i, 0) {\tiny\label};

        \foreach \x/\y in {$b_0$/$s_1$,$b_1$/$s_2$, $b_2$/$s_3$}
            \draw[line width=0.5pt, dashed] (\x) -- (\y);

        \foreach \x/\y in {$b_1$/$s_1$, $b_2$/$s_2$, $b_3$/$s_3$}
            \draw[line width=0.5pt, black, solid] (\x) -- (\y);    
    \end{tikzpicture}
    \end{subfigure}
    \begin{subfigure}{.3\textwidth}
    \centering
    \begin{tikzpicture}[scale=0.7]
        \foreach \i/\label in {1/$b_1$, 2.5/$b_2$, 4/$b_3$}
            \node[draw, shape=rectangle,scale=0.8] (\label) at (\i, 2) {\tiny\label};
            
        \foreach \i/\label in {1/$s_1$, 2.5/$s_2$, 4/$s_3$}
            \node[draw, shape=circle, scale=0.8] (\label) at (\i, 0) {\tiny\label};

        \foreach \x/\y in {$b_1$/$s_2$, $b_2$/$s_3$, $b_3$/$s_1$}
            \draw[line width=0.5pt, dashed] (\x) -- (\y);

        \foreach \x/\y in {$b_1$/$s_1$, $b_2$/$s_2$, $b_3$/$s_3$}
            \draw[line width=0.5pt, black, solid] (\x) -- (\y);    
    \end{tikzpicture}
    \end{subfigure}
    \caption{In $G_x$, $\z$ and $\z'$ define some alternating paths and cycles that do no intersect. Solid edges means $z_{bs}=1$, dotted edges means $z'_{bs}=1$. From left to right: alternating path with no unmatched buyer in $\z$, alternating path with $b_0$ unmatched by $\z$, alternating cycle.\label{fig:alternative_path_cycle}}
    \end{figure}
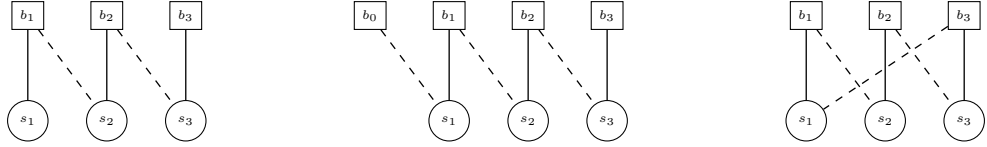

    Each $\pi$ defines a subgraph $G^\pi_\x$ that consists of buyers and stores in $\pi$. Since $\z$ and $\z'$ are matchings in $G_\x$, no two alternating paths or cycles intersect.
    We will prove $\z$ is a maximum weight matching when restricted to each such subgraph $G^\pi_\x$. This implies that $\z$ is a max weight matching in $G_x$. We define some notations that allow us to operate on each subgraph $G^\pi_\x$. Let $\z^\pi$ denote $\z$ restricted to $\pi$ and $\x^\pi$ denote $\x$ restricted to $\pi$: $$z^\pi_{bs}=\begin{cases}
        z_{bs} & \text{ if } b\in \pi, s\in \pi,\\
        0 & \text{otherwise.}
    \end{cases},\;
    x^\pi_{bsd}=\begin{cases}
        x_{bsd} & \text{ if } b\in \pi, s\in \pi,\\
        0 & \text{otherwise.}
    \end{cases}
    $$ 
    
    Similarly, let $\z^{'\pi}$ be $\z'$ restricted to $\pi$.
    Let $\z^{\backslash \pi}=\z-\z^\pi$ be the part of $\z$ not involving buyers and stores in $\pi$, and similarly $\x^{\backslash \pi}=\x-\x^\pi$ be the part of $\x$ not involving buyers and stores in $\pi$.
    
    \paragraph{Case I. Alternating path starting at $s_1$ or alternating cycle.} 
    All buyers and stores in $\pi$ are matched in $\z^\pi$ as well as $\z^{'\pi}$. This allows us to define a feasible allocation $\mathbf{x'^\pi}$ on three-sided market for the buyer allocation $\z'^{\pi}$ in the following way: A buyer matched in $\z'^{\pi}$ receives delivery from the same courier as in $\x^\pi$
    \begin{equation*}
    x'^\pi_{bsd} = \begin{cases}
        1 & \text{if } z'^{\pi}_{bs}=1 \text{ and }  \sum_s x^\pi_{bsd}=1, \\
        0 & \text{otherwise.}
    \end{cases}     
    \end{equation*}
    As $\x'^\pi$ only allocates buyers stores and couriers allocated in $\x^\pi$, $\x'^\pi+\x^{\backslash \pi}$ is another matching. Since $\mathbf{x}$ is the allocation with the optimal welfare, we have $W(x)\geq W(\x'^\pi)+W(\x^{\backslash \pi})$.
    \begin{align*}
        \sum_{bs: x^{\pi}_{bs}=1}[v_b(s)-c(b,s)]-\sum_{bd: \sum_s x^{\pi}_{bsd}=1}c_d(b) 
        &= W(\x^\pi) =W(\x)-W(\x^{\backslash \pi})
        \geq W(\x'^\pi)\\
        &= \sum_{bs: x'^{\pi}_{bs}=1}v_b(s) - \sum_{bs: x'^{\pi}_{bs}=1}c(b,s) - \sum_{bd: \sum_s x'^{\pi}_{bsd}=1}c_d(b)\\
        &\geq \sum_{bs: x'^{\pi}_{bs}=1}[v_b(s)-\ut_{bs}] - \sum_{bd: \sum_s x^\pi_{bsd}=1}c_d(b).
    \end{align*}
    The last inequality comes from tip $\ut_{bs}\geq c(b,s)$ for $s\neq s_\x(b)$, and $\mathbf{x'^\pi}$ using the same couriers to deliver for buyers in $\mathbf{x^\pi}$. Simplifying the inequality we have $$ \sum_{bs: z^{\pi}_{bs}=1}v_b(s) \geq \sum_{bs: \z'^\pi_{bs}=1}[v_b(s)-\ut_{bs}]=\sum_{bs: z'^{\pi}_{bs}=1}v^\x_b(s),$$
    which means $\z$ is a max weight matching when restricted to $G^\pi_\x$.

    \paragraph{Case II. Alternating path starting at $b_0$.} The buyer $b_t$ is matched in $\z$ but not $\z'$. 
    To denote the dependency on $\x$, let $\bar{u}_d(\x):=\bar{u}_d$ be the highest courier utility in \cref{lem:exist_courier_plan_serve} and \cref{eq:highest_minimum_tip}. 
    For an alternating path starting at $b_0$, let $d_0$ be the courier who is willing to deliver order $(b_0,s_1)$ with the lowest tip $d_0=\argmin_{d}\{c_d(b_0,s_1)+\bar{u}_d(\x)\}$. As $|\Omega_\x|<l$ \cref{lem:exist_courier_plan_serve} gives an expression for $d_0$'s highest utility $\bar{u}_{d_0}(\x)=C_{\Omega_\x}(G_D\backslash d_0)-C_{\Omega_\x}(G_D)$. 
    Here $C_{\Omega_\x}(G_D)$ is the minimum cost of delivering all orders in $\Omega_\x$, also the total courier cost for $\x$ because $\x$ is optimal.
    $C_{\Omega_\x}(G_D\backslash d_0)$ is the cost of the minimum cost matching that covers $\Omega_\x$ without courier $d_0$ in the bipartite graph $G_D=(O,D)$. Let $\mathbf{y}^{\backslash d_0}$ be the courier allocation in $G_D$ that covers $\Omega_\x$ with the minimum cost without courier $d_0$. The buyer-courier part of the cost for $\mathbf{y}^{\backslash d_0}$ is equal to total cost minus the buyer--store part of the cost $C_{\Omega_\x}(G_D\backslash d_0)-\sum_{bs:z_{bs}=1}c(b,s)$.
    
    Now define a feasible allocation $\x'$ for the buyer allocation $\z'^{\pi}+z^{\backslash \pi}$ where 
    $$x'_{bsd}=\begin{cases}
        1 & \text{ if } (b,s,d)=(b_0,s_1,d_0),\\
        1 & \text{ if } (b,s,d)\neq(b_0,s_1,d_0) \mbox{ and } z'^\pi_{bs}+z^{\backslash \pi}_{bs}=1 \mbox{ and } \sum_{s'}y^{\backslash d_0}_{o,d}\mbox{ for } o=(b,s'),\\
        0 & \text{otherwise.}
    \end{cases}$$
    
    This means, $\x'$ fulfills all orders in $z^p$ and $z^{\backslash d_0}$, by having $d_0$ delivering for $b_0$, and have the buyers receive delivery from the same couriers that they receive deliver in $\mathbf{y}^{\backslash d_0}$. The buyer-courier part of the delivery cost for $\x'$ is weakly smaller than $c_{d_0}(b_0)+C_{\Omega_\x}(G_D\backslash d_0)-\sum_{bs:z_{bs}=1}c(b,s)$. This is because an order $(b_t,s_t)$ is matched in $\mathbf{y}^{\backslash d_0}$, but $b_t$ is no longer matched in $\x'$. The welfare of $\mathbf{x'}$ can be expressed as
    \begin{align*}
        W(\mathbf{x'})&\geq \sum_{bs: z'^\pi_{bs}=1} [v_b(s)-c(b,s)] +\sum_{bs: z^{\backslash \pi}_{bs}=1} [v_b(s)-c(b,s)] - c_{d_0}(b_0)- C_{\Omega_\x}(G_D\backslash d_0)+\sum_{bs:z_{bs}=1}c(b,s)
    \end{align*}
    We can also write out the welfare of $\mathbf{x^{\backslash \pi}}$ by expressing the couriers cost indirectly through the courier cost for $\mathbf{x^\pi}$.
    \begin{align*}
        W(\mathbf{x^{\backslash \pi}})=\sum_{bs: x^{\backslash \pi}_{bs}=1} v_b(s)-[C_{\Omega_\x}(G_D)-\sum_{bs: z^{\pi}_{bs}=1}c(b,s)-\sum_{bd: \sum_s x^{\pi}_{bsd}=1}c_d(b)].
    \end{align*}
    Putting the two together we have,
    \begin{align*}
        W(\mathbf{x'})-W(\mathbf{x^{\backslash \pi}})&\geq \sum_{bs: z'^\pi_{bs}=1} [v_b(s)-c(b,s)] - [c_{d_0}(b_0)+ C_{\Omega_\x}(G_D\backslash d_0)-C_{\Omega_\x}(G_D)]\\
        & + (\sum_{bs:z_{bs}=1}c(b,s)-\sum_{bs: z^{\pi}_{bs}=1}c(b,s)- \sum_{bs: z^{\backslash \pi}_{bs}=1} c(b,s))- \sum_{bd: \sum_s x^{\pi}_{bsd}=1}c_d(b)\\
        &\geq \sum_{bs: z'^\pi_{bs}=1} [v_b(s)-c(b,s)] - (c_{d_0}(b_0)+\bar{u}_{d_0}(\x))-\sum_{bd: \sum_s x^{\pi}_{bsd}=1}c_d(b).
    \end{align*}
    
    By $\mathbf{x}$ having larger welfare than $\mathbf{x'}$,
    \begin{align*}
        \sum_{bs: z^{\pi}_{bs}=1}[v_b(s)-c(b,s)]-\sum_{bd: \sum_s x^{\pi}_{bsd}=1}c_d(b) 
        &= W(\mathbf{x}^\pi) =W(\mathbf{x})-W(\mathbf{x}^{\backslash \pi})
        \geq W(\mathbf{x}')-W(\mathbf{x}^{\backslash \pi}).
    \end{align*}
    
    Combining the latter two inequalities and simplifying,
    \begin{align*}
        \sum_{bs: z^{\pi}_{bs}=1}v_b(s) &\geq \sum_{bs: z'^\pi_{bs}=1} [v_b(s)-c(b,s)] - (c_{d_0}(b_0)+\bar{u}_{d_0}(\x)) \\
        &\geq v_{b_0}(s_1)-\ut_{b_0s_1}+ \sum_{bs: z'^\pi_{bs}=1, b\neq b_0} [v_b(s)-\ut_{bs}] = \sum_{bs: z'^{\pi}_{bs}=1}v^\x_b(s).
    \end{align*}
 
    In the last inequality we have used that for an order $(b,s)$ matched in $z'^\pi_{bs}=1$ but $z^\pi_{bs}=0$, $\ut_{bs}=\min_d (c_d(b,s)+\bar{u}_d(\x))\geq \min_d c_d(b,s)\geq c(b,s)$.
  We have proved that $\z$ is a max weight matching when restricted to $G^\pi_\x$.
\end{proof}

\subsection{Structures on buyer valuations}\label{sec:market_structures_buyers}

As an alternative to requiring some structure to courier costs, we now introduce a structure on buyer valuations.
The structure on buyer valuations is that of {\em single-mindedness}: each buyer only has positive valuation for one store. We first show that the optimal welfare can be found in polynomial time for a market with single-minded buyers, regardless of courier costs.

\begin{restatable}{theorem}{SingleMindedBuyer}
    The optimal welfare can be computed in polynomial time when each buyer only has a positive valuation for one store.
\end{restatable}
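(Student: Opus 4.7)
The plan is to reduce the problem of computing the optimal welfare to a polynomial-time minimum-cost flow computation, along the lines of the network used in the proof of \cref{thm:divisible_courier_cost_opt}. Single-mindedness provides the crucial leverage: each buyer $b$ has a unique ``target'' store $s_b$ with $v_b(s_b) > 0$, and for every other store the welfare contribution $v_b(s) - c_d(b,s) = -c_d(b,s) \leq 0$ is non-positive. Hence we may restrict attention, without loss, to allocations in which every allocated buyer receives her target store.

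I would build a flow network with a source $S_o$, a sink $S_i$, and one node per store, buyer, and courier, with every edge of unit capacity. Add: an edge $S_o \to s$ with cost $0$ for each store $s$ (enforcing unit-supply); a single edge $s_b \to b$ with cost $-v_b(s_b)$ for each buyer $b$ (simultaneously enforcing her unit-demand and restricting her to the target store); an edge $b \to d$ with cost $c_d(b, s_b)$ for each $(b,d)$; and an edge $d \to S_i$ with cost $0$ for each courier $d$ (enforcing unit-capacity). A unit of integer flow along a path $S_o \to s_b \to b \to d \to S_i$ corresponds precisely to the trade $(b, s_b, d)$, and contributes cost $c_d(b, s_b) - v_b(s_b)$, the negation of the trade's welfare. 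Integer flows and feasible allocations (restricted to target stores) are thus in one-to-one correspondence, with total flow cost equal to the negation of total welfare.

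To extract the optimal welfare, I would iterate over flow values $f = 1, 2, \ldots, \min\{m, n, l\}$, solve the min-cost flow of value $f$ by any standard polynomial-time algorithm, and return the negation of the smallest cost encountered, invoking the integrality theorem to obtain an integer-valued optimum. There is no real obstacle here: single-mindedness removes the hyperedge character of a three-sided trade by pre-committing each buyer to one store, which lets the three capacity constraints be placed on three disjoint families of unit-capacity edges and lets the welfare of a trade split additively across buyer- and courier-facing edges. The only thing left to verify is the one-to-one correspondence between integer flows and feasible allocations restricted to target stores, which is immediate from the construction.
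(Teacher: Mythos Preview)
Your proposal is correct and follows essentially the same approach as the paper: both reduce to a min-cost flow with layers source $\to$ stores $\to$ buyer--store pairs $\to$ couriers $\to$ sink, all unit-capacity edges, iterating over flow values $f$ and invoking integrality. Your buyer nodes are exactly the paper's ``buyer--store'' nodes $(b,s_b)$ under the bijection forced by single-mindedness, and you additionally make explicit the (easy but necessary) observation that non-target trades contribute non-positive welfare and can be dropped, which the paper leaves implicit in its figure.
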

Just as \cref{thm:divisible_courier_cost_opt}, we reduce the problem of finding the optimal welfare into a minimum cost flow problem. Figure~\ref{fig:OneBuyerOneSeller} illustrates such a flow network. For each buyer $b$ that values store $s$ positively, we create a vertex $(b,s)$. For a store vertex $s$, there is an outgoing edge to each vertex $(b,s)$ where $b$ values $s$. The full reduction is similar to that in \cref{thm:divisible_courier_cost_opt} and omitted. 
    \begin{figure}[t] 
    \centering
    \begin{tikzpicture}[scale=1.3]
        \node[draw, shape=circle] (Source) at (0, 2.25) {\scriptsize So};
        \node[draw, shape=circle] (Sink) at (10, 2.25) {\scriptsize Si};
        \foreach \i/\label in {0.9/$s_1$, 3.6/$s_2$}
            \node[draw, shape=circle,minimum size=0.1cm] (\label) at (2.5, \i){\scriptsize\label};
        \foreach \i/\label in {0/$b_1 s_1$, 1.5/$b_2 s_1$, 3/$b_3 s_2$, 4.5/$b_4 s_2$}
            \node[draw, shape=circle,minimum size=0.1cm] (\label) at (5, \i){\scriptsize\label};
        \foreach \i/\label in {0.6/$d_1$, 2.25/$d_2$, 3.9/$d_3$}
            \node[draw, shape=circle,minimum size=0.1cm] (\label) at (7.5, \i){\scriptsize\label};

        \foreach \x/\y in {Source/$s_1$, Source/$s_2$}
            \draw[->] (\x) -- node[midway, right]{} (\y);
        \foreach \x/\y/\w in {$s_1$/$b_1 s_1$/$-v_{b_1}(s_1)$, $s_2$/$b_3 s_2$/$-v_{b_3}(s_2)$}
            \draw[->] (\x) -- node[below]{\w} (\y);
        \foreach \x/\y/\w in {$s_1$/$b_2 s_1$/$-v_{b_2}(s_1)$, $s_2$/$b_4 s_2$/$-v_{b_4}(s_2)$}
            \draw[->] (\x) -- node[above]{\w} (\y);

        \foreach \x/\y/\w in {$b_4 s_2$/$d_2$/$c_{d_2}^{b_4 s_2}$, $b_4 s_2$/$d_1$/$c_{d_1}^{b_4 s_2}$,$b_3 s_2$/$d_3$/$c_{d_3}^{b_3 s_2}$, $b_3 s_2$/$d_2$/$c_{d_2}^{b_3 s_2}$, $b_3 s_2$/$d_1$/$c_{d_1}^{b_3 s_2}$,
        $b_2 s_1$/$d_3$/$c_{d_3}^{b_2 s_1}$, $b_2 s_1$/$d_2$/$c_{d_2}^{b_2 s_1}$, $b_2 s_1$/$d_1$/$c_{d_1}^{b_2 s_1}$,$b_1 s_1$/$d_3$/$c_{d_3}^{b_1 s_1}$, $b_1 s_1$/$d_2$/$c_{d_2}^{b_1 s_1}$}
            \draw[->] (\x) -- node[]{}(\y);
        \foreach \x/\y/\w in {$b_4 s_2$/$d_3$/$c_{d_3}(b_4\text{,}s_2)$,$b_1 s_1$/$d_1$/$c_{d_1}(b_1\text{,}s_1)$}
            \draw[->] (\x) -- node[above]{\w} (\y);
        \foreach \x in {$d_1$,$d_2$,$d_3$}
            \draw[->] (\x) -- node[]{}(Sink);
            
        \node[] at (2.5, 5.5)  {\footnotesize Store};
        \node[] at (5, 5.5)  {\footnotesize Buyer--Store};
        \node[] at (7.5, 5.5)  {\footnotesize Courier};
    \end{tikzpicture}
    \caption{An example min-cost flow network for a market where each buyer only values one store. There is a vertex $bs$ for each buyer $b$ that values store $s$.
    \label{fig:OneBuyerOneSeller}}
    \end{figure}
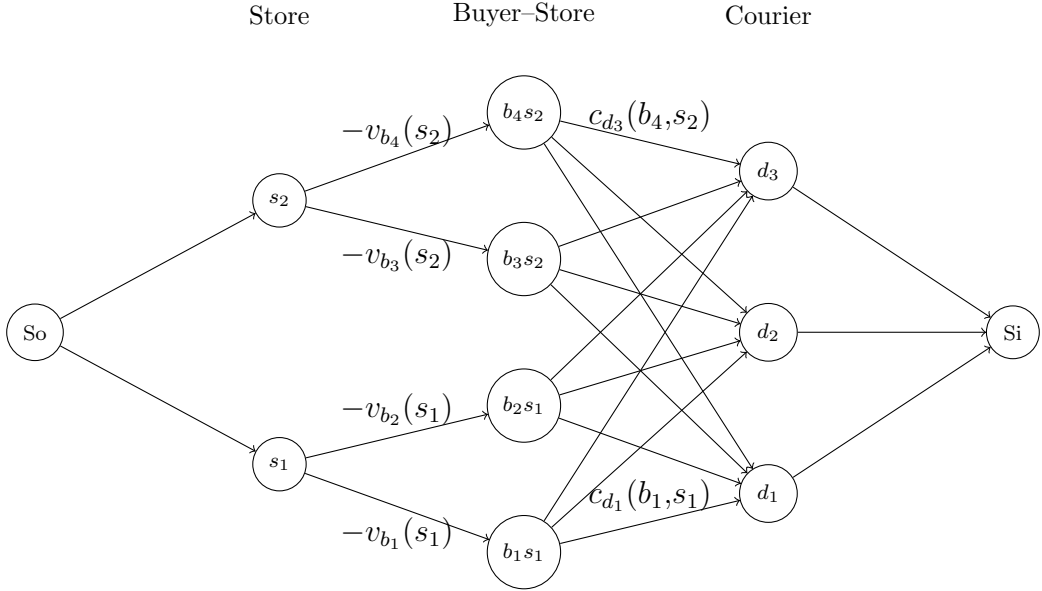

We now examine the equilibrium welfare with single-minded buyers. For without-tip equilibrium, recall \cref{example:without_tip_bad} where buyers are single-minded but the best without-tip equilibrium has welfare -1. Further with the same logic as that in \cref{fig:without_tip_not_exists}, a without-tip equilibrium may not exists in the first place when there are fewer couriers than buyers or stores, even when buyers are single-minded.

\begin{restatable}{theorem}{OneBuyerOneStore}
\label{thm:one_buyer_one_store}
    When each buyer only has a positive valuation for one store, there always exists an equilibrium that achieves the optimal welfare. However, there exists a market
    with such buyer valuations, where every without-tip equilibrium has arbitrarily low welfare.
\end{restatable}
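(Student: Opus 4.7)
The plan is to invoke \cref{lem:test_alloc}: take an optimal allocation $\x^*$ and show that the induced buyer matching $\z^*$ is a maximum-weight matching in $G_{\x^*}$, whereupon the lemma supplies a with-tip equilibrium supporting $\x^*$ that attains $\mathit{OPT}$. Single-mindedness makes $G_{\x^*}$ very simple: since each buyer $b$ has positive value only for a unique store $s(b)$, every edge $(b,s)$ with $s\neq s(b)$ carries weight at most $0$, so any matching that could beat $\z^*$ must augment along edges of the form $(b,s(b))$.

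Before starting I would preprocess $\x^*$ so that every allocated buyer is matched to its valued store---any other match contributes value $0$ with positive courier cost, and unmatching is weakly better---and dispatch the easy case $|\Omega_{\x^*}|=l$ directly via \cref{cor:all_courier_deliver}. In the remaining case $|\Omega_{\x^*}|<l$, with $B^*$ the allocated buyers, I analyze each $b\notin B^*$ in two cases based on whether $s(b)$ is allocated. In Case (a), when $s(b)$ is unallocated, adding $(b,s(b))$ to $\Omega_{\x^*}$ is feasible; by \cref{lem:exist_courier_plan_serve} the minimum courier-cost increase for this augmentation equals $\ut_{b,s(b)}=\min_d(c_d(b,s(b))+\bar{u}_d)$, so optimality of $\x^*$ yields $v_b(s(b))\le\ut_{b,s(b)}$ and edge $(b,s(b))$ has nonpositive weight in $G_{\x^*}$.

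In Case (b), when $s(b)$ is allocated in $\x^*$ to some $b^*\in B^*$, optimality says the swap $\Omega_{\x^*}\to\Omega':=\Omega_{\x^*}\setminus\{(b^*,s(b))\}\cup\{(b,s(b))\}$ cannot improve welfare, i.e., $v_b(s(b))-v_{b^*}(s(b))\le C_{\Omega'}(G_D)-C_{\Omega_{\x^*}}(G_D)$. I upper-bound the cost-swap term by $\ut_{b,s(b)}$ by constructing a feasible $\Omega'$-matching explicitly: for any courier $d$, take the min-cost matching $\mu$ covering $\Omega_{\x^*}$ in $G_D\setminus d$, let $d'$ be the courier $\mu$ assigns to $(b^*,s(b))$, remove $d'$'s edge from $\mu$, and add $d$ serving $(b,s(b))$. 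This yields a feasible $\Omega'$-matching of cost $C_{\Omega_{\x^*}}(G_D\setminus d)+c_d(b,s(b))-c_{d'}(b^*,s(b))$; dropping the nonnegative $c_{d'}(b^*,s(b))$ and minimizing over $d$ gives $C_{\Omega'}(G_D)-C_{\Omega_{\x^*}}(G_D)\le\ut_{b,s(b)}$, hence $v_b(s(b))-\ut_{b,s(b)}\le v_{b^*}(s(b))$, so swapping $b^*$ for $b$ is weakly weight-decreasing in $G_{\x^*}$. Combining Cases (a) and (b), $\z^*$ is a maximum-weight matching in $G_{\x^*}$, and \cref{lem:test_alloc} produces the desired efficient equilibrium.

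For the negative half I would reuse \cref{example:without_tip_bad}: with only one store the buyers are trivially single-minded, and replacing the cost $c_{d_1}(b_2,s_1)=11$ by an arbitrary $\kappa$ (and $c_{d_2}(b_2,s_1)$ by $\kappa+1$) keeps the only without-tip equilibrium allocation intact while driving its welfare to $10-\kappa$, which can be made arbitrarily negative. The main obstacle is the cost-swap inequality in Case (b): the two min-cost matchings on $\Omega_{\x^*}$ and $\Omega'$ are not a priori comparable, so one must construct the $\Omega'$-matching explicitly by exchanging one courier at a time and exploiting nonnegativity of the displaced edge's cost. Everything else reduces to optimality of $\x^*$ applied to a single-store perturbation, since single-mindedness decouples the maximum-weight problem in $G_{\x^*}$ into independent per-store comparisons.
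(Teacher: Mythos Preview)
Your proposal is correct and follows essentially the same route as the paper: reduce to \cref{lem:test_alloc} by showing that the induced buyer matching $\z^*$ is maximum-weight in $G_{\x^*}$, split into the two cases ``$s(b)$ unallocated'' and ``$s(b)$ allocated to some $b^*$,'' and handle each by comparing the optimal welfare to a single-order perturbation of $\x^*$; the negative half likewise reuses \cref{example:without_tip_bad}. Your Case~(b) is in fact slightly cleaner than the paper's written argument: you minimize over all couriers $d$ when bounding $C_{\Omega'}(G_D)-C_{\Omega_{\x^*}}(G_D)$, whereas the paper fixes $d_0$ to be the courier delivering $(b_0,s_0)$ in $\x$ and then writes $c_{d_0}(b_1,s_0)+\bar u_{d_0}(\x)=\ut_{b_1 s_0}$ at the end, an equality that only holds for the minimizing courier---your version avoids this slip.
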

The proof follows a similar approach as \cref{thm:divisible_courier_cost}. By assuming each buyer only values one store, we effectively limit the length of alternating path in the proof of \cref{thm:divisible_courier_cost} to three: $(b_0,s_1,b_1)$, where both $b_0$ and $b_1$ only values $s_1$ positively. The full proof is relegated to \cref{app:market_structures}.

\section{Discussion}\label{sec:dis}
We have examined the role of tips in pricing on a three-sided delivery platform. We define a with-tip equilibrium that allows buyers to offer minimum tips that still incentivize couriers to deliver, and a without-tip equilibrium where buyers are not allowed to tip. While with-tip equilibria always exist, without-tip equilibria may not exist unless there are sufficiently many couriers. Moreover, the optimal with-tip equilibrium welfare is weakly higher than the optimal without-tip equilibrium welfare, highlighting the benefits of tips. However, even with-tip equilibria can suffer from inefficiencies, and computing the optimal equilibrium welfare is NP-hard. To address these two challenges, we identify market structures that ensure the existence of efficient with-tip equilibria. Specifically, these are markets where delivery costs are decomposable into store--buyer and courier--store components, or store--buyer and buyer-courier components, as well as markets where buyers are single-minded. In any such market, an efficient with-tip equilibrium can be computed in polynomial time. 

Several interesting questions remain, offering avenues for future research:
\begin{itemize}
    \item We have assumed that stores set prices equal to their costs to turn off the inefficiency that welfare-optimal trade cannot occur when prices are higher than costs.
    However, in a free market, store prices are influenced by demand and supply, which in our model correspond to buyer orders and courier availability.
    It remains an open problem to define an equilibrium that accounts for store incentives, in setting store prices, while ensuring the  existence of an equilibrium in a three-sided market.
    \item Our model does not require budget balance and allows platforms to subsidize delivery. This is a key aspect in deriving \cref{lem:exist_courier_plan_serve,lem:exist_courier_plan_serve_equal_size}. The subsidy on delivery enables couriers to receive high utility, and forces buyers to pay high tips should they deviate from a with-tip equilibrium. It is interesting to analyze with-tip equilibrium existence and efficiency without delivery subsidy.
    \item The structure in courier costs or buyer valuations we identify are sufficient conditions to ensure the existence of efficient with-tip equilibria. What are the necessary conditions for the existence of with-tip efficient equilibria?
    \item Can we find a polynomial-time algorithm that outputs with-tip equilibrium with good welfare in every market, not just the ones with specific structures on delivery costs and buyer valuations?
    \item Our model is static, which captures well settings with time-sensitive deliveries--- buyers who do not receive their food within a short period may no longer value the delivery. It is worth considering a multi-period game, where buyers not attaining delivery for one period can wait until the platform boosts delivery compensations associated with their orders in some future periods.
\end{itemize}

\newpage
\bibliographystyle{ACM-Reference-Format}
\bibliography{sample-bibliography}

\appendix
\section{Closed-Form Solution for Minimum Tip}\label{app:min_tip}
We now give the closed form solution for $\underline{t}_{bs}$.
\begin{lemma}
    Given delivery compensation $\mathbf{w}$ and other buyers' tips $\mathbf{t}_{-b}$, buyer $b$'s minimum tip required to have some courier deliver from a store $s$ to her is given by 
$$\underline{t}_{bs}=\min_d\{\max\{0,c_d(b,s)-w_{bs},\max_{(b',s')\neq (b,s)}\{w_{b's'}+t_{b's'}-c_d(b',s')-w_{bs}+c_d(b,s)\}\}\}$$
where $t_{b's'}=0$ for $b'=b, s'\neq s$.
\end{lemma}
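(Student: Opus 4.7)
The plan is to unpack the definition of $\underline{t}_{bs}$ directly. Fix a buyer $b$ and a store $s$, and set $t_{bs'} = 0$ for all $s' \neq s$, as prescribed. For each courier $d$ individually, I would first derive the smallest tip $t_{bs}^d$ for which $(b,s) \in \BR_d(\mathbf{w}, \mathbf{t}_b, \mathbf{t}_{-b})$, and then argue $\underline{t}_{bs} = \min_d t_{bs}^d$ since a valid tip must make at least one courier willing to deliver.

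The per-courier step rests on a clean characterization: $(b,s) \in \BR_d(\mathbf{w}, \mathbf{t})$ if and only if $u_d(b,s) \geq 0$ and $u_d(b,s) \geq u_d(b',s')$ for every $(b',s') \in O$. This requires a brief case analysis across the two branches of the $\BR_d$ definition. In the branch where some order yields strictly positive utility, $(b,s) \in \BR_d$ forces $u_d(b,s)$ to be the overall maximum, which is in particular strictly positive. In the other branch, $(b,s) \in \BR_d$ forces $u_d(b,s) = 0$, and by the branch hypothesis no other order exceeds zero; hence $u_d(b,s)$ is still the maximum and is nonnegative. Conversely, the two conditions drop into one of the branches. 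Substituting $u_d(b,s) = t_{bs} + w_{bs} - c_d(b,s)$ and $u_d(b',s') = t_{b's'} + w_{b's'} - c_d(b',s')$, these become the lower bounds
\[
t_{bs} \geq c_d(b,s) - w_{bs} \quad\text{and}\quad t_{bs} \geq t_{b's'} + w_{b's'} - c_d(b',s') - w_{bs} + c_d(b,s) \text{ for every } (b',s') \neq (b,s).
\]
Combining with the feasibility constraint $t_{bs} \geq 0$, the smallest $t_{bs}$ for which $d$ accepts is exactly the max of these three quantities, which matches $t_{bs}^d$ as written in the stated formula.

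Finally, taking $\min_d$ commutes with the existential over couriers because each per-courier acceptance condition is monotone in $t_{bs}$: raising $t_{bs}$ strictly increases $u_d(b,s)$ while leaving every other $u_d(b',s')$ unchanged (since we froze $t_{bs'} = 0$ for $s' \neq s$), so once $d$ accepts at tip $t$, $d$ continues to accept at any larger tip. Thus $\underline{t}_{bs} = \min_d t_{bs}^d$, yielding the claimed closed form. I expect the only real subtlety to be the two-branch case analysis for $\BR_d$ in the characterization step; once that equivalence is in hand, the rest is routine algebra on finitely many linear inequalities.
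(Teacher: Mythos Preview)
Your approach is essentially the paper's: characterize, for each courier $d$, the smallest tip making $(b,s)$ a best response, then minimize over $d$. Your case analysis on the two branches of $\BR_d$ and your monotonicity justification for $\min_d$ are in fact more explicit than what the paper writes.

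There is one small gap. You set $t_{bs'}=0$ for $s'\neq s$ ``as prescribed,'' but in the paper's definition
\[
\underline{t}_{bs}(\mathbf{w},\mathbf{t}_{-b}) = \min\{\,t_{bs} \mid \exists d:\ (b,s)\in \BR_d(\mathbf{w},\mathbf{t}_b,\mathbf{t}_{-b})\,\},
\]
the full vector $\mathbf{t}_b$ appears inside $\BR_d$, so the value of the other coordinates $t_{bs'}$ is not a priori fixed; the ``where $t_{b's'}=0$'' clause in the lemma is a conclusion, not an assumption. The paper therefore first argues that setting $t_{bs'}=0$ is without loss: if $(b,s)\in\BR_d$ with some $t_{bs'}>0$, then since $u_d(b,s)\geq u_d(b,s')=w_{bs'}+t_{bs'}-c_d(b,s')$, lowering $t_{bs'}$ to $0$ only slackens that constraint and keeps $(b,s)\in\BR_d$. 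Add this one-line observation at the start and your proof is complete.
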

\begin{proof}
    We first prove that when buyer $b$ finds the tip profile $\mathbf{t}_{b}$ with the lowest $t_{bs}$. W.l.o.g., $\mathbf{t}_{bs'}=0$ for $s'\neq s$. Suppose $\exists s'\neq s, t_{bs'}>0$ and $(b,s)\in \BR_d(\mathbf{w},\mathbf{t}_b,\mathbf{t}_{-b})$ for a courier $d$. Then $w_{bs}-c_d(b,s)+t_{bs}\geq w_{bs'}-c_d(b,s')+t_{bs'}$. By changing $t_{bs'}$ to $0$, it remains that $(b,s)\in \BR_d(\mathbf{w},\mathbf{t}_b,\mathbf{t}_{-b})$.
    
    \noindent The condition for a courier $d$ satisfying $(b,s)\in \BR_d(\mathbf{w},\mathbf{t}_b,\mathbf{t}_{-b})$ is
    \begin{align*}
        w_{bs}+t_{bs}-c_{d}(b,s) &\geq w_{b's'}+t_{b',s'}-c_{d}(b',s')  \text{\; for all \;} (b',s')\neq (b,s) \text{\; and}\\
        w_{bs}+t_{bs}-c_{d}(b,s) &\geq 0 \text{\; and \;} t_{bs}\geq 0
    \end{align*}
    For $b'\neq b$, the value of $\mathbf{t}_{b'}$ is already given by $\mathbf{t}_{-b}$. And we have already proved that w.l.o.g, $t_{bs'}=0$ for $s'\neq s$. So the minimum tip required for courier $d$ to delivery from $s$ to $b$ is \begin{equation}\label{eq:min_tip_for_courier_d}
    \max\{0,c_d(b,s)-w_{bs},\max_{(b',s')\neq (b,s)}\{w_{b's'}+t_{b's'}-c_d(b',s')-w_{bs}+c_d(b,s)\}\}
    \end{equation}
\end{proof}

\section{Missing Proofs and Lemmas in \cref{sec:motivate_use_tip}}\label{app:eq_existence}

We prove \cref{lem:exist_courier_plan_serve} and \cref{lem:exist_courier_plan_serve_equal_size} together. 
\existscourierPlan*
\existscourierPlanEqualSize*
\begin{proof}
    Given $\Omega$, define a two-sided market $M=(\Omega, D)$ where $\Omega$ are consumers and $D$ the items. In $M$, a consumer $o=(b,s)$ values item $d\in D$ at $v_{o}(d)=H-c_d(o)=H-c_d(b,s)$, where $H=\sum_{bs}v_b(s)+\sum_{bs}w_{bs}+\sum_{bsd}c_d(b,s)$. We build a one-to-one correspondence between courier plans that serves $\Omega$ to walrasian equilibria in $M$.

    Given $(\mathbf{w},\mathbf{y})$, let $\mathbf{y}'$ be the allocation of items $D$ to consumers $\Omega$ that follow $\mathbf{y}$ (i.e., $\forall o\in\Omega, d\in D, y'_{od}=y_{od}$). Let $\mathbf{u}'$ be the vector of courier utility (i.e., for $o_\mathbf{y}(d)=(b,s), u'_d=u_d(o_\mathbf{y}(d))=w_{bs}-c_d(o)$). We prove if $(\mathbf{w},\mathbf{y})$ serves $\Omega$, then $(\mathbf{y}',\mathbf{u}')$ is a walrasian equilibrium in $M$, where $\mathbf{y}'$ is the walrasian allocation and $\mathbf{u}'$ the walrasian price.

    First, an item $d$ in $M$ that is not allocated corresponds to a courier $d$ who do not deliver, and have $u'_d=0$. Second, each consumer $o$ in $M$ receives the favorite item. Consider a consumer $o=(b,s)$ receiving item $d\neq \emptyset$ with utility $H-c_d(b,s)-u'_d=H-w_{bs}$. If $o$ buys from any other item $d'$ which is not allocated $u'_{d'}=0$, she has utility $H-c_{d'}(b,s)-u'_{d'}=H-c_{d'}(bs)$. But being a courier plan, $(\mathbf{w},\mathbf{y})$ satisfies for courier $d': w_{bs}-c_{d'}(bs)\leq 0$. This means consumer $o$ has no incentive to deviate to some item $d'$ unallocated. Alternatively, if $o$ buys from any other item $d'$ who is allocated to a consumer $o'=(b',s')\neq (b,s)$, she has deviation utility $H-c_{d'}(b,s)-u'_{d'}=H-c_{d'}(b,s)-w_{b',s'}+c_{d'}(b',s')$. Being a courier plan, $(\mathbf{w},\mathbf{y})$ satisfies for courier $d': w_{b's'}-c_{d'}(b',s')\geq w_{bs}-c_{d'}(b,s)$, and  there is no incentive for consumer $o$ to deviate to buy item $d'$. Finally customer $o$ has weakly positive utility $H-w_{bs}>0$.

    We now look at the other direction. Given a walrasian equilibrium $(\mathbf{y}',\mathbf{u}')$ in $M$, we construct a courier plan that serves $\Omega$. As $|\Omega|\leq l$ and each customer $o$ has valuation $v_o(d)=H-c_d(o)>0$ towards any item $d$, by the first welfare theorem, all customers $o\in \Omega$ are allocated to some courier. To construct the courier plan $(\mathbf{w},\mathbf{y})$, for $o=(b,s)\notin \Omega$, set $w_{b,s}=0, y_{od}=0, \forall d$. For $o=(b,s)\in\Omega$ where in the walrasian equilibrium $o$ buys an item $d$, set $\forall d', y_{od'}=y'_{od'}$ and $w_{b,s}=u'_d+c_d(b,s)$. This construction guarantees that $(\mathbf{w},\mathbf{y})$ serves $\Omega$ if it is a courier plan. We now prove that if $(\mathbf{y}',\mathbf{u}')$ is a walrasian equilibrium in $M$, then $(\mathbf{w},\mathbf{y})$ is a courier plan.
    
    For any courier $d$ that delivers $o_\mathbf{y}(d)=(b,s)\neq \emptyset$, her utility is $w_{b,s}-c_d(b,s)=u'_d\geq 0$ because walrasian equilibrium $(\mathbf{y}',\mathbf{u}')$ satisfies all prices are weakly less than zero. Deviating to a store $o'\notin \Omega$ yields weakly less than zero utility. Deviating to a store $o'=(b',s')\in\Omega$, yields utility  $w_{b',s'}-c_d(b',s')$. But since every customer $o'\in \Omega$ are allocated an item $d'$ in the Walrasian equilibrium, $w_{b',s'}-c_d(b',s')=u'_{d'}+c_{d'}(b',s')-c_d(b',s')$. Since walrasian equilibrium $(\mathbf{y}',\mathbf{u}')$ guarantees customer $o'$ incentives, $H-c_{d'}(b',s')-u'_{d'}\geq H-c_{d}(b',s')-u'_d \Leftrightarrow u'_d\geq u'_{d'}+c_{d'}(b',s')-c_{d}(b',s')$. This means courier $d$ does not want to deviate to the store $o'$. For $d, o_\mathbf{y}(d)\in \BR_d(\mathbf{w})$.

    For any courier $d$ that does not deliver in $\mathbf{y}$, delivering an order $o'\notin \Omega$ yields utility weakly less than zero. Delivering an order $o'=(b',s')\in \Omega$, which is delivered by another courier $d'$ yields utility $w_{b',s'}-c_d(b',s')=u'_{d'}+c_{d'}(b',s')-c_d(b',s')$. But examining the incentive of customer $(b',s')$ in the walrasian equilibrium, $H-c_{d'}(b',s')-u'_{d'}\geq H - c_d(b',s')-0\Leftrightarrow u'_{d'}+c_{d'}(b',s')-c_d(b',s')\leq 0$. So $d$ does not deviate to $o'$. For $d, o_\mathbf{y}(d)\in \BR_d(\mathbf{w})$.

    We have established an one-to-one relationship between courier plans that serves $\Omega$ and walrasian equilibria in $M$. Since a walrasian equilibrium always exists for unit-demand supply market, there is always a courier plan that serves $\Omega$. By the first welfare theorem, walrasian equilibrium in $M$ achieves the optimal welfare, so every $o\in \Omega$ must be allocated in $\mathbf{y}$. Then in a courier plan, $\mathbf{y}$ must correspond to a minimum cost matching that covers all $\Omega$. This is because (i) edge weight in $M$ is defined as $H$ minus cost, (ii) $\mathbf{y}$ serves $\Omega$.
    
    We now upper bound the utility of couriers in the courier plan that serves $\Omega$. Use $\mathbf{y}^{'*}$ to denote the max weight matching of the market $M$ with $SW(M)$ being its weight. Use $\mathbf{y}^{'*}_{-d}$ to denote the max weight matching of the market $M$ without item $d$, with $SW(M\backslash d)$ being its weight. If a courier in has $o_\mathbf{y}(d)=\emptyset$, then $u_d(o_\mathbf{y}(d))=0$. If $d$ is matched to $(b,s)\in\Omega$, its utility satisfies
    \begin{align*}
        u_d(o_\mathbf{y}(d))=w_{bs}-c_d(b,s)=u'_d\leq SW(M)-SW(M\backslash d)
    \end{align*}
    The inequality is the highest achievable walrasian equilibrium price, proved in \cite{gul1999walrasian}. We can define the highest possible courier compensation $\mathbf{w}$ with this highest walrasian equilibrium price $\mathbf{u}'$. 
    When $l>|\Omega|$, it satisfies that the size of $\mathbf{y}^{'*}$ is equal to the size of $\mathbf{y}^{'*}_{-d}$, and all customers in $\Omega$ in market $M$ are matched for $\mathbf{y}^{'*}_{-d}$.
    \begin{align*}
         SW(M)-SW(M\backslash d) &= \sum_{(o,d)\in \mathbf{y}^{'*}}[H-c_d(o)] - \sum_{(o,d)\in \mathbf{y}^{'*}_{-d}}[H-c_d(o)]\\
         &=\sum_{(o,d)\in \mathbf{y}^{'*}_{-d}}c_d(o) - \sum_{(o,d)\in \mathbf{y}^{'*}}c_d(o)\\
         &= C_{\Omega}(G_D\backslash d)-C_{\Omega}(G_D)
    \end{align*}
    where $C_{\Omega}(G_D)$ is the cost of the minimum cost matching that covers $\Omega$ in $G_D$, and $C_{\Omega}(G_D\backslash d)$ is the cost of the minimum-cost matching that covers $\Omega$ without courier $d$. 
    
    When $l=|\Omega|$, the size of $\mathbf{y}^{'*}$ is larger than the size of $\mathbf{y}^{'*}_{-d}$ by 1. So $\mathbf{y}^{'*}_{-d}$ is the maximum weight matching that covers all but one customer in $\Omega$ in $M$. 
    \begin{align*}
         SW(M)-SW(M\backslash d) &= \sum_{(o,d)\in \mathbf{y}^{'*}}[H-c_d(o)] - \sum_{(o,d)\in \mathbf{y}^{'*}_{-d}}[H-c_d(o)]\\
         &=H+\sum_{(o,d)\in \mathbf{y}^{'*}_{-d}}c_d(o) - \sum_{(o,d)\in \mathbf{y}^{'*}}c_d(o)
         \geq \max_{b,s}v_b(s)
    \end{align*}
\end{proof}

\paretoDominant*
\begin{proof}
    As $\mathbf{t}=0$ and $(\mathbf{p},\mathbf{w},\mathbf{x})$ is a without-tip equilibrium, it satisfies that stores not bought from have zero purchase price, and orders not delivered have zero delivery compensation and zero tips $t_{bs}$=0. For couriers, as tips are zero, the set of orders that maximize utility with tips is the same as that without tips.
    
    Consider a buyer $b$ buying from a store $s\neq \emptyset$ in $\mathbf{x}$. As $\mathbf{t}_{-b}=0$ does not change couriers incentives, the minimum tip to have some courier delivers from store $s$ to her is zero $\underline{t}_{bs}=0$. So buying from $s$ remains the best option for $b$, i.e., $\forall s',\; v_b(s)-p_s-\underline{t}_{bs}=v_b(s)-p_s\geq v_b(s')-p_{s'}\geq v_b(s')-p_{s'}-\underline{t}_{bs'}$ where the first inequality is because $s\in \BR_d(\mathbf{p})$. A buyer $b$ not buying in $\mathbf{x}$ still does not buy now as purchase prices do not change.
\end{proof}

\ZeroTipEquivalence*
\begin{proof}
    For a store $s$ that is allocated in $\mathbf{x}$ to a buyer $b$, set $p'_{s}=p_s+t_{bs}$. For a store $s$ that is not allocated, set $p'_{s}=0$. For an order $(b,s)$ delivered by a courier $d$ in $\mathbf{x}$, set $w'_{bs}=w_{bs}+t_{bs}$. For an order $(b,s)$ not delivered by any courier in $\mathbf{x}$, set $w'_{bs}=0$. From the couriers' perspective, the delivery compensation plus tips associated with each order is the same: $w_{bs}+t_{bs}=w'_{bs}$, so the set of orders that maximizes each courier utility remains unchanged. We only need to prove that $s_\mathbf{x}(b)$ is still in the set of stores that maximizes each buyer utility in $(\mathbf{p}',\mathbf{w}',\mathbf{t}',\mathbf{x})$. 
    
    Consider a buyer $b$ buying from a store $s_\mathbf{x}(b)$. Let $\ut_{bs}(d)$ and $\ut'_{bs}(d)$ be the minimum tip required for a courier $d$ to deliver from store $s\neq s_\mathbf{x}(b)$ to $b$, under $(\mathbf{p},\mathbf{w},\mathbf{t},\mathbf{x})$ and $(\mathbf{p}',\mathbf{w}',\mathbf{t}',\mathbf{x})$ respectively. We first show that $\ut_{bs}(d)=\ut'_{bs}(d)$. According to \cref{eq:min_tip_for_courier_d}, their values are given by 
    \begin{align*}
        \ut_{bs}(d) &= \max\{0,c_d(b,s)-w_{bs},\max_{(b',s')\neq (b,s)}\{w_{b's'}+t_{b's'}-c_d(b',s')-w_{bs}+c_d(b,s)\}\}\\
        &= \max\{0,c_d(b,s),\max_{(b',s')\neq (b,s)}\{w_{b's'}+t_{b's'}-c_d(b',s')+c_d(b,s)\}\}\\
        \ut'_{bs}(d) &= \max\{0,c_d(b,s)-w'_{bs},\max_{(b',s')\neq (b,s)}\{w'_{b's'}+t'_{b's'}-c_d(b',s')-w'_{bs}+c_d(b,s)\}\}\\
        &= \max\{0,c_d(b,s),\max_{(b',s')\neq (b,s)}\{w'_{b's'}-c_d(b',s')+c_d(b,s)\}\}
    \end{align*}
    Here we have used the condition that $w_{bs}=w'_{bs}=0$ for $s\neq s_\mathbf{x}(b)$ in equilibrium, and $\mathbf{t}'=0$. For an order $(b',s')\neq(b,s)$, by construction it always satisfies that $w'_{b's'}=w_{b's'}+t_{b's'}$. So $\ut_{bs}(d)=\ut'_{bs}(d)$ and the minimum tip to get a delivery from store $s$ is the same $$\ut_{bs}=\min_d \ut_{bs}(d)= \min_d \ut'_{bs}(d)=\ut'_{bs}$$

    Consider a buyer $b$ which buys from store $s=s_\mathbf{x}(b)$ in $\mathbf{x}$. Now we are ready to prove that $s$ is still in the set of stores that maximizes buyer $b$'s utility in $(\mathbf{p}',\mathbf{w}',\mathbf{t}',\mathbf{x})$.
    
    When $s\neq \emptyset$, let $u_b(s')=v_b(s')-p_s-\ut_{bs'}$ and $u'_b(s')=v_b(s')-p'_{s'}-\ut'_{bs'}$ denote the utility of $b$ when buying from a store $s'$ in $(\mathbf{p},\mathbf{w},\mathbf{t},\mathbf{x})$ and $(\mathbf{p}',\mathbf{w}',\mathbf{t}',\mathbf{x})$ respectively. 
     As $(\mathbf{p},\mathbf{w},\mathbf{t},\mathbf{x})$ is a with-tip equilibrium, buyer $b$ pays the minimum tip $t_{bs}=\ut_{bs}$. It satisfies for buyer $b$ 
    and any store $s'$ that
    \begin{align*}
        u_b(s)=v_b(s) - p_s - t_{bs}=v_b(s) - p_s - \ut_{bs}\geq v_b(s') - p_{s'} - \ut_{bs'}=u_b(s')
    \end{align*}

    When $s=\emptyset$, its utility satisfies $\forall s, 0 \geq u_{b}(s)=v_b(s)-p_s-\ut_{bs}$. Now in $(\mathbf{p}',\mathbf{w}',\mathbf{t}',\mathbf{x})$, buyer $b$ when buying from a store $s$ has utility
    \begin{align*}
        u'_b(s) &= v_b(s)-p'_s-\ut'_{bs}=v_b(s)-p'_s-\ut_{bs}\\
        &\leq v_b(s)-p_s-\ut_{bs}\leq u_b(s)\leq 0
    \end{align*} 
\end{proof}

\subsection{Pricing with tips when equilibrium do not clear the market}\label{app:tips_eq_not_clear_market}
In this section, we demonstrate the benefit of tips when equilibrium are not required to clear the market. That is, unsold stores can have positive price and undelivered orders can have positive compensations and tips. Take the market in \cref{example:without_tip_bad}, the best without-tip equilibrium has welfare 0, when no transactions take place. In the contrary, the best with-tip equilibrium achieves the optimal welfare of 3. 
\begin{proposition}
    When equilibrium do not need to clear the market, there always exists a with-tip equilibrium whose welfare is at least $\mathit{OPT}/\min\{m,n,l\}$.
\end{proposition}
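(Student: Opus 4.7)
The plan is to realize as the equilibrium allocation a single trade $(b^*,s^*,d^*)$ that maximizes welfare across all buyer--store--courier triples. Define $W^* := \max_{b,s,d}[v_b(s)-c_d(b,s)]$ and let $(b^*,s^*,d^*)$ attain it; this forces $d^* \in \argmin_d c_d(b^*,s^*)$, so set $c^1 := c_{d^*}(b^*,s^*)$. Since the optimal allocation has at most $\min\{m,n,l\}$ trades whose welfares sum to $\mathit{OPT}$, some trade in it, and hence $W^*$, has welfare at least $\mathit{OPT}/\min\{m,n,l\}$. If $W^* \le 0$ then $\mathit{OPT}\le 0$ and the empty allocation with $p_s := \max_b v_b(s)$ and zero compensations and tips is trivially an equilibrium of welfare $0$, so assume $W^* > 0$.

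Construct an equilibrium $(\p,\w,\t,\x)$ with $x_{b^*s^*d^*} = 1$ as the only trade. Set $p_s := \max_b v_b(s)$ for every $s\neq s^*$ (permitted because unsold stores may carry positive prices in the relaxed notion), $p_{s^*} := W^*$, $w_{b^*s^*} := c^1$, and every other compensation and every tip to zero.

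Verification of courier incentives is immediate: $d^*$ obtains utility $c^1 - c^1 = 0$ from $(b^*,s^*)$, any other courier $d$ obtains $c^1 - c_d(b^*,s^*) \le 0$ since $d^*$ minimizes $c_d(b^*,s^*)$, and every other order has $w+t = 0$ and thus gives non-positive utility. So $(b^*,s^*) \in \BR_{d^*}$ and $\emptyset \in \BR_d$ for $d \neq d^*$. For buyers, the central calculation is the minimum tip for a buyer $b \neq b^*$ to obtain delivery from $s^*$. Applying the closed-form expression of \cref{app:min_tip} with $w_{bs^*} = 0$ and with $(b^*,s^*)$ the only order carrying nonzero $w+t$, the ``own'' term $c_d(b,s^*)$ dominates the cross term $c^1 - c_d(b^*,s^*) + c_d(b,s^*)$ for each $d$, so $\underline{t}_{bs^*} = \min_d c_d(b,s^*)$. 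Consequently, $b$'s utility from deviating to $s^*$ is $v_b(s^*) - W^* - \min_d c_d(b,s^*) \le 0$ by the definition of $W^*$; the high prices on the remaining stores rule out any deviation there. Thus $\emptyset \in \BR_b$ for every $b \neq b^*$. An analogous computation gives $\underline{t}_{b^*s^*} = 0$, so $b^*$'s utility from $s^*$ is $v_{b^*}(s^*) - W^* = c^1 \ge 0$, weakly exceeding utility from every other store, and the minimum-tip condition $t_{b^*s^*} = \underline{t}_{b^*s^*} = 0$ is met.

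The main obstacle is guaranteeing that other buyers have non-positive utility from deviating to $s^*$; this is exactly what forces the choice of $(b^*,s^*,d^*)$ as the \emph{global} welfare-maximizing triple rather than merely the best trade inside the optimal allocation, because then $v_b(s^*) - \min_d c_d(b,s^*) \le W^*$ by definition, giving precisely the slack absorbed by setting $p_{s^*} = W^*$.
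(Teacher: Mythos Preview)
Your proof is correct and follows essentially the same approach as the paper: pick a single welfare-maximizing trade $(b^*,s^*,d^*)$, price all other stores at $\max_b v_b(s)$, and set $w_{b^*s^*}$ to the minimum delivery cost so that only $d^*$ is willing to deliver. The only cosmetic differences are that the paper sets $p_{s^*}=v_{b^*}(s^*)$ (giving $b^*$ utility zero) whereas you set $p_{s^*}=W^*$ (giving $b^*$ utility $c^1\ge 0$), and that you explicitly treat the degenerate case $W^*\le 0$ and compute $\underline{t}_{bs^*}$ precisely via the closed form rather than just bounding it below by $\min_d c_d(b,s^*)$; both versions work, and your final paragraph is right that choosing the \emph{global} welfare-maximizing triple (as opposed to merely the best trade inside an optimal allocation) is exactly what makes the deviation bound $v_b(s^*)-\min_d c_d(b,s^*)\le W^*$ go through.
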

\begin{proof}
    Let $\mathbf{x}^\star$ be an allocation that achieves the optimal welfare $\mathit{OPT}$. In $\mathbf{x}^\star$ there are at most $\min\{m,n,l\}$ transacting buyer--store--courier triples, or at most $\min\{m,n,l\}$ trades. So there the trade that generates the largest welfare is guaranteed to generate welfare at least $\frac{1}{\min\{m,n,l\}}\mathit{OPT}$. Denote this trade by $(b^\star, s^\star, d^\star)$. We show the allocation where only $(b^\star, s^\star, d^\star)$ trades is always in a with-tip equilibrium with zero tips $\mathbf{t}=0$.
    
    Given the order $(b^\star, s^\star)$, $d^\star$ is the courier with the lowest cost for delivering $(b^\star, s^\star)$. Hence, set courier compensation $\mathbf{w}$ to be  $w_{bs}=0, \forall (b,s)\neq (b^\star,s^\star)$ and $w_{b^\star,s^\star}=c_{d^\star}(b^\star, s^\star)$. This way, no other couriers have incentives to serve $(b^\star,s^\star)$.

    For purchase prices, set $p_s=\max_{b,s}v_b(s)$ for all $s\neq s^\star$ and $p_{s^\star}=v_{b^\star} (s^\star)$. Buyer $b^\star$ do not need to pay any tips for the delivery $\underline{t}_{b^\star s^\star}=0$.  This way $b^\star$ is indifferent to buy from $s^\star$, while all other buyers $b$ have to pay $p_{s^\star}+\underline{t}_{b s^\star}$ to buy from $s$. Furthermore, for any $b\neq b^\star$ the minimum tip for some couriers to deliver $s^\star$ satisfies $\underline{t}_{b s^\star}+w_{bs^\star}=\underline{t}_{b s^\star}\geq \min_d{c_d(b,s^\star)}$. So for any other buyer $b$, purchasing from $s^\star$ results in utility $v_b(s^\star)-\min_d{c^d_{b s^\star}}-v_b^\star(s^\star)$. But since $(b^\star, v^\star, s^\star)$ is the trade with the largest welfare, $v_{b^\star} (s^\star)\geq v_{b^\star}(s^\star)-c_{d^\star}(b^\star,s^\star)\geq v_{b}(s^\star)-c_d(b,s^\star) \; \forall b,d$. So any other buyer $b$ will have a weakly smaller than zero of utility of purchasing and paying tips for $s^\star$. So buyers incentives are satisfied.
\end{proof}

\section{Missing Proofs and Lemmas in \cref{sec:general_markets}}\label{app:general_markets}

\OPTHard*
\begin{proof}
    We reduce from the 3DM, one of the first 21 NP-complete problems proved \citep{karp1975computational}. 3DM asks the following question: Given a 3-regular hypergraph $G=(B,S,D)$ with hyperedges $T\subset B \times S \times D$ where $|B|=|S|=|D|=n$, is there a perfect matching? 
    
    we construct a following market $M$ with $n$ buyers, stores and couriers. Buyers' valuations are always 1. Couriers' costs are set as 
    \begin{equation}
    c_d(b,s) =
    \begin{cases}
      0 & \text{if $ (b,s,d)\in T$,}\\
      1 & \text{otherwise.}
    \end{cases}       
    \end{equation}
    If $G$ has a perfect matching of size $n$, then $M$ has an allocation of welfare $n$ defined by the perfect matching. If $M$ has an allocation $\mathbf{x}$ of welfare $n$, for each $(b,s,d)$ where $x_{bsd}=1$, couriers' cost must be zero $c_d(bs)=0$. So there exists a hyperedge $(b,s,d)\in T$. Furthermore, all buyer, stores and couriers are involved. This allocation corresponds to a perfect matching in $G$.
\end{proof}

\highestcourierPriceEq*
\begin{proof}
    Let $\Omega_\mathbf{x}=\{(b,s)|\sum_d x_{bsd}=1\}$ and $y$ a courier allocation defined by 
    $$ y_{od} =\begin{cases}
    1 & o=(b,s)\in \Omega \text{ and } x_{bsd}=1,\\
    0 & \text{otherwise.}
    \end{cases} $$
    
    For $(\p,\w,\bt,\x)$ to be an equilibrium, $(\w,\mathbf{y})$ must be a courier plan that serves $\Omega$. \cref{lem:exist_courier_plan_serve} and \cref{lem:exist_courier_plan_serve_equal_size} state the existence of a courier plan $(\bar{\w},\mathbf{y}')$. These two lemmas also require that both $\mathbf{y}$ and $\mathbf{y}'$ are a minimum-cost matching that covers $\Omega$ in the bipartite graph $G_D=(O,D)$. Then $(\bar{\w}, \mathbf{y})$ is a courier plan that serves $\Omega$.\footnote{The proof for \cref{lem:exist_courier_plan_serve} and \cref{lem:exist_courier_plan_serve_equal_size} build a one-to-one correspondence between courier plans and Walrasian equilibrium in a two-sided market $(\Omega,D)$. $\mathbf{y}$ and $\mathbf{y}'$ corresponds to the Walrasian allocation, and $\w,\bar{\w}$ corresponds to the Walrasian price. By the second welfare theorem, $(\mathbf{y},\bar{\w})$ corresponds to a Walrasian equilibrium, so is a courier plan as well.} This shows that courier incentives are satisfied.  

    We now check buyer incentives. Consider a buyer $b$ buying from store $s_\x(b)$. As $\bt=0$ and the courier plan guarantees couriers to deliver for $(b,s_\mathbf{x}(b))$, the minimum tip buyer $b$ offers to get delivery from $s_\mathbf{x}(b)$ is zero. For buying from another store $s\neq s_\mathbf{x}(b)$, \cref{eq:highest_minimum_tip} gives the minimum tip required to get a delivery 
    $$\ut_{bs}=\min_d c_d(b,s)+\bar{u}_d$$
    When $|\Omega|=l$, it satisfies that for all couriers $\bar{u}_d>\max_{bs}\{v_b(s)\}$. So buyers utility buying from any other $s$ is negative. When $|\Omega|<l$, $\bar{u}_d$ is the highest possible courier utility over all courier plans. So the minimum tip $\ut_{bs}$ required for buyer $b$ to get delivery from store $s\neq s_\x(b)$ is the highest possible in $(\bar{\w},\mathbf{y})$. As buyers do not deviate in $(\p,\w,\bt,\x)$, she does not in $(\p,\bar{\w},\bt,\x)$ either.
\end{proof}

\testAlloc*
\begin{proof}
    The only if direction. When $\x$ is in some equilibrium, we have proved that it is always in an equilibrium of the form $(\p,\bar{\w},\bt,\x)$ where $\bt=0$. The minimum tip for buyer $b$ to buy from store $s$ it $\ut_{bs}$. From a buyer $b$'s perspective, her realized valuation for store $s$ is exactly $v^x_{bs}$, as she also has to pay tips to get the delivery. In equilibrium $(\p,\w,\bt,\x)$ all buyers buy from their favorite store, and stores unsold has price zero. These two conditions correspond to a Walrasian equilibrium in the market defined by $G_\x$. By the first welfare theorem, $\mathbf{z}$ must be a maximum weight matching in $G_\x$. 

    The if direction. $\mathbf{z}$ being a maximum weight matching defines a Walrasian equilibrium $(\mathbf{z},\p)$ in the two-sided market $G_x$, where $p_s=0$ for unallocated stores. Then $(\p,\bar{\w},\bt,\x)$ is an equilibrium, where $\bt=0$. This is because we have already shown that $\bar{\w}$, zero tips, and $\x$ satisfy courier incentives. Buyer incentives are satisfied by the Walrasian equilibrium in $G_x$. 
\end{proof}

\allcouriersDeliver*
\begin{proof}
    \cref{lem:exist_courier_plan_serve_equal_size} shows with courier compensation $\w$, any courier utility is larger than the maximum buyer valuations $\bar{u}_d >\min_{bs}v_b(s)$. This means the minimum tip $\ut_{bs}$ in  \cref{eq:highest_minimum_tip} for $(b,s)\notin\Omega_\x$ is larger than buyer valuation. Thus any edge not in $\mathbf{z}$ has negative weights and any edge in $\mathbf{z}$ has weakly positive weights. By \cref{lem:test_alloc}, $\x$ is in some equilibrium.
\end{proof}

\PolyCheck*
\begin{proof}
    For any feasible $\mathbf{x}$, we can calculate the 
    weight of edges in graph $G_\x$ to invoke \cref{lem:test_alloc}. Following their expressions in \cref{lem:exist_courier_plan_serve} and \cref{lem:exist_courier_plan_serve_equal_size}, calculating $\bar{u}_d$ for all couriers takes polynomial time, as each $\bar{u}_d$ requires two computations for minimum cost matching that covers $\Omega_\x$ or all but one element in $\Omega_\x$. Calculating the minimum tip $\ut_{bs}$ for all buyers $b$ and stores $s$, and finding the value of max weight matching in $G_\x$ takes polynomial time as well.
\end{proof}

\section{Missing Proofs in \cref{sec:market_structures}}\label{app:market_structures}
\divisiblecourierCostOpt*
\begin{proof}
    We reduce the problem of finding the optimal welfare to a standard \emph{minimum cost flow} problem, solvable in polynomial time using existing algorithms. We demonstrate this for $c_d(b,s)=c(b,s)+c_d(s)$, while the similar case $c_d(b,s)=c(b,s)+c_d(b)$ is omitted. 
    
    Given a market, construct a flow network. Figure~\ref{fig:OPTDivisibleCost} illustrates such a network for a market with two buyers, two stores and three couriers. There is a source node $S_o$, a sink node $S_i$, a node for each buyer $b\in B$, order $o\in B\times S$, store $s\in S$, courier $d\in D$, and an additional dummy node for each store. There exists a directed edge from the source to each buyer node, from each buyer to each order that involves the buyer, from each order to the store that is involved in the order, from each store to the dummy node of the same store, from each dummy store to each courier, and from each courier to the sink. All edges have capacity one. An edge between a buyer $b$ and an order $(b,s)$ has cost $-v_b(s)$, an edge between an order $(b,s)$ and a store $s$ has cost $c(b,s)$, an edge between a dummy store $s'$ and a courier have $c_d(s)$ as cost. The net supply or demand are zero for all nodes except the source and the sink node.

    Each integer flow corresponds to a feasible allocation. To see this, an integer flow ensures unit-demand, unit-capacity, and unit-supply constraints, as each buyer node receives at most one unit of flow from the source, and each store node send out one unit of flow to its dummy node, and each courier node send out one unit of flow to the sink. The path each unit of flow goes through defines a trade $x_{bsd}=1$.
    
    To solve for the optimal welfare, set net supply of the source node equal to the net demand of the sink node to be $f=1,2,...,\min\{m,n,l\}$. Solve the minimum cost flow problem for each $f$ and take the flow with the minimum cost over all $f$. By the integrality theorem, any minimum cost network flow problem instance whose demands data are all integers has an optimal solution with integer flow on each edge. 
    Each flow corresponds to a feasible allocation $\mathbf{x}$, and the cost of a flow equals to the negation of the welfare of $\mathbf{x}$. Hence finding the minimum cost flow among all $f=1,2,...,\min\{m,n,l\}$ is finding the allocation with the optimal welfare.
\end{proof}

\DivisiblecourierCost*
\begin{proof}
    We now continue the proof for $c_d(b,s)=c(b,s)+c_d(s)$ when not all stores are bought from $|\Omega_\x|<n$. The main difference from the previous case $|\Omega_\x|=n$ is that alternative paths can include stores not bought from in $\x$ now. But the analysis are similar.
    
    The two matching $\z$ and $\z'$ define some alternating paths and cycles. The proof for the case $|\Omega_\x|=n$ already works for alternative paths and cycles that do not contain unmatched stores in $\z$. So we can focus on alternating paths that involve stores that are not matched in $\z$. Stores unmatched in $\z$ cannot exist in any alternating cycles. An alternating path $p$ is denoted as 
    $$((b_0),s_1,b_1,s_2,b_2,\ldots, b_{t-1},s_t),$$ where for every $q\in \{1,\ldots, t-1\}$, $$z_{b_q,s_q}=1\mbox{ and }z'_{b_q,s_q}=0,$$ and for every $q\in \{1,\ldots, t\}$,  $$z_{b_{q-1},s_q}=0\mbox{ and } z'_{b_{q-1},s_q}=1.$$ Per discussion of the previous paragraph, an alternating path always end at some store $s_t$ that is not matched by $\z$. An alternating path can either start from 1) $b_0$, a buyer allocated by $\z'$ but not by $\z$; 2) or $s_1$, a store allocated by $\z$ but not by $\z'$. 

    Similar as the proof for $\Omega_n$, define $\z^\pi$ to be the buyer--store allocation of $\z$ when restricted to the alternating path $\pi$, and $\z^{\backslash \pi}=\z-\z^{\pi}$ the buyer--store allocation of $\z$ not involving buyers or stores in $\pi$. Let $\mathbf{x}^\pi$ and $\mathbf{x}^{\backslash \pi}$ be the part of $\mathbf{x}$ that does or does not involve $\pi$ respectively. Similarly, define $\mathbf{z}^{'\pi}$ to be $\z'$ restricted to the alternating path $\pi$. 

    \paragraph{Case I. alternating path starting at $s_1$.} This case is entirely the same as Case I. of $|\Omega_\x|=n$.
    
    \paragraph{Case II. alternating path starting at $b_0$.} If the alternative path starts from $b_0$, let $d_0$ be the courier who are willing to deliver the order $(b_0,s_1)$ with the lowest tip $d_0=\argmin_d{c_d(b_{0},s_1)+\bar{u}_d(\mathbf{x})}$. Use $C_{\Omega_\x}(G_D)$ and $C_{\Omega_\x}(G_D\backslash d_0)$ to denote the cost of minimum cost matching in $G_D$ that covers $\Omega_\x$ with and without $d_0$ respectively. As $\x$ is the welfare-optimal allocation, $C_{\Omega_\x}(G_D)$ is also the total courier cost for $\x$. Let $\mathbf{y}^{\backslash d_0}$ be the minimum cost matching in $G_D$ that covers $\Omega_\x$ without $d_0$. 
    
    Let $\mathbf{x'}$ be the minimum-cost allocation that satisfies $\sum_d x'_{bs}= z^{'\pi}_{bs}+z^{\backslash \pi}_{bs}$. To deliver the orders in $ \z^{'\pi}+\z^{\backslash \pi}$, one possible way is to have courier $d_0$ deliver $(b_{0},s_1)$ and all other buyers' orders being delivered by the same couriers as in $\mathbf{y}^{\backslash d_0}$, that is, a buyer $b$ receives delivery from a courier $d$ that satisfies $y_{od}^{\backslash d_0}=1$ for $o=(b,s)\in \Omega_x$.
    The courier--buyer part of the cost of this matching is $c_{d_0}(b_1)+[C_{\Omega_\x}(G_D\backslash d_0)-\sum_{bs:z_{bs}=1}c(b,s)]$. So total courier cost in $\mathbf{x'}$ is weakly smaller than the above stated way of matching couriers.
    \begin{align*}
        W(\mathbf{x'}) \geq \sum_{bs:z_{bs}^{'\pi}=1}[v_b(s)-c(b,s)]+\sum_{bs:z_{bs}^{\backslash \pi}=1}[v_b(s)-c(b,s)] - c_{d_0}(b_1) -C_{\Omega_\x}(G_D\backslash d_0)+\sum_{bs:z_{bs}=1}c(b,s)
    \end{align*} 
    We also write out the total welfare of $W(\mathbf{x}^{\backslash \pi})$ by expressing the courier cost indirectly through the courier cost of $\mathbf{x}^\pi$.
    \begin{align*}
        W(\mathbf{x}^{\backslash \pi}) = \sum_{bs:z_{bs}^{\backslash \pi}}v_{b}(s) - [C_{\Omega_\x}(G_D)-\sum_{bs:z^\pi_{bs}=1}c(b,s)-\sum_{bd:\sum_s x^p_{bsd}=1}c_d(b)]
    \end{align*}
    Putting the two inequalities together
    \begin{align*}
        W(\mathbf{x}') - W(\mathbf{x}^{\backslash \pi}) & \geq \sum_{bs:z_{bs}^{'\pi}=1}[v_b(s)-c(b,s)] - c_{d_0}(s_1) - [C_{\Omega_\x}(G_D\backslash d_0)-C_{\Omega_\x}(G_D)]-\sum_{bd:\sum_d x^\pi_{bsd}=1}c_d(b)
    \end{align*}
    By $\mathbf{x}$ having larger welfare than $\mathbf{x'}$
    \begin{align*}
        \sum_{bs:z^\pi_{bs}=1}[v_b(s)-c(b,s)]-\sum_{bd:\sum_s x^\pi_{bsd}=1}c_d(b)=W(\mathbf{x}^\pi)=W(\mathbf{x})- W(\mathbf{x}^{\backslash \pi}) \geq W(\mathbf{x}') - W(\mathbf{x}^{\backslash \pi})
    \end{align*}
    Combining the last two inequalities and simplifying
    \begin{align*}
        \sum_{bs:z^\pi_{bs}=1}v_b(s) &\geq \sum_{bs:z^\pi_{bs}=1}[v_b(s)-c(b,s)] \geq \sum_{bs:z_{bs}^{'\pi}=1}[v_b(s)-c(b,s)] - c_{d_0}(s_1)  - \bar{u}_{d_0}(\x)\\
        &\geq \sum_{bs:z_{bs}^{'\pi}=1}[v_b(s)-\ut_{bs}] = \sum_{bs:z_{bs}^{'\pi}=1} v^\x_b(s)
    \end{align*}
    In the last inequality, we used $\ut_{bs}\geq c(b,s)$ for $(b,s)\notin \Omega_\x$. So for any alternating path $p$ that starts from $b_0$, $\z^\pi$ have weakly larger welfare than any other matching $\z'^{\pi}$ on $p$.

    The proof for courier costs being divisible to buyer--store and store-courier part $c^d(b,s)=c_d(s)+c(b,s)$ is quite similar to the proof above for $c^d(b,s)=c_d(b)+c(b,s)$. One just goes through all four types and alternating paths and cycles, and swapping out the buyer-courier cost to store-courier cost in the proof. So we omit that proof. 
\end{proof}

\OneBuyerOneStore*
\begin{proof}
    Let $\x$ be a feasible allocation that achieves the optimal welfare $W(\x)=\mathit{OPT}$. If $|\Omega_\x|=l$, \cref{cor:all_courier_deliver} shows $\x$ is in some equilibrium. We focus on the case where $|\Omega_\x|<l$. Let $\z$ be the buyer allocation induced by $
    \x$ in $G_\x$. We prove that $\z$ is a maximum weight matching in $G_\x$, and by \cref{lem:test_alloc}, $\mathbf{x}$ is indeed in an equilibrium.

    For a store $s_0$ that is valued by some buyer $b_0$ but not matched in $\x$, let $d_0$ be the courier who is willing to deliver order $(b_0,s_0)$ with the lowest tip $d_0=\argmin_{d}\{c_{d'}(b_0,s_0)+\bar{u}_{d}(\x)\}$. By \cref{lem:exist_courier_plan_serve}, courier $d_0$'s utility is written as $\bar{u}_{d_0}(\x)=C_{\Omega_\x}(G_D\backslash d_0)-C_{\Omega_\x}(G_D)$. Here $C_{\Omega_\x}(G_D)$ is the minimum cost of delivering all orders in $\Omega_\x$, also the total courier cost for $\x$ because $\x$ is optimal.
    $C_{\Omega_\x}(G_D\backslash d_0)$ is the cost of the minimum cost matching that covers $\Omega_\x$ without courier $d_0$ in the bipartite graph $G_D=(O,D)$. Let $\mathbf{y}^{\backslash d_0}$ be the courier allocation in $G_D$ that covers $\Omega_\x$ with the minimum cost without courier $d_0$.
    
    Consider an allocation where $d_0$ delivers the order $(b_0,s_0)$, and all buyer--store pairs in $(b,s)\in\Omega_x, b\neq b_0, s\neq s_0$ are delivered by the same couriers as in $\mathbf{y}^{\backslash d_0}$. As $\x$ is welfare-optimal, 
    \begin{align*}
        v_{b_0}(s_0) - c_{d_0}(b_0,s_0) + \sum_{bs:z_{bs}=1}v_b(s) - C_{\Omega_\x}(G_D\backslash d_0) &\leq W(\mathbf{x})=\sum_{bs:z_{bs}=1}v_b(s) - C_{\Omega_\x}(G_D)\\
        v_{b_0}(s_0) - c_{d_0}(b_0,s_0) &\leq C_{\Omega_\x}(G_D\backslash d_0) - C_{\Omega_\x}(G_D) = \bar{u}_{d_0}(\x)\\
        v_{b_0}(s_0) &\leq \ut_{b_0s_0}
    \end{align*}
    This shows all edges connecting to an unmatched store $s_0$ has weight weakly smaller than zero in $G_\x$.
    
    For a store $s_0$ that is matched to a buyer $b_0$ and delivered by a courier $d_0$ in $\x$, but also valued by some buyer $b_1$, we now show that $v_{b_1}(s_0)-\ut_{b_1s_0}\leq v_{b_0}(s_0)$. Similar as the above case, one feasible allocation $\mathbf{x}'$ is to replace the buyer--store pair $(b_0,s_0)$ by $(b_1,s_0)$, and have courier $d_0$ deliver for $(b_1,s_0)$. All other orders in $\Omega_\x$ are delivered as the same way as $\mathbf{y}^\backslash d_0$, which have a total cost weakly smaller than $C_{\Omega_\x}(G_D\backslash d_0)$ because the order $(b_0, s_0)$ no longer needs to be delivered.
    \begin{align*}
        W(\mathbf{x}) &= v_{b_0}(s_0) +\sum_{bs:z_{bs}=1,b\neq b_0} v_b(s) - C_{\Omega_\x}(G_D)\\
        &\geq W(\mathbf{x}') \geq v_{b_1}(s_0) - c_{d_0}(b_1,s_0) + \sum_{bs:z_{bs}=1,b\neq b_0} v_b(s) - C_{\Omega_\x}(G_D\backslash d_0)\\
        v_{b_0}(s_0) &\geq v_{b_1}(s_0) - c_{d_0}(b_1,s_0)-\bar{u}_{d_0}(\x)=v_{b_1}(s_0)-\ut_{b_1s_0}
    \end{align*}
    This shows for any store $s_0$ already matched in $\x$, its matched edge has weakly larger weight than any edges not matched in $G_\x$. This completes the proof that $\mathbf{x}$ is the maximum weight matching in $G_\x$.
\end{proof}

\section{Zero Store Costs and Store Incentives}\label{sec:store_incentives}
Throughout the paper we made two statements on stores. First, we assume stores charge the platform its product price, which is equal to its cost. Second, we normalized the store costs to zero, and in doing so normalized the product prices to zero as well. 
In \cref{app:zero_store_cost}, we show that it is without loss to assume zero store cost and product price. In \cref{app:ce_existence}, we formulate a competitive equilibrium following the literature on multilateral contracting \citep{hatfield2011multilateral} and on competitive equilibrium in trading networks \citep{ostrovsky2008stability,hatfield2013stability}. This competitive equilibrium definition addresses store incentives, satisfies the first welfare theorem. But it is NP-hard to determine its existence. 

\subsection{Generalizing to Nonzero Store Costs}\label{app:zero_store_cost} 
All our results generalize to nonzero costs, under the first assumption that every store charges the platform its product price equaling its cost. We will illustrate the case for with-tip equilibrium. The without-tip equilibrium follows suit.

Consider a market $M'$ with store cost, where each store $s$ has cost $c_s>0$, a buyer has valuation $v'_b(s)$, and a courier has cost $c_d(b,s)$ delivering the order $(b,s)$. Welfare associated with a trade $(b,s,d)$ is $v'_b(s)-c'_s-c_d(b,s)$. The purchase price $\p'$ includes the stores' product price and the platform's delivery fee, where the product price is set to store cost. Hence, $\forall s, p'_s\geq c_s$. 
Now define the \emph{store cost version of with-tip equilibrium} the same way as the with-tip equilibrium, with the only difference that stores not bought from (i.e., $\sum_{bd}x_{bsd}=0$) have purchase price equaling its cost $p'_s=c_s$.

Given $M'$, we can construct another market $M$ without store cost. A buyer $b$ has valuation $v_b(s)=v'_b(s)-c_s$ for store $s$, and a courier has cost $c_d(b,s)$ delivering the order $(b,s)$. Welfare associated with a trade $(b,s,d)$ in $M'$ is equal  to $v_b(s)-c_d(b,s)$.

Similarly, given a market $M$ without cost, we can construct a market $M'$ with cost by setting $v'_b(s)=v_b(s)+c_s$ for some store cost $c_s$.
\begin{proposition}
    $(\p,\w,\bt,\x)$ is a with-tip equilibrium in $M$ if and only if $(\p',\w,\bt,\x)$ is a store cost version of with tip equilibrium in $M'$, where $p'_s=p_s+c_s$. 
\end{proposition}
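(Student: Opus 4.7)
The plan is to show that the transformation $p'_s = p_s + c_s$ is a bijection between with-tip equilibria in $M$ and store-cost versions of with-tip equilibria in $M'$, by verifying that each component of the equilibrium definition carries over. The key observation is that shifting every store's valuation and price by the same constant $c_s$ cancels on the buyer side, while leaving the courier side completely untouched.

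First, I would verify invariance of the courier side. Since $\w$, $\bt$, $\x$, and the delivery costs $c_d(b,s)$ are identical across $M$ and $M'$, every courier's utility $w_{bs}+t_{bs}-c_d(b,s)$ is identical, and hence $\BR_d(\w,\bt)$ is the same in both markets. In particular, the closed-form expression for $\ut_{bs}(\w,\bt_{-b})$ from \cref{app:min_tip} depends only on courier-side quantities, so the minimum tip required is the same in $M$ and $M'$.

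Next, I would verify that buyer best responses coincide. For any buyer $b$ and store $s$, her net payoff in $M'$ when buying $s$ is
\begin{equation*}
v'_b(s) - p'_s - \ut_{bs} = (v_b(s)+c_s) - (p_s+c_s) - \ut_{bs} = v_b(s) - p_s - \ut_{bs},
\end{equation*}
which exactly matches her payoff in $M$. Therefore $\BR_b(\p,\w,\bt_{-b})$ in $M$ equals $\BR_b(\p',\w,\bt_{-b})$ in $M'$, and the buyer best-response condition holds in one market if and only if it holds in the other. The minimum-tip payment rule $t_{bs_\x(b)} = \ut_{bs_\x(b)}$ is also immediately preserved.

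Finally, I would check the market-clearing conditions on stores and orders. The requirement that stores not bought from have $p_s = 0$ in $M$ translates exactly to $p'_s = c_s$ in $M'$, which is the store-cost version of market clearing. The requirement that undelivered orders have zero compensation and zero tips involves only $\w$ and $\bt$, which are identical. Both directions of the equivalence then follow: going from $M$ to $M'$, nonnegativity $p'_s \geq c_s$ is automatic from $p_s \geq 0$; going back, nonnegativity $p_s \geq 0$ follows from the standing constraint $p'_s \geq c_s$ in the store-cost model. No step presents a real obstacle; the proposition is essentially a bookkeeping check that the affine shift respects every clause of \cref{def:with_tip_eq}.
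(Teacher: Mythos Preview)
Your proposal is correct and follows essentially the same approach as the paper: verify that courier incentives are unchanged (since $\w,\bt,c_d$ are identical), observe that $\ut_{bs}$ therefore coincides in both markets, and then check that the buyer payoff $v_b(s)-p_s-\ut_{bs}$ is invariant under the simultaneous shift $v'_b(s)=v_b(s)+c_s$, $p'_s=p_s+c_s$. If anything, your write-up is more thorough than the paper's, which omits explicit mention of the market-clearing and nonnegativity checks that you include.
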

This is very easy to check from the definition of equilibrium. Couriers incentives do not change because delivery compensation and tips remains the same. For a buyer, we can show the following two equivalence
$$\begin{cases}
    v_b(s)-p_s-\ut_{bs}\geq v_b(s')-p_{s'}-\ut_{bs'} &\Leftrightarrow v'_b(s)-p'_s-\ut_{bs}\geq v'_b(s')-p'_{s'}-\ut_{bs'}\\
    v_b(s)-p_s-\ut_{bs}\geq 0 &\Leftrightarrow v'_b(s)-p'_s-\ut_{bs}\geq 0
\end{cases}$$
So given a market $M'$ with tip, we can first solve for the with-tip equilibrium in $M$, then transfer it back to $M'$.

\subsection{Store Incentives through Competitive Equilibrium}\label{app:ce_existence}
In this section, we follow the standard definition of competitive equilibrium in the multilateral contracting literature. This competitive equilibrium definition includes stores' incentives, but does not permit tips. We write out the first and second welfare theorem, and show that a competitive equilibrium does not always exist. In fact, it is NP-hard to determine the existence of a competitive equilibrium. Since this is a separate part from the main body of paper, it is easier to redefine notations. 

Let $B,S,D$ denote the set of unit demand buyers, unit supply stores, and unit capacity couriers where $|B|=|S|=|D|=n$, denoted by $i,j,k$ respectively. The sets of all trades is denoted by $\Omega = B\times S\times D$, where $w_{ijk}$ means the trade where buyer $i$ purchase from store $j$, delivered by courier $k$. An agent $a\in B\cup S\cup D$ has valuation $v^a(w_{ijk})\in R$ for trade $w_{ijk}$ concerning this agent $a\in\{i,j,k\}$ and 0 for trades that do not concern her. $v^a(w_{ijk})$ can be positive or negative, representing a buyer's valuation, or a store or courier's cost. Each trade is associated with prices $p_\omega=(p^i_\omega,p^j_\omega,p^k_\omega)$ denoting the payments agents make, where $p^i_\omega\geq 0$ is the amount the buyer pays, $p^j_\omega\leq 0$ is the amount the store pays, or the negation of what the store receives, and $p^k_\omega\leq 0$ being the negation of what the courier receives. A pricing scheme assigns a price for all trades in $\Omega$ and is \emph{budget balanced} if $p^i_\omega+p^j_\omega+p^k_\omega=0$. An agent $a$ participating in trade $\omega$ has utility $u^a(\omega)=v^a(\omega)-p^a(\omega)$. A \textit{competitive equilibrium} $(\psi^\star,p^\star)$ is a set of trades that occur $\psi^\star\subseteq \Omega$, and a pricing scheme $p^\star$, such that (i) no agent participates in more than one trade in $\psi^\star$, (ii) the pricing scheme is budget balanced, and (iii) agents participate in their favorite trades in $\psi^\star$ at $p^\star$, with an outside option of utility zero.
As is common in the multilateral matching literature, the vital difference from competitive equilirbium(CE) in two sided is that we do not require trades that do not take place have price zero. We require budget balance instead.

\subsection{Existence of Competitive Equilibrium}
Just like CE in two sided markets, we can write out the \textit{primal integer program (PIP)}  and the \textit{dual of the linear program relaxation (DLPR)} of the central planner's welfare maximization problem.
\begin{align*}
    \text{max \;} & \sum_{ijk}x_{ijk} (v^i(w_{ijk})+v^j(w_{ijk})+v^k(w_{ijk}))\\
     & x_{ijk}\in\{0,1\} \qquad \forall i,j,k\\
     & \sum_{jk}x_{ijk}\leq 1 \quad \forall i; \quad \sum_{ik}x_{ijk}\leq 1 \quad \forall j; \quad \sum_{ij}x_{ijk}\leq 1 \quad \forall k
\end{align*}

\begin{align*}
    \text{min \;} & \sum_{i}u_i +\sum_{j}u_j+\sum_{k}u_k\\
     & u_{a}\geq 0 \qquad \forall a\in B\cup S\cup D\\
     & v^i(w_{ijk})+v^j(w_{ijk})+v^k(w_{ijk})\leq u_i+u_j+u_k \quad \forall i,j,k
\end{align*}

\begin{proposition}[First Welfare Theorem]
    The competitive equilibrium trades $\psi^\star$ maximizes welfare, even over fractional solution of the primal linear program relaxation.
\end{proposition}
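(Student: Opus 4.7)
The plan is to establish the First Welfare Theorem via linear programming duality, by showing that the agent utilities from any competitive equilibrium form a feasible solution to the DLPR whose objective value equals the CE welfare. Weak duality then pins everything down, and crucially gives the result even against fractional primal solutions.

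First, I would set up the dual-feasibility certificate. Fix a competitive equilibrium $(\psi^\star, p^\star)$, and for each agent $a \in B \cup S \cup D$ let $u_a$ denote her equilibrium utility: $u_a = v^a(\omega) - p^a_\omega$ if $a$ participates in some trade $\omega \in \psi^\star$, and $u_a = 0$ otherwise (taking the outside option). Clearly $u_a \geq 0$, so the non-negativity constraints of the DLPR are met. For the main dual inequality, fix any trade $w_{ijk} \in \Omega$. Since the pricing scheme assigns a price $p^\star_{w_{ijk}}$ to every trade (whether in $\psi^\star$ or not), each of agents $i, j, k$ has the option to deviate to $w_{ijk}$, and by the CE best-response condition must (weakly) prefer her equilibrium action: $u_a \geq v^a(w_{ijk}) - p^a_{w_{ijk}}$ for $a \in \{i, j, k\}$. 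Summing these three inequalities and invoking budget balance $p^i_{w_{ijk}} + p^j_{w_{ijk}} + p^k_{w_{ijk}} = 0$ yields
\[
u_i + u_j + u_k \;\geq\; v^i(w_{ijk}) + v^j(w_{ijk}) + v^k(w_{ijk}),
\]
which is exactly the dual constraint for $w_{ijk}$.

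Next, I would show that the dual objective $\sum_a u_a$ equals the CE welfare. Non-participating agents contribute $0$. Partitioning the remaining agents by the trades of $\psi^\star$ in which they appear (each participates in at most one such trade by the CE condition), budget balance gives $u_i + u_j + u_k = v^i(\omega) + v^j(\omega) + v^k(\omega)$ for every $\omega \in \psi^\star$. Summing over $\psi^\star$,
\[
\sum_a u_a \;=\; \sum_{\omega \in \psi^\star}\bigl[v^i(\omega) + v^j(\omega) + v^k(\omega)\bigr],
\]
which is precisely the welfare generated by $\psi^\star$.

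Finally, I would close the loop with weak LP duality. The previous two steps produced a feasible dual solution whose objective value equals $W(\psi^\star)$. By weak duality, $W(\psi^\star)$ is an upper bound on the optimum of the linear relaxation of the PIP, hence also on the optimum over integer primal solutions. Since $\psi^\star$ is itself a feasible integer primal solution attaining $W(\psi^\star)$, equality holds throughout, proving that $\psi^\star$ maximizes welfare even over fractional allocations. The main subtlety, and the only step requiring care, is justifying the dual-constraint derivation for \emph{every} trade $w_{ijk}$---not just the ones realized in $\psi^\star$---which relies on the fact that the CE pricing scheme is defined on all of $\Omega$ together with budget balance; once this is done cleanly, the rest is routine weak duality.
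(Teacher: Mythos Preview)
Your proof is correct and shares the same core idea as the paper's: use the competitive-equilibrium best-response inequalities $u_a \geq v^a(w_{ijk}) - p^a_{w_{ijk}}$ together with budget balance to show that CE welfare dominates any (fractional) primal allocation. The only difference is framing: the paper compares directly against an arbitrary feasible $x$ in the PLPR by summing the per-agent inequality $v^a(\psi^\star_a) - p^a(\psi^\star_a) \geq \sum x_{ijk}(v^a(w_{ijk}) - p^a(w_{ijk}))$ over all agents and cancelling prices, whereas you package the same inequalities as dual feasibility and invoke weak duality as a black box---but unwinding weak duality reproduces exactly the paper's summation, so the two arguments are essentially identical.
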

\begin{proof}
    The proof is almost the same with that in two-sided markets. Let $\psi^\star_a$ be the trade that agent $a$ takes part in $\psi^\star$, and $x$ be any feasible solution to the \textit{primal linear program relaxation (PLPR)}.
    \begin{align*}
        v^i(\psi_i^\star) - p^i(\psi^\star_i) \geq \sum_{jk}x_{ijk}(v^i(\omega_{ijk}) - p^i(\omega_{ijk}))
    \end{align*}
    The results follow by summing the left and right hand side of the inequality, and noting that all prices sum to zero $\sum_{a}p^a(\psi^\star_a)=0$, and agents not trading has $v^i(\psi^\star)=0$.
\end{proof}

\begin{proposition}[Second Welfare Theorem]
    Given an integral solution $x^\star$ to PLPR, and a solution $u^\star$ to DLPR, there exists a competitive equilibrium with trades $\psi^\star$ as directed by $x^\star$, and a pricing scheme $p^\star$, where $p_a^\star(w_{ijk}) = -u^\star_a+v_a(w_{ijk})+q_{ijk}/3$ where $q_{ijk}=(u_i^\star+u_j^\star+u_k^\star-v_i(w_{ijk})-v_j(w_{ijk})-v_k(w_{ijk}))$
\end{proposition}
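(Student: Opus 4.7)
The plan is to verify the three defining properties of a competitive equilibrium in turn: (i) no agent participates in more than one trade of $\psi^\star$, (ii) the prices $p^\star$ are budget balanced, and (iii) every agent is participating in a favorite trade. Property (i) is immediate because $x^\star$ is integral and satisfies the PIP capacity constraints $\sum_{jk}x_{ijk}\leq 1$, $\sum_{ik}x_{ijk}\leq 1$, $\sum_{ij}x_{ijk}\leq 1$, so each buyer, store, and courier appears in at most one $w_{ijk}$ with $x^\star_{ijk}=1$.

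For property (ii), I would just sum the three prices associated with an arbitrary trade $w_{ijk}$ and use the definition of $q_{ijk}$:
\[
\sum_{a\in\{i,j,k\}} p^\star_a(w_{ijk}) = -(u^\star_i+u^\star_j+u^\star_k) + (v_i(w_{ijk})+v_j(w_{ijk})+v_k(w_{ijk})) + q_{ijk} = 0.
\]
Thus $p^\star$ is budget balanced on every trade (not just the trades realized in $\psi^\star$), which is all that the definition requires.

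The main work is property (iii). For any agent $a$ and any trade $\omega$ involving $a$, the utility under $p^\star$ simplifies nicely:
\[
u^a(\omega) \;=\; v^a(\omega)-p^\star_a(\omega) \;=\; v^a(\omega) - \bigl(-u^\star_a+v^a(\omega)+q_\omega/3\bigr) \;=\; u^\star_a - q_\omega/3.
\]
Because $u^\star$ is feasible for DLPR, the constraint $v_i(\omega)+v_j(\omega)+v_k(\omega)\le u^\star_i+u^\star_j+u^\star_k$ gives $q_\omega\ge 0$, so this utility is bounded above by $u^\star_a$, and the DLPR nonnegativity constraint $u^\star_a\ge 0$ makes $u^\star_a$ at least the utility of the outside option $0$. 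It therefore suffices to show that agent $a$ actually attains utility $u^\star_a$ at her trade $\psi^\star_a$, i.e., that $q_{\psi^\star_a}=0$. This is exactly what complementary slackness between PLPR and DLPR yields: whenever $x^\star_{ijk}>0$ (here $=1$), the corresponding dual constraint binds, giving $q_{w_{ijk}}=0$. The case of an agent who does not participate in any trade of $\psi^\star$ is handled by noting that the outside option yields $0\le u^\star_a - q_\omega/3$ for every $\omega$, which again follows from $u^\star_a\ge 0$ and $q_\omega\ge 0$.

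The main obstacle, and the only point where care is needed, is the appeal to complementary slackness: it requires that $x^\star$ and $u^\star$ be \emph{optimal} for PLPR and DLPR respectively, not merely feasible. I would make this assumption explicit in the statement (or derive it from the wording ``solution''), after which the proof reduces to the three mechanical checks above.
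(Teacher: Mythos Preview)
Your approach is exactly the paper's: verify budget balance by summing the three prices, compute $u^a(\omega)=u^\star_a-q_\omega/3$, use dual feasibility for $q_\omega\ge 0$, and invoke complementary slackness to get $q_\omega=0$ on realized trades. Your observation that complementary slackness requires \emph{optimality} of $x^\star$ and $u^\star$ (not mere feasibility) is a useful clarification that the paper leaves implicit.

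There is one genuine slip in your treatment of an agent $a$ who does not participate in any trade of $\psi^\star$. You need to show that the outside option is weakly \emph{best}, i.e.\ $0\ge u^a(\omega)=u^\star_a-q_\omega/3$ for every $\omega$ involving $a$; you wrote the inequality in the wrong direction, and in any case ``$u^\star_a\ge 0$ and $q_\omega\ge 0$'' does not imply $u^\star_a-q_\omega/3\ge 0$ (nor $\le 0$). The correct argument is another application of complementary slackness, this time on the \emph{primal} capacity constraint: since $a$ is unmatched, $\sum_{\omega\ni a}x^\star_\omega=0<1$, so the corresponding dual variable satisfies $u^\star_a=0$, whence $u^a(\omega)=-q_\omega/3\le 0$ for all $\omega$. (The paper's proof simply omits this case.)
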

\begin{proof}
    It is easy to verify that prices sum to zero $p^\star_i(w_{ijk})+p^\star_j(w_{ijk})+p^\star_k(w_{ijk})=-u^\star_i-u^\star_j-u^\star_k+v_i(w_{ijk})+v_j(w_{ijk})+v_k(w_{ijk})+q_{ijk}=0$. Now by complementary slackness, if $x_{ijk}=1>0$, $q_{ijk}=0$. So $u_i^\star=v_i(w_{ijk})-p_i^\star(w_{ijk})$. For any other trades $x_{ij'k'}=0$, $u_i^\star=v_i(w_{ij'k'})-p^\star_i(w_{ij'k'})+q_{ij'k'}/3\geq v_i(w_{ij'k'})-p^\star_i(w_{ij'k'})$ so indeed the trade $(i,j,k)$ is preferred.
\end{proof}
Combinging hte first and the second welfare theorem, we have
 \begin{corollary}
     A competitive equilibrium exists if and only if the PLPR admits an integer solution. 
 \end{corollary}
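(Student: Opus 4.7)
The plan is to derive both directions directly from the two welfare theorems already stated in this subsection; no new technical machinery beyond standard LP duality is needed, so the argument will be brief.

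For the forward direction, suppose a competitive equilibrium $(\psi^\star, p^\star)$ exists. By condition (i) of the CE definition, no agent participates in more than one trade of $\psi^\star$, so the $0/1$ indicator vector $x^\star$ of $\psi^\star$ is a feasible integer solution to PIP (satisfying all three unit-capacity constraints). The First Welfare Theorem asserts that $\psi^\star$ maximizes total welfare even over fractional solutions to PLPR; since PLPR's objective equals total welfare, $x^\star$ is an integer optimizer of PLPR. Hence PLPR admits an integer optimal solution.

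For the reverse direction, suppose PLPR admits an integer optimal solution $x^\star$. The feasible region of PLPR is bounded (all variables lie in $[0,1]$) and the objective is finite, so by LP duality DLPR has an optimal solution $u^\star$ with matching objective value. Let $\psi^\star = \{w_{ijk} : x^\star_{ijk} = 1\}$. Applying the Second Welfare Theorem to the pair $(x^\star, u^\star)$ yields the explicit pricing scheme
\[
p_a^\star(w_{ijk}) \;=\; -u^\star_a + v_a(w_{ijk}) + q_{ijk}/3, \qquad q_{ijk} \;=\; u_i^\star + u_j^\star + u_k^\star - v_i(w_{ijk}) - v_j(w_{ijk}) - v_k(w_{ijk}),
\]
and the theorem's statement guarantees that $(\psi^\star, p^\star)$ forms a competitive equilibrium.

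The only subtlety worth flagging is the reading of ``PLPR admits an integer solution'': since the zero vector is trivially integer-feasible, the meaningful interpretation is that PLPR has an integer \emph{optimal} solution, i.e., the LP relaxation has no integrality gap. Under that reading, I anticipate no real obstacle; both implications collapse to one-line invocations of the welfare theorems combined with strong LP duality for the bounded linear program PLPR.
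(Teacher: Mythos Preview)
Your proposal is correct and matches the paper's approach: the paper simply states that the corollary follows by ``combining the first and the second welfare theorem,'' and your argument spells out precisely that combination (First Welfare Theorem for the forward direction, Second Welfare Theorem plus strong duality for the reverse). Your clarification that ``admits an integer solution'' must mean an integer \emph{optimal} solution is apt and implicit in the paper's use of complementary slackness in the proof of the Second Welfare Theorem.
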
    
\begin{proposition}
    It is NP-hard to determine if a competitive equilibrium exists in a three-sided markets.    
\end{proposition}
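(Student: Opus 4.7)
The plan is to reduce from 3-dimensional matching (3DM), invoking the preceding corollary that a CE exists iff the PLPR admits an integer optimal solution. Given a 3DM instance $G = (B,S,D,T)$ with $|B|=|S|=|D|=n$, I would construct a market with $n$ buyers, $n$ stores, and $n$ couriers as in the proof of Theorem~\ref{lem:OPT_eq_hard}: set the welfare of each trade $(i,j,k)$ to $1$ when $(i,j,k) \in T$ and to $0$ otherwise (e.g., give the buyer value $1$ on $T$-trades and everyone value $0$ elsewhere). The PLPR then maximizes $\sum_{(i,j,k) \in T} x_{ijk}$ subject to the unit-capacity matching constraints, so its fractional optimum equals the fractional $3$-dimensional matching number of $T$ and its integer optimum equals the integer matching number of $T$.

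The correspondence to argue is that a CE exists if and only if $G$ has a perfect matching. The forward direction is immediate: a PM yields an integer feasible LP solution of value $n$, which must be the LP optimum since the LP value is bounded above by $n$, and the Second Welfare Theorem then produces the CE. For the reverse direction, I would need to show that whenever no PM exists, the LP has a strict integrality gap, so that no integer solution attains the LP maximum and, by the First Welfare Theorem, no CE exists.

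The main obstacle is that a bare 3DM instance need not exhibit this integrality gap: if $T$ is sparse or admits no fractional PM, the LP and integer optima can coincide at some value strictly below $n$, so a CE would exist even though no PM does. To overcome this, I would preprocess $G$ into an instance $G' = (B',S',D',T')$ whose LP fractional matching value is forced to equal $|B'|$, by adjoining dummy elements and auxiliary triples designed so that a uniform fractional assignment covers every original and dummy element to extent one, while preserving the property that $G'$ admits an integer PM iff $G$ does. Under such preprocessing the LP admits an integer optimum iff an integer PM of size $|B'|$ exists, iff $G$ has a PM, so the NP-hardness of 3DM lifts through the corollary to NP-hardness of CE existence. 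The step I expect to be the most delicate is designing this padding gadget to simultaneously (i) guarantee a fractional PM in the augmented instance and (ii) avoid spurious integer PMs in $G'$ that do not correspond to any PM of $G$; an alternative I would consider is to reduce instead from a restricted variant of 3DM (for example, one in which every element appears in exactly two triples, where the uniform $1/2$ assignment is already a fractional PM) if its NP-hardness can be established or cited directly, bypassing the need for a padding gadget.
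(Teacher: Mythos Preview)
Your diagnosis of the obstacle is exactly right, and your \emph{alternative} route---reducing from a degree-regular restriction of 3DM so that a uniform fractional assignment is automatically a fractional perfect matching---is precisely what the paper does. The paper does not attempt any padding gadget; it simply starts from the classical NP-hard restriction of 3DM in which every element appears in exactly \emph{three} triples. In its construction each participant gets value $1$ from every trade in $T$ (so each $T$-trade has welfare $3$), and with the degree-$3$ regularity the assignment $x_{ijk}=1/3$ for all $(i,j,k)\in T$ is feasible and attains LP value $3n$. Hence the LP optimum is always $3n$, and a CE exists iff the integer optimum equals $3n$, iff $G$ has a perfect matching. This sidesteps entirely the delicate padding construction you propose as your primary plan.

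Two small cautions on your version. First, your primary ``padding'' approach is more work than needed and you correctly flag its fragility; given that the bounded-degree variant already delivers the fractional PM for free, there is no reason to pursue it. Second, your specific suggestion of a variant where each element lies in exactly \emph{two} triples is risky: bounded-degree 3DM is classically known to remain NP-complete at degree~$3$, whereas the degree-$2$ case is not a standard hard restriction (and is in fact widely believed to be tractable). Switching your alternative from ``two'' to ``three'' makes your sketch essentially identical to the paper's argument.
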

\begin{proof}
    It is well known that three dimensional matching problem is NP-hard. Given a subset of hyperedges $T\subseteq V_1\times V_2\times V_3$ where $|V_1|=|V_2|=|V_3|$ and each vertex in $V_1\cup V_2\cup V_3$ has degree three, it is NP-complete to determine whether there is a perfect matching that each vertex is covered exactly once. Given an instance of such a three dimensional matching problem, construct a market where $|B|=|S|=|D|=n$ and $v_a(w_{ijk})=1$ if $a\in\{i,j,k\}$. If in polynomial time one can determine a CE exists, then an integer solution to PLPR exists. One can check if the value of PLPR equals to $3n$ and tell if a perfect matching exists. If in polynomial time one determines no CE exists, then PLPR has a fractional solution. This fractional solution cannot have value larger than $3n$ because of the PLPR constraints. So any solution to the PIP has value strictly less than $3n$. So no perfect matching exists. Thus determining if a CE exists is harder than 3DM matching.
\end{proof}

There is another version of second welfare theorem, which we append it here.
\begin{proposition}[Another Second Welfare Theorem]
    If $(\psi^\star,p^\star)$ is a competitive equilibrium, and $\phi^\star$ indicates another set of trades that is welfare efficient, then $(\phi^\star,p^\star)$ is a competitive equilibrium.
\end{proposition}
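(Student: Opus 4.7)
The plan is a short welfare-accounting argument that leverages the first welfare theorem together with budget balance. The three conditions defining a competitive equilibrium decompose nicely: (i) feasibility (no agent in more than one trade) holds for $\phi^\star$ by assumption, since a welfare-efficient set of trades must in particular be a feasible allocation; (ii) budget balance of $p^\star$ is a property of the pricing scheme itself and so is inherited from $(\psi^\star,p^\star)$. The only nontrivial condition is (iii): I must show that at prices $p^\star$, the trade $\phi^\star_a$ assigned to each agent $a$ (possibly the outside option) maximizes $a$'s utility.

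The key step is to sum utilities across agents in two ways. For each agent $a$, write $u^\star_a$ for the maximum utility $a$ can obtain at prices $p^\star$ (with outside option $0$), so $u^\star_a \geq 0$ and, since $(\psi^\star,p^\star)$ is a CE, $u^\star_a = v^a(\psi^\star_a) - p^{\star,a}(\psi^\star_a)$ (with the convention that $v^a$ and $p^{\star,a}$ vanish on the outside option). Summing over agents and using budget balance on each trade in $\psi^\star$ gives
\[
W(\psi^\star) \;=\; \sum_a v^a(\psi^\star_a) \;=\; \sum_a u^\star_a \;+\; \sum_a p^{\star,a}(\psi^\star_a) \;=\; \sum_a u^\star_a.
\]
For $\phi^\star$, the same budget-balance identity yields
\[
W(\phi^\star) \;=\; \sum_{a\text{ trading in }\phi^\star}\!\!\bigl(v^a(\phi^\star_a) - p^{\star,a}(\phi^\star_a)\bigr) \;\leq\; \sum_{a\text{ trading in }\phi^\star} u^\star_a \;\leq\; \sum_a u^\star_a,
\]
where the first inequality uses that $u^\star_a$ is the max utility available to $a$ at prices $p^\star$, and the second uses $u^\star_a\geq 0$ for every agent not trading in $\phi^\star$.

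Since $\phi^\star$ is welfare-efficient, $W(\phi^\star)=W(\psi^\star)=\sum_a u^\star_a$, forcing both inequalities to be equalities. Equality in the first gives $v^a(\phi^\star_a) - p^{\star,a}(\phi^\star_a) = u^\star_a$ for every agent trading in $\phi^\star$, so each such agent is playing a utility-maximizing action. Equality in the second gives $u^\star_a=0$ for every agent not trading in $\phi^\star$, so the outside option is also a best response for such agents. Combined with feasibility and budget balance, this verifies all three CE conditions for $(\phi^\star,p^\star)$.

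I do not expect a serious obstacle here. The only small subtlety to handle cleanly is the bookkeeping for agents who do not trade (in either $\psi^\star$ or $\phi^\star$), which is resolved by the convention $v^a(\emptyset)=p^{\star,a}(\emptyset)=0$ and by noting that the outside option is always available, so $u^\star_a\geq 0$ and equality of the welfare sums forces $u^\star_a=0$ for any agent idle in $\phi^\star$.
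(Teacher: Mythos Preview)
Your proposal is correct and follows essentially the same route as the paper's proof: sum utilities across agents, use budget balance to reduce the utility sum to the welfare, compare the welfare of $\phi^\star$ against $\psi^\star$, and conclude that the agent-by-agent inequalities $u_a(\psi^\star_a,p^\star)\ge u_a(\phi^\star_a,p^\star)$ must all be equalities. Your version is a bit more explicit in defining $u^\star_a$ as the maximum attainable utility and in separately verifying feasibility and budget balance, but the core argument is the same.
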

\begin{proof}
    \begin{align*}
        \sum_a u_a(\phi^\star_a,p^\star) &=\sum_{a} v_a(\phi^\star_a)-p^\star_a(\phi_a^\star)=\sum_{a} v_a(\phi^\star_a)\\
        &\geq \sum_{a} v_a(\psi^\star_a) = \sum_{a} v_a(\psi^\star_a)-p^\star_a(\psi_a) = \sum_a u_a(\psi^\star_a,p^\star)
    \end{align*}
    where the inequality is because $\phi^\star$ is welfare maximizing. But by $(\psi^\star,p^\star)$ being a CE, we know $u_a(\psi_a^\star,p^\star)\geq u_a(\phi_a^\star,p^\star),\forall a$. This with the inequalities above implies $u_a(\phi_a^\star,p^\star)= u_a(\psi_a^\star,p^\star),\forall a$.
\end{proof}
As even determining the existence of a competitive equilibrium is NP-hard, we move away from this definition. In \cref{sec:model}, we instead define an equilibrium that allows the platform to subsidize delivery, breaking the budget balance requirement. This guarantees equilibrium to exist but sacrifices the first and the second welfare theorem. 

We abstract away stores' incentives, while still keeping the stores' capacity constraints. We further require prices to be nondiscriminatory, instead of having a unique price for each trade in the definition for competitive equilibrium. With this altered notion of equilibrium, we ask how much welfare can be attained.

\end{document}